\newtheorem{theorem}{Theorem}[section]
\newtheorem{lemma}[theorem]{Lemma}
\newtheorem{proposition}[theorem]{Proposition}
\newcommand{\bx}{{\bm x}}
\newcommand{\bb}{{\bm b}}
\newcommand{\bX}{{\bm X}}
\newcommand{\by}{{\bm y}}
\newcommand{\be}{{\bm e}}
\newcommand{\ba}{{\bm a}}
\newcommand{\bbeta}{{\bm \beta}}
\newcommand{\btau}{{\bm \tau}}
\newcommand{\bSigma}{{\bm \Sigma}}
\newcommand{\btheta}{{\bm \theta}}
\newcommand{\bepsilon}{{\bm \epsilon}}
\newcommand{\beps}{{\bm \epsilon}}
\newcommand{\bzero}{{\bm 0}}
\newcommand{\bI}{{\bm I}}
\newcommand{\bA}{{\bm A}}
\newcommand{\br}{{\bm r}}
\newcommand{\bQ}{{\bm Q}}
\newcommand{\bP}{{\bm P}}
\newcommand{\cA}{\mathcal{A}}
\newcommand{\cD}{\mathcal{D}}
\newcommand{\cI}{\mathcal{I}}
\newcommand{\cE}{\mathcal{E}}
\newcommand{\cV}{\mathcal{V}}
\newcommand{\cS}{\mathcal{S}}
\newcommand{\cM}{\mathcal{M}}
\newcommand{\cB}{\mathcal{B}}
\newcommand{\E}{\mathbb{E}}
\newcommand{\Var}{\mathrm{Var}}
\newcommand{\Cov}{\mathrm{Cov}}
\newcommand{\Cor}{\mathrm{Cor}}
\newcommand{\CI}{\mathrm{CI}}
\DeclareMathOperator*{\argmin}{arg\,min}
\newcommand{\A}{\mathcal{A}}
\newcommand{\supp}{\mathrm{supp}}
\DeclareMathOperator{\sign}{sign}
\providecommand{\keywords}[1]{\textbf{\textit{Keywords---}} #1}
\newcommand{\RN}[1]{%
  \textup{\uppercase\expandafter{\romannumeral#1}}%
}
\newcommand{\trace}{\mathrm{tr}}
\newcommand\smallO{
  \mathchoice
    {{\scriptstyle\mathcal{O}}}
    {{\scriptstyle\mathcal{O}}}
    {{\scriptscriptstyle\mathcal{O}}}
    {\scalebox{.7}{$\scriptscriptstyle\mathcal{O}$}}
  }
\date{} 
\author{Sen Zhao\thanks{sen-zhao@sen-zhao.com.}} 
\author{Daniela Witten\thanks{dwitten@u.washington.edu.}}
\author{Ali Shojaie\thanks{ashojaie@u.washington.edu.}}
\affil{University of Washington}
\title{In Defense of the Indefensible: \\ A Very Na\"{i}ve Approach to High-Dimensional Inference}
\begin{document}

\maketitle
\def\spacingset#1{\renewcommand{\baselinestretch}%
{#1}\normalsize} \spacingset{1}
\abstract{
A great deal of interest has recently focused on 
conducting inference on the parameters in a high-dimensional linear model. 
 In this paper, we consider a simple and very na\"{i}ve two-step procedure for this task,  in which we (i) fit a lasso model in order to obtain a subset of the variables, and (ii) fit a least squares model on the lasso-selected set. Conventional statistical wisdom tells us that we cannot make use of the standard statistical inference tools for the resulting least squares model (such as confidence intervals and $p$-values), since we peeked at the data twice: once in running the lasso, and again in fitting the least squares model.  However, in this paper, we show that under a certain set of assumptions, with high probability, the set of variables selected by the lasso is identical to the one selected by the noiseless lasso and is hence deterministic. Consequently, the na\"{i}ve two-step approach can yield asymptotically valid inference. We utilize this finding to develop the \emph{na\"ive confidence interval}, which can be used to draw inference on the regression coefficients of the model selected by the lasso, as well as the \emph{na\"ive score test}, which can be used to test the hypotheses regarding the full-model regression coefficients.
}

\keywords{Confidence interval; Lasso; $p$-value; Post-selection inference; Significance testing.}

\spacingset{1.5}

\section{Introduction}\label{sec:intro}

In this paper, we consider the linear model
\begin{align}\label{eq:linmodel}
\by&=\bX\bbeta^\ast+\bepsilon,
\end{align}
where $\bX = [\bx_1,\dots,\bx_p]$ is an $n\times p$ deterministic design matrix, $\bepsilon$ is a vector of independent and identically distributed errors with $\E\left[\epsilon_i\right]=0$ and $\Var\left[\epsilon_i\right]=\sigma^2_\bepsilon$, 
and $\bbeta^\ast$ is a $p$-vector of coefficients.  
Without loss of generality, we assume that the columns of $\bX$ are centered and standardized, such that  $\sum_{i=1}^nX_{(i, k)} = 0$ and $\|\bx_k\|_2^2=n$ for $k=1,\ldots,p$. 

When the number of variables $p$ is much smaller than the sample size $n$, estimation and inference for the vector $\bbeta^\ast$ are straightforward. 
For instance, estimation can be performed using ordinary least squares, and inference can be conducted using classical approaches \citep[see, e.g.,][]{GelmanHillregression, WeisbergOLS}.

As the scope and scale of data collection have increased across virtually all fields, there is an increase in data sets that are \emph{high dimensional}, in the sense that the number of variables, $p$, is larger than the number of observations, $n$. In this setting, classical approaches for estimation and inference of $\bbeta^\ast$ cannot be directly applied. 
    In the past 20 years, a vast statistical literature has focused on estimating  $\bbeta^\ast$ in high dimensions. In particular,  penalized regression methods, such as the lasso \citep{tibs1996},
\begin{equation}\label{eq:lasso}
\hat \bbeta_\lambda = \argmin_{\bb \in \mathbb{R}^p}\left\{\frac{1}{2n}\|\by-\bX \bb\|_2^2 + \lambda \|\bb\|_1 \right\},
\end{equation}
can be used to estimate $\bbeta^\ast$. 
 However, the topic of inference in the high-dimensional setting remains relatively less explored, despite promising recent work in this area. 
  Roughly speaking, recent work on inference in the high-dimensional setting falls into two classes: (i) methods that examine the null hypothesis $H_{0,j}^\ast: \beta^\ast_j=0$; and (ii) methods that make inference based on a sub-model. We will review these two classes of methods in turn.
 
 First, we review methods that examine the null hypothesis $H_{0,j}^\ast: \beta^\ast_j=0$, i.e. that the variable $\bx_j$ is unassociated with the outcome $\by$, conditional on \emph{all other variables}. It might be tempting to estimate $\bbeta^\ast$ using the lasso \eqref{eq:lasso}, and then (for instance) to construct a confidence interval around $\hat\beta_{\lambda,j}$. Unfortunately, such an approach is problematic, because $\hat\bbeta_\lambda$ is a biased estimate of $\bbeta^\ast$. To remedy this problem, we can apply a one-step adjustment to $\hat\bbeta_\lambda$, such that under appropriate assumptions, the resulting \emph{debiased estimator} is asymptotically unbiased for $\bbeta^\ast$. This is similar to the idea of two-step estimation in nonparametric and semiparametric inference \citep[see, e.g., ][]{Hahn1998,Hiranoetal2003}.
 With the one-step adjustment, $p$-values and confidence intervals can be constructed around this debiased estimator. Such an approach is taken by the  low dimensional projection estimator \citep[LDPE;][]{ZhangZhang2014LDPE, vandeGeeretal2014LDPE},  the debiased lasso test with unknown population covariance \citep[SSLasso;][]{JavanmardMontanari2013GLMSSLasso, javanmard2013confidence}, the debiased lasso test with known population covariance \citep[SDL;][]{JavanmardMontanari2014SDLTheory}, 
 and the decorrelated score test \citep[dScore;][]{NingLiu2015decor}. See \citet{Dezeureetal2015hdi} for a review of such procedures.
 In what follows, we will refer to these and related approaches for testing $H_{0,j}^\ast: \beta^\ast_j=0$ as \emph{debiased lasso  tests}.

Next, we review recent work that makes statistical inference based on a sub-model.
 Recall that the challenge in high dimensions stems from the fact that when $p > n$, classical statistical methods cannot be applied; for instance, we cannot even perform ordinary least squares (OLS). This suggests a simple approach: given an index set $\cM\subseteq\{1,\dots,p\}$, let $\bX_{\cM}$ denote the columns of $\bX$ indexed by $\cM$. Then, we can consider performing inference \emph{based on the sub-model} composed only of the features in the index set $\cM$. That is, rather than considering the model \eqref{eq:linmodel}, we consider the sub-model
 \begin{align}\label{eq:submodel}
\by&=\bX_\cM \bbeta^{(\cM)}+\bepsilon^{(\cM)}.
\end{align}
In~\eqref{eq:submodel}, the notation $\bbeta^{(\cM)}$ and $\bepsilon^{(\cM)}$ emphasizes that the true regression coefficients and corresponding noise are functions of the set $\cM$. 

Now, provided that $| \cM | < n$, we can perform estimation and inference on the vector $\bbeta^{(\cM)}$ using classical statistical approaches. 
For instance, we can consider building confidence intervals $\CI_j^{(\cM)}$ such that for any $j\in\cM$,
\begin{align}
\label{eq:goalofposi}
\Pr\left[\beta_j^{(\cM)}\in \CI_j^{(\cM)}\right]\geq 1-\alpha.
\end{align}
At first blush, the problems associated with high dimensionality have been solved!
  
Of course, there are some problems with the aforementioned approach. The first problem is that the  coefficients 
 in the sub-model \eqref{eq:submodel} typically are not the same as the coefficients in the original model \eqref{eq:linmodel} \citep{berk2013valid}. 
 Roughly speaking, the problem is that the coefficients in the model \eqref{eq:linmodel} quantify the linear association between a given variable and the response, \emph{conditional on the other $p-1$ variables}, whereas the coefficients in the sub-model \eqref{eq:submodel} quantify the linear association between a variable and the response, \emph{conditional on the other $|\cM|-1$ variables in the sub-model}.  
The true regression coefficients in the sub-model are of the form
\begin{align}
\bbeta^{(\cM)}\equiv\left(\bX_{\cM}^\top\bX_{\cM}\right)^{-1}\bX_{\cM}^\top\bX\bbeta^\ast.
\label{eq:betaM}
\end{align}
 Thus, $\bbeta^{(\cM)} \neq \bbeta^\ast_{\cM}$ unless $\bX^\top_{\cM}\bX_{\cM^c}\bbeta^\ast_{\cM^c}=\bzero$. 
  To see this more concretely, consider the following example with $p=4$ deterministic variables. Let 
\[
\frac{1}{n}\bX^\top\bX=  \begin{bmatrix}
    1 & 0 & 0.6 & 0 \\
    0 & 1 & 0.6 & 0 \\
    0.6 & 0.6 & 1 & 0 \\
    0 & 0 & 0 & 1
  \end{bmatrix}, 
\]
and set $\bbeta^\ast=(1, 1, 0, 0)^\top$. The above design matrix does not satisfy the strong irrepresentable condition needed for selection consistency of lasso \citep{ZhaoYu2006}. Thus, if we take $\cM$ to equal the support of the lasso estimate, i.e.,
\begin{equation}
\cM= \hat \A_\lambda \equiv \supp\left(\hat\bbeta_\lambda\right) \equiv \left\{ j: \hat\beta_{\lambda,j} \neq 0  \right\},
\label{Alambda}
\end{equation}
then it is easy to verify that for some $\lambda$, $\cM=\{2,3\}$, in which case
\begin{align*}
\bbeta^{(\cM)}	&=\left[\frac{1}{n}\bX^\top\bX\right]_{(\{2, 3\}, \{2, 3\})}^{-1}\left[\frac{1}{n}\bX^\top\bX\right]_{(\{2, 3\}, \{1, 2, 3, 4\})}\bbeta^\ast \\
			&= \begin{bmatrix}
    1 & 0.6 \\
    0.6 & 1
  \end{bmatrix}^{-1}
  \begin{bmatrix}
    0 & 1 & 0.6 & 0 \\
    0.6 & 0.6 & 1 & 0
  \end{bmatrix}  
  \begin{bmatrix}
    1 & 1 & 0 & 0
  \end{bmatrix}^\top \\
  &=  \begin{bmatrix}
    0.4375 \\
    0.9375
  \end{bmatrix} \neq\begin{bmatrix}
    1 \\
    0
  \end{bmatrix}=\bbeta^\ast_{\cM}.
\end{align*}

The second problem that arises in restricting our attention to the sub-model \eqref{eq:submodel} is that in practice, the index set $\cM$ is not pre-specified. 
Instead, it is typically chosen based on the data. 
The problem is that if we construct the index set $\cM$ based on the data, and then apply classical inference approaches on the vector $\bbeta^{(\cM)}$, the resulting $p$-values and confidence intervals will not be valid \citep[see, e.g.,][]{Potscher1991PoSI, Kabaila1998PoSI, LeebPostcher2003PoSI, LeebPotscher2005PoSI, LeebPotscher2006PoSI, LeebPostcher2006perfPoSI, LeebPotscher2008PoSI, Kabail2009PoSI, berk2013valid}.  
This is because we peeked at the data twice:  once to determine which variables to include in $\cM$, and then again to test hypotheses associated with those variables.  
 Consequently, an extensive  recent body of literature has focused on the task of performing inference on  $\bbeta^{(\cM)}$ in \eqref{eq:submodel} given that $\cM$ was chosen based on the data. \citet{Cox1975} proposed the idea of sample-splitting to break up the dependence of variable selection and hypothesis testing. \citet{WassermanRoeder2009posi} studied sample-splitting in application to the lasso, marginal regression and forward step-wise regression. \citet{Meinshausenetal2009posi} extended the single-splitting proposal of \citet{WassermanRoeder2009posi} to multi-splitting, which improved statistical power and reduced the number of falsely selected variables. \citet{berk2013valid} instead considered simultaneous inference, which is universally valid under all possible model selection procedures without sample-splitting.  More recently, \citet{lee2015exact} and \citet{Tibshiranietal2016PoSI} studied the geometry of the lasso and sequential regression, respectively, and proposed exact post-selection inference methods conditional on the random set of selected variables. See \citet{TaylorTibshirani2015posi} for a review of post-selection inference procedures.

The procedures outlined above successfully address the second problem  arising from restricting the attention to a sub-model, namely the randomness of the set $\cM$. However, they do not address the first problem regarding the difference in the target of inference \eqref{eq:betaM} unless $\bX^\top_{\cM}\bX_{\cM^c}\bbeta^\ast_{\cM^c}=\bzero$. In the case of lasso, valid inference for $\bbeta^\ast_{\A^\ast}$ can be obtained if $\cM = \hat\A_\lambda = \A^\ast$. However, among other conditions, this requires the strong irrepresentable condition, which is known to be a restrictive condition that is not likely to hold in high dimensions \citep{ZhaoYu2006}.

In a recent \emph{Statistical Science} paper, \citet{LeebetalPoSI2015}  performed a simulation study, in which they obtained a set $\cM$ using variable selection, and then calculated \emph{``na\"{i}ve'' confidence intervals} for $\bbeta^{(\cM)}$ using ordinary least squares, \emph{without accounting for the fact that the set $\cM$ was chosen based on the data}. Of course, conventional wisdom dictates that the resulting confidence intervals will be much too narrow. In fact, this is what  \citet{LeebetalPoSI2015} found, when they used  best subset selection  to construct the set $\cM$.
  However, surprisingly, when the lasso was used to construct the set $\cM$, the confidence intervals induced by \eqref{eq:goalofposi} had approximately correct coverage.
  This is in stark contrast to the existing literature!

In this paper, we present a theoretical justification for the empirical finding in \citet{LeebetalPoSI2015}. The main idea of our paper is to establish selection consistency of the lasso estimate \emph{with respect to its noiseless counterpart}. This result allows us to perform valid inference for the support of the noiseless lasso without needing post-selection or sample splitting strategies. Furthermore, we use our theoretical findings to also develop the \emph{na\"ive score test}, a simple procedure for testing  the null hypothesis $H^\ast_{0,j}:\beta^\ast_j=0$ for all $j=1,\dots,p$.

The rest of this paper is organized as follows. In Sections~\ref{sec:nllasso} and \ref{sec:waldsims}, we focus on post-selection inference: we seek  to perform inference on $\bbeta^{(\cM)}$ in \eqref{eq:submodel}, where $\cM$ is selected based on the lasso, i.e., $\cM = \hat\A_\lambda$ \eqref{Alambda}.  In Section~\ref{sec:nllasso}, we point out a previously overlooked scenario in selection consistency theory: although $\hat\A_\lambda$ \eqref{Alambda} is random, with high probability it is equal to the support of the noiseless lasso under relatively mild regularity conditions. This result implies that we can use classical methods for inference on $\bbeta^{(\cM)}$, when $\cM = \hat\A_\lambda$.  In Section~\ref{sec:waldsims}, we provide empirical evidence in support of these theoretical findings.  In Sections~\ref{sec:waldscore} and Section~\ref{sec:numeric}, we instead focus on the task of performing inference on $\bbeta^\ast$ in \eqref{eq:linmodel}. We propose the na\"ive score test in Section~\ref{sec:waldscore}, and study its empirical performance in Section~\ref{sec:numeric}. We end with a discussion of future research directions in Section~\ref{sec:disc}. Technical proofs are relegated to the online Supplementary Materials.

We now introduce some notation that will be used throughout the paper. We use ``$\equiv$'' to denote equalities by definition, and ``$\asymp$'' for the asymptotic order. We use $1\{\cdot\}$ for the indicator function; ``$\vee$'' and ``$\wedge$'' denote the maximum and minimum of two real numbers, respectively. For any real number $a\in\mathbb{R}$, $a_+\equiv a\vee 0$. Given a set $\cS$, $|\cS|$ denotes its cardinality and $-\cS \equiv\cS^c$ denotes its complement. We use bold upper case fonts to denote matrices, bold lower case fonts for vectors, and normal fonts for scalars. We use symbols with a superscript ``$\ast$'', e.g., $\bbeta^\ast$ and $\A^\ast\equiv\supp(\bbeta^\ast)$, to denote the true population parameters associated with the full linear model \eqref{eq:linmodel};  
we use symbols superscripted by a set in the parentheses, e.g., $\bbeta^{(\cM)}$, to denote quantities related to the sub-model \eqref{eq:submodel}. Symbols subscripted by ``$\lambda$'' and with a hat, e.g., $\hat\bbeta_\lambda$ and $\hat\A_\lambda$, denote parameter estimates from the lasso estimator \eqref{eq:lasso} with tuning parameter $\lambda>0$; symbols subscripted by ``$\lambda$'' and without a hat, e.g., $\bbeta_\lambda$ and $\A_\lambda\equiv\supp(\bbeta_\lambda)$, are associated with the noiseless lasso estimator \citep{vandeGeerBuhlmann2009, vandeGeer2017}, 
\begin{align}\label{eq:nllasso}
\bbeta_\lambda\equiv\argmin_{\bb\in\mathbb{R}^p}\left\{\frac{1}{2n}\E\left[\left\|\by-\bX\bb\right\|_2^2\right]+\lambda\left\|\bb\right\|_1\right\}.
\end{align}
For any vector $\bb$, matrix $\bSigma$, and index sets $\cS_1$ and $\cS_2$, we use $\bb_{\cS_1}$ to denote the sub-vector of $\bb$ comprised of elements of $\cS_1$, and  $\bSigma_{(\cS_1, \cS_2)}$ to denote the sub-matrix of $\bSigma$ with rows in $\cS_1$  and columns in $\cS_2$. 

\section{Theoretical Justification for Na\"{i}ve Confidence Intervals}\label{sec:nllasso}

Recall that $\bbeta^{(\hat\A_\lambda)}$ was defined in \eqref{eq:betaM}. The simulation results of \citet{LeebetalPoSI2015} suggest that if we perform ordinary least squares using the variables contained in the support set of the lasso, 
$\hat\A_\lambda$,  then the classical confidence intervals associated with the least squares estimator,
\begin{align}\label{eq:posiols}
\tilde\bbeta^{(\hat\A_\lambda)}\equiv\left(\bX^\top_{\hat\A_\lambda}\bX_{\hat\A_\lambda}\right)^{-1}\bX^\top_{\hat\A_\lambda}\by,
\end{align}
have approximately  correct coverage, where correct coverage means that for all $j \in \hat\A_\lambda$, 
\begin{equation}
\Pr\left(\beta_{j}^{(\hat\A_\lambda)} \in \CI_j^{(\hat\A_\lambda)}\right) \geq 1-\alpha. 
\label{eq:ci}
\end{equation} 
We reiterate that in \eqref{eq:ci}, $\CI_j^{(\hat\A_\lambda)}$ is the confidence interval output by standard least squares software applied to the data $(\by, \bX_{\hat\A_\lambda})$. This goes against our statistical intuition: it seems that by fitting a lasso model and then performing least squares on the selected set, we are peeking at the data twice, and thus we would expect the confidence interval $\CI_j^{(\hat\A_\lambda)}$ to be much too narrow.

In this section, we present a theoretical result that suggests that, in fact, this ``double-peeking" might not be so bad. 
Our key insight is as follows: under certain assumptions, \emph{the set of variables selected by the lasso is deterministic and non-data-dependent with high probability}. Thus, fitting a least squares model on the variables selected by the lasso does not really constitute peeking at the data twice: effectively, with high probability, we are only peeking at it once. That means that the na\"{i}ve confidence intervals obtained from ordinary least squares will have approximately correct coverage, in the sense of \eqref{eq:ci}. 

We first  introduce the required conditions for our theoretical result. 
\begin{itemize}
\item[({\bf M1})] The design matrix $\bX$ is deterministic, with columns in \textit{general position} \citep{Rossetetal2004unique, Zhang2010MCP, Dossal2012unique, Tibshirani2013}. Columns of $\bX$ are standardized, i.e., for any $j=1,\dots,p$, $\bx_j^\top\bx_j = n$. The error $\bepsilon$ in \eqref{eq:linmodel} has independent entries and sub-Gaussian tails. The sample size $n$, dimension $p$ and tuning parameter $\lambda$ satisfy 
$$
\sqrt{\frac{\log(p)}{n}}\frac{1}{\lambda}\rightarrow 0.
$$
\item[({\bf E})]  Let $\hat\bSigma=\bX^\top\bX/n$. For any index set $\cI$ with $|\cI|=\mathcal{O}(q^\ast)$, let $\cB$ be any index set such that $\cB\supseteq\cI$, $|\cB\backslash\cI|\leq |\cI|$. Then for all $\ba\in\mathbb{R}^p$ that satisfy $\|\ba_{\cI^c}\|_1\leq \|\ba_{\cI}\|_1$, and $\|\ba_{\cB^c}\|_\infty\leq\min_{j\in(\cB\backslash\cI)}|a_j|$,
\begin{align*}
\phi^{\ast2}\equiv\liminf_{n\to\infty}\frac{\ba^\top\hat\bSigma\ba}{\|\ba_{\cB}\|_2^2}>0,
\end{align*}

In addition, the restricted sparse eigenvalue,
\begin{align*}
\phi^2(q)\equiv\sup_{\|\bb_{\A^{\ast c}}\|_0\leq q}\frac{\bb^\top\hat\bSigma\bb}{\|\bb\|_2^2},
\end{align*}
satisfies $\phi^2(q^\ast)=\mathcal{O}(\sqrt{\log(p)/q^\ast})$.  

\item[({\bf M2})] Recall that $\A^\ast\equiv\supp(\bbeta^\ast)$.  Let $\A_\lambda\equiv\supp(\bbeta_\lambda)$, $b_{\lambda\min}\equiv\min_{j\in\A_\lambda}|\beta_{\lambda,j}|$ with $\bbeta_\lambda$ defined in \eqref{eq:nllasso}. The signal strength in $\A_\lambda$ satisfies
\begin{align}
\lim_{n\to\infty}\frac{b_{\lambda\min}}{\lambda} =\xi > 0,
\end{align}
and the signal strength outside of $\A_\lambda$ satisfies
\begin{align}
\left\|\bX_{\A_\lambda^c}\bbeta^\ast_{\A_\lambda^c}\right\|_2=\mathcal{O}\left(\sqrt{\log(p)}\right).
\end{align}

\item[({\bf T})] The strong irrepresentable condition with respect to $\A_\lambda$ and $\bbeta_\lambda$ holds, i.e., there exists $\delta>0$ such that 
\begin{align}
\lim_{n\to\infty}\left\|\bX_{\A_\lambda^c}^\top\bX_{\A_\lambda}\left(\bX^\top_{\A_\lambda}\bX_{\A_\lambda}\right)^{-1}\sign\left(\bbeta_{\lambda,\A_\lambda}\right)\right\|_\infty < 1-\delta.
\end{align}
\end{itemize}

Condition ({\bf M1}) is mild and standard in literature. Note that in ({\bf M1}), we require the lasso tuning parameter $\lambda$ to approach zero at a slightly slower rate than $\sqrt{\log(p)/n}$ to control the randomness of the error $\bepsilon$. Most standard literature requires $\lambda>C\sqrt{\log(p)/n}$ for some constant $C>0$, which our condition also satisfies. 

In  ({\bf M2}), the requirement that $\lim_{n\to\infty}b_{\lambda,\min}/\lambda>0$ indicates that the noiseless lasso regression coefficients are either 0, or asymptotically no smaller than $\lambda$, which is larger than $\sqrt{\log(p)/n}$. Note that this assumption concerns variables chosen by the noiseless lasso and not the true model. 
In the second part of ({\bf M2}), $\big\|\bX_{\A_\lambda^c}\bbeta^\ast_{\A_\lambda^c}\big\|_2=\mathcal{O}(\sqrt{\log(p)})$ implies that the total signal strength of weak signal variables that are not selected by the noiseless lasso cannot be too large to be distinguishable from the sub-Gaussian noise. As shown in the proof of Proposition~\ref{THM:CONSISTENT}, the condition $b_{\lambda,\min}=\mathcal{O}(\lambda)$ is important in showing that $\lim_{n\to\infty}\Pr\left[\A_\lambda\subseteq\hat\A_\lambda\right]=1$, whereas $\big\|\bX_{\A_\lambda^c}\bbeta^\ast_{\A_\lambda^c}\big\|_2=\mathcal{O}\left(\sqrt{\log(p)}\right)$ is instrumental in showing $\lim_{n\to\infty}\Pr\left[\A_\lambda\supseteq\hat\A_\lambda\right]=1$.

Condition ({\bf E}) manifests the behavior of the eigenvalues of $\hat\bSigma$. Specifically, the first part of ({\bf E}) is the restricted eigenvalue condition \citep{bickeletal2009, vandeGeerBuhlmann2009}, except that instead of requiring $|\cI|=q^\ast$ as in \citet{bickeletal2009} and \citet{vandeGeerBuhlmann2009}, we here require $|\cI|=\mathcal{O}(q^\ast)$. The second part of ({\bf E}) is the sparse Reisz, or sparse eigenvalue condition \citep{ZhangHuang2008, BelloniChernozhukov2013qlambda}. Both the restricted eigenvalue and sparse Reisz conditions are standard and mild conditions in the literature. ({\bf E}) implies that $\log(p)/q^\ast\to \psi>1$.

Condition ({\bf T}) is the strong irrepresentable condition with respect to $\A_\lambda$ and $\bbeta_\lambda$. Condition ({\bf T}) is likely weaker than the classical irrepresentable condition with respect to $\bbeta^\ast$ and $\A^\ast$, proposed in \citet{ZhaoYu2006}, because the classical irrepresentable condition implies that $\A_\lambda=\A^\ast$ with large $n$, in which case ({\bf T}) holds as it becomes identical to the classical irrepresentable condition. Lemma~\ref{LEM:T} shows a sufficient condition for ({\bf T}), which is proven in Section \ref{sec:pfT} in the Supplementary Materials.

\begin{lemma}\label{LEM:T}
Suppose conditions  ({\bf M1}), ({\bf M2}) and ({\bf E}) hold. Condition ({\bf T}) holds if
\begin{align}
\lim_{n\to\infty}\frac{2\sqrt{2q^\ast}}{n\phi^{\ast 2}}\left\|\bX^\top_{\A_\lambda^c}\bX_{\A_\lambda}\right\|_\infty < 1-\delta. \label{eq:arend}
\end{align}
\end{lemma}
Condition \eqref{eq:arend} allows $\|\bX^\top_{\A_\lambda^c}\bX_{\A_\lambda}\|_\infty$ to diverge to infinity at a slower rate than $n/\sqrt{q^\ast}$. A more stringent version of condition \eqref{eq:arend} is presented in \citet{Voormanetal2015}.

We use ({\bf T}) to prove $\lim_{n\to\infty}\|\btau_{\lambda, \A_\lambda^c}\|_\infty \leq 1 - \delta$, where $\lambda n\btau_{\lambda}=\bX^\top\left(\bX\bbeta^\ast-\bX\bbeta_{\lambda}\right)$ is the stationary condition of \eqref{eq:nllasso}. Note that $\lim_{n\to\infty}\|\btau_{\lambda, \A_\lambda^c}\|_\infty \leq 1 - \delta$ is the required condition for Proposition~\ref{THM:CONSISTENT}, and ({\bf T}) might be sufficient but unncessary. However, ({\bf T}) is more standard and understandable than the condition $\lim_{n\to\infty}\|\btau_{\lambda, \A_\lambda^c}\|_\infty \leq 1 - \delta$.

We now present Proposition~\ref{THM:CONSISTENT}, which is proven in Section~\ref{sec:pfconsistent} of the online Supplementary Materials. 
  \begin{proposition}\label{THM:CONSISTENT}
Suppose conditions ({\bf M1}), ({\bf M2}), ({\bf E}) and ({\bf T}) hold. 
Then, we have $\lim_{n\to\infty}\Pr[\hat \A_\lambda= \A_\lambda]= 1$, where $\A_\lambda\equiv\supp(\bbeta_\lambda)$, with $\bbeta_\lambda$ defined in  \eqref{eq:nllasso}.
\end{proposition}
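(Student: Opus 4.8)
The natural route is a \emph{primal--dual witness} (PDW) argument \citep{Wainwright2009PDW,BuhlmannvandeGeer2011}, with the deterministic set $\A_\lambda\equiv\supp(\bbeta_\lambda)$ playing the role of the ``true'' support. Conditions ({\bf M1}) and ({\bf E}) guarantee that \eqref{eq:lasso} and \eqref{eq:nllasso} each have a unique solution and that $\bX_{\A_\lambda}$ has full column rank, so that $\hat\bSigma_{(\A_\lambda,\A_\lambda)}$ is invertible with smallest eigenvalue bounded below by $\phi^{\ast2}$ along the relevant cone. Writing the KKT conditions of \eqref{eq:lasso} as $\tfrac{1}{n}\bX^\top(\by-\bX\hat\bbeta_\lambda)=\lambda\hat\bz$ with $\hat\bz\in\partial\|\hat\bbeta_\lambda\|_1$, and recalling from \eqref{eq:stationarynl} that $\hat\bSigma(\bbeta^\ast-\bbeta_\lambda)=\lambda\btau_\lambda$ with $\tau_{\lambda,j}=\sign(\beta_{\lambda,j})$ for $j\in\A_\lambda$, one checks that the oracle-restricted vector
\begin{equation*}
\check\bbeta_{\A_\lambda}\equiv\bbeta_{\lambda,\A_\lambda}+\big(\hat\bSigma_{(\A_\lambda,\A_\lambda)}\big)^{-1}\tfrac{1}{n}\bX_{\A_\lambda}^\top\bepsilon,\qquad\check\bbeta_{\A_\lambda^c}\equiv\bzero,
\end{equation*}
is, \emph{on the event that $\sign(\check\bbeta_{\A_\lambda})=\sign(\bbeta_{\lambda,\A_\lambda})$}, the unique minimizer of the problem $\min_{\bb}\{\tfrac{1}{2n}\|\by-\bX_{\A_\lambda}\bb_{\A_\lambda}\|_2^2+\lambda\|\bb_{\A_\lambda}\|_1:\ \bb_{\A_\lambda^c}=\bzero\}$, because then its induced active subgradient equals $\btau_{\lambda,\A_\lambda}$. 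The plan is to show that, with probability tending to one, (i) $\sign(\check\bbeta_{\A_\lambda})=\sign(\bbeta_{\lambda,\A_\lambda})$ and (ii) the subgradient $\hat\bz_{\A_\lambda^c}\equiv\tfrac{1}{n\lambda}\bX_{\A_\lambda^c}^\top(\by-\bX_{\A_\lambda}\check\bbeta_{\A_\lambda})$ induced on $\A_\lambda^c$ satisfies $\|\hat\bz_{\A_\lambda^c}\|_\infty<1$; by ({\bf M1}) this forces $\hat\bbeta_\lambda=\check\bbeta$, hence $\hat\A_\lambda=\A_\lambda$, which is the claim.

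For step (i) it suffices that $\|\check\bbeta_{\A_\lambda}-\bbeta_{\lambda,\A_\lambda}\|_\infty<\min_{j\in\A_\lambda}|\beta_{\lambda,j}|$. The left-hand side equals $\|(\hat\bSigma_{(\A_\lambda,\A_\lambda)})^{-1}\tfrac{1}{n}\bX_{\A_\lambda}^\top\bepsilon\|_\infty$, whose $j$-th coordinate is a linear functional of $\bepsilon$ with variance proxy at most $\sigma^2_\bepsilon[(\hat\bSigma_{(\A_\lambda,\A_\lambda)})^{-1}]_{jj}/n\le\sigma^2_\bepsilon/(n\phi^{\ast2})$; by the sub-Gaussian tails ({\bf M2}) and a union bound over $|\A_\lambda|\le p$ coordinates it is $\mathcal{O}_p(\sqrt{\log(p)/n})$. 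For the right-hand side I would first record, as a purely deterministic consequence of ({\bf E}) and ({\bf M4}), that $\cS^\ast\subseteq\A_\lambda$ and the bias identity $\bbeta_{\lambda,\A_\lambda}=\bbeta^{(\A_\lambda)}-\lambda(\hat\bSigma_{(\A_\lambda,\A_\lambda)})^{-1}\btau_{\lambda,\A_\lambda}$, with $\bbeta^{(\A_\lambda)}$ as in \eqref{eq:betaM}. For $j\in\cS^\ast$, the definition $|\beta_j^\ast|>3\lambda\sqrt{q^\ast}/\phi^{\ast2}$, together with ({\bf M4}) and the crude bound $\|\bbeta_\lambda-\bbeta^{(\A_\lambda)}\|_\infty\le\lambda\sqrt{q^\ast}/\phi^{\ast2}$, leaves $|\beta_{\lambda,j}|\gtrsim\lambda\sqrt{q^\ast}/\phi^{\ast2}\gg\sqrt{\log(p)/n}$ by ({\bf M3}). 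For $j\in\A_\lambda\setminus\cS^\ast$, the second part of ({\bf T}) says precisely that $\min_{j\in\A_\lambda\setminus\cS^\ast}\big|[(\hat\bSigma_{(\A_\lambda,\A_\lambda)})^{-1}\btau_{\lambda,\A_\lambda}]_j\big|\gg\sqrt{\log(p)/n}/\lambda$, which via the same bias identity and ({\bf M4}) yields $\min_{j\in\A_\lambda\setminus\cS^\ast}|\beta_{\lambda,j}|\gg\sqrt{\log(p)/n}$. Combining these gives (i).

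For step (ii), substituting $\by=\bX\bbeta^\ast+\bepsilon$ and $\bX_{\A_\lambda}\bbeta_{\lambda,\A_\lambda}=\bX\bbeta_\lambda$ and using \eqref{eq:stationarynl} in the form $\tfrac{1}{n\lambda}\bX_{\A_\lambda^c}^\top\bX(\bbeta^\ast-\bbeta_\lambda)=\btau_{\lambda,\A_\lambda^c}$, the signal part matches $\btau_{\lambda,\A_\lambda^c}$ exactly and the noise part residualizes, giving
\begin{equation*}
\hat\bz_{\A_\lambda^c}=\btau_{\lambda,\A_\lambda^c}+\tfrac{1}{n\lambda}\bX_{\A_\lambda^c}^\top\big(\bI-\bP\big)\bepsilon,\qquad\bP\equiv\bX_{\A_\lambda}\big(\bX_{\A_\lambda}^\top\bX_{\A_\lambda}\big)^{-1}\bX_{\A_\lambda}^\top.
\end{equation*}
The first part of ({\bf T}) gives $\limsup_{n\to\infty}\|\btau_{\lambda,\A_\lambda^c}\|_\infty\le1-\delta$; each coordinate of the noise term is sub-Gaussian with variance proxy at most $\sigma^2_\bepsilon\|(\bI-\bP)\bx_k\|_2^2/(n^2\lambda^2)\le\sigma^2_\bepsilon/(n\lambda^2)$, so by ({\bf M2}) and a union bound over at most $p$ coordinates it is $\mathcal{O}_p(\lambda^{-1}\sqrt{\log(p)/n})=o_p(\delta)$ using $\sqrt{\log(p)/n}/(\lambda\delta)\to0$ from ({\bf T}); hence $\|\hat\bz_{\A_\lambda^c}\|_\infty\le1-\delta+o_p(\delta)<1$ with probability tending to one. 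On the intersection of the two high-probability events from (i) and (ii), $\check\bbeta$ obeys the KKT conditions of \eqref{eq:lasso}, is nonzero exactly on $\A_\lambda$, and equals the (unique, by ({\bf M1})) lasso solution, so $\Pr[\hat\A_\lambda=\A_\lambda]\to1$.

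I expect the main obstacle to be step (i) on the weak-signal coordinates $\A_\lambda\setminus\cS^\ast$. This forces two ingredients beyond the PDW machinery: (a) a preliminary, purely deterministic analysis of the noiseless lasso establishing $\cS^\ast\subseteq\A_\lambda$ and a usable $\ell_\infty$ bound on $\bbeta_\lambda-\bbeta^{(\A_\lambda)}$, which is where ({\bf E}) and ({\bf M4}) are consumed; and (b) converting the abstract quantity $\min_{j\in\A_\lambda\setminus\cS^\ast}|[(\hat\bSigma_{(\A_\lambda,\A_\lambda)})^{-1}\btau_{\lambda,\A_\lambda}]_j|$ from ({\bf T}) into a lower bound on $\min_{j\in\A_\lambda\setminus\cS^\ast}|\beta_{\lambda,j}|$ without incurring a spurious $\sqrt{q^\ast}$ loss. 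A secondary bookkeeping point is verifying that the remainder term in (ii) is genuinely $o(\delta)$ under the \emph{joint} rate constraints of ({\bf M3}), ({\bf E}) and ({\bf T}).
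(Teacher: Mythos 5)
Your proposal is correct in outline, but it takes a genuinely different route from the paper's. The paper does not use a primal--dual witness: it compares the two dual vectors directly, subtracting the stationarity conditions of \eqref{eq:lasso} and \eqref{eq:nllasso} to obtain $\hat\btau_\lambda-\btau_\lambda=\frac{1}{n\lambda}\bX^\top\bX(\bbeta_\lambda-\hat\bbeta_\lambda)+\frac{1}{n\lambda}\bX^\top\bepsilon$, and controls the first term through an $\ell_2$ comparison lemma $\|\hat\bbeta_\lambda-\bbeta_\lambda\|_2=\mathcal{O}_p\bigl(\sqrt{\sqrt{q^\ast\log(p)}/n}\bigr)$, which in turn leans on external prediction-error results for the lasso; this yields $\Pr[\hat\A_\lambda\subseteq\A_\lambda]\to1$ via ({\bf T}), and the reverse inclusion then follows from the same $\ell_2$ bound together with the beta-min-type bound $\sqrt{\log(p)/n}/\min_{j\in\A_\lambda}|\beta_{\lambda,j}|\to0$. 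Your witness construction dispenses with the comparison lemma entirely: the two noise terms $(\hat\bSigma_{(\A_\lambda,\A_\lambda)})^{-1}\bX_{\A_\lambda}^\top\bepsilon/n$ and $\bX_{\A_\lambda^c}^\top(\bI-\bP)\bepsilon/(n\lambda)$ concentrate directly under ({\bf M2}), which is cleaner and isolates exactly where each half of ({\bf T}) enters. The two arguments share their deterministic core --- $\cS^\ast\subseteq\A_\lambda$, the identity $\bbeta_{\lambda,\A_\lambda}=\bbeta^{(\A_\lambda)}-\lambda(\hat\bSigma_{(\A_\lambda,\A_\lambda)})^{-1}\btau_{\lambda,\A_\lambda}$, and the resulting lower bound on $\min_{j\in\A_\lambda}|\beta_{\lambda,j}|$ --- and your step (i) reproduces the paper's lemma on this point essentially verbatim, including the separate treatment of $\cS^\ast$ and $\A_\lambda\setminus\cS^\ast$. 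One item you gesture at but should make explicit: before ({\bf E}) can be invoked for $\hat\bSigma_{(\A_\lambda,\A_\lambda)}$ you need the cardinality bound $|\A_\lambda|=\mathcal{O}(q^\ast\|\hat\bSigma\|_2^2)$; this follows in one line from the noiseless stationarity condition via $\lambda\sqrt{|\A_\lambda|}\leq\|\lambda\btau_\lambda\|_2\leq\|\hat\bSigma\|_2\|\bbeta^\ast-\bbeta_\lambda\|_2=\mathcal{O}(\lambda\sqrt{q^\ast}\|\hat\bSigma\|_2)$, exactly as in the paper, and once it is recorded your argument goes through.
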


The proof of Proposition~\ref{THM:CONSISTENT} is in the same flavor as  \citet{MeinshausenBuhlmann2006, ZhaoYu2006, Tropp2006MI} and \citet{Wainwright2009PDW}, and is based on absorbing the contribution of the weak signals $\bX_{\A_\lambda^c}\bbeta_{\A_\lambda^c}^\ast$ into the noise vector. However, variable selection consistency asserts that $\Pr[\hat\A_\lambda=\A^\ast]\to1$, whereas Proposition~\ref{THM:CONSISTENT} states that $\Pr[\hat\A_\lambda=\A_\lambda]\to1$. Consequently, variable selection consistency requires the irrepresentable and signal strength conditions with respect to $\bbeta^\ast$, whereas Proposition~\ref{THM:CONSISTENT} requires the irrepresentable and signal strength conditions with respect to $\bbeta_\lambda$. These conditions are likely much milder than those for the variable selection consistency of the lasso, as confirmed in our simulations in Section~\ref{sec:waldsims}. In these simulations, we estimate $\Pr[\hat \A_\lambda= \A_\lambda]$ in 36 settings with two choices of $\lambda$. As shown in Table~\ref{tab:equalalambda}, in most settings, $\hat \A_\lambda= \A_\lambda$ with high probability, especially in cases with large $n$. We also estimated $\Pr[\hat \A_\lambda= \A^\ast]$ in the same 36 settings, with the same two choices of $\lambda$, as well as an additional choice of $\lambda$ that minimizes the cross-validated MSE; we found that $\Pr[\hat \A_\lambda= \A^\ast]=0$ in all $36 \times 3$ settings.

Based on Proposition~\ref{THM:CONSISTENT}, we could build asymptotically valid confidence intervals, as shown in Theorem \ref{THM:WALD2}.

Proposition~\ref{THM:CONSISTENT} 
suggests that asymptotically, we ``pay no price" for peeking at our data by performing the lasso: we should be able to perform downstream analyses on the subset of variables in $\hat\A_\lambda$ \emph{as though we had obtained that subset without looking at the data}. 
This intuition will be formalized in Theorem \ref{THM:WALD2}.

Theorem \ref{THM:WALD2}, which is proven in Section~\ref{sec:pfwald1} in the online Supplementary Materials, shows that $\tilde\bbeta^{(\hat\A_\lambda)}$ in \eqref{eq:posiols} is asymptotically normal, with  mean and variance suggested by classical least squares theory: that is, \emph{the fact that $\hat\A_\lambda$ was selected by peeking at the data has no effect on the asymptotic distribution of $\tilde\bbeta^{(\hat\A_\lambda)}$}. This result requires that $\lambda$ be chosen in a non-data-adaptive way. Otherwise, $\A_\lambda$ will be affected by the random error $\bepsilon$ through $\lambda$, which complicates the distribution of $\tilde\bbeta^{(\hat\A_\lambda)}$. 
 Theorem \ref{THM:WALD2} also requires Condition ({\bf W}), which is used to apply the Lindeberg-Feller Central Limit Theorem. This condition can be relaxed if the noise $\bepsilon$ is normally distributed.
\begin{itemize}
\item[({\bf W})] $\lambda$, $\bbeta^\ast$ and $\bX$ are such that $\lim_{n\to\infty}\| {\bf r}^w \|_\infty/ \|{\bf r}^w\|_2 \rightarrow 0$, where 
\[
{\bf r}^w \equiv  \be^j(\bX^\top_{\A_\lambda}\bX_{\A_\lambda})^{-1}\bX^\top_{\A_\lambda}, 
\]
and $\be^j$ is the row vector of length $|\A_\lambda|$ with the entry corresponding to $\beta^\ast_j$ equal to one, and zero otherwise.
\end{itemize}
\begin{theorem}\label{THM:WALD2}
Suppose $\lim_{n\to\infty}\Pr\left[\hat \A_\lambda= \A_\lambda\right]= 1$ and ({\bf W}) holds. Then, for any $j\in\hat\A_\lambda$,
\begin{align}\label{eq:Tw}
\frac{\tilde\beta_{j}^{(\hat\A_\lambda)}-\beta_j^{(\hat\A_\lambda)}}{\sigma_\bepsilon\sqrt{\left[(\bX^\top_{\hat\A_\lambda}\bX_{\hat\A_\lambda})^{-1}\right]_{(j,j)}}}\rightarrow_{d} \mathcal{N}\left(0, 1\right),
\end{align}
where $\tilde\bbeta^{(\hat\A_\lambda)}$ is defined in \eqref{eq:posiols} and $\bbeta^{(\hat\A_\lambda)}$ in \eqref{eq:betaM}, and ${\sigma_\bepsilon}$ is the variance of $\bepsilon$ in \eqref{eq:linmodel}.
\end{theorem}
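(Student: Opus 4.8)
The plan is to condition on the high-probability event $\{\hat\A_\lambda = \A_\lambda\}$ furnished by Proposition~\ref{THM:CONSISTENT}, and then to establish the asymptotic normality of the ordinary least squares estimator on the \emph{fixed} set $\A_\lambda$. Since $\Pr[\hat\A_\lambda = \A_\lambda] \to 1$, it suffices to prove the convergence in~\eqref{eq:Tw} with $\hat\A_\lambda$ replaced everywhere by the deterministic set $\A_\lambda$; the general statement then follows because the two left-hand sides agree with probability tending to one, and convergence in distribution is unaffected by modification on an asymptotically negligible event (a standard Slutsky/coupling argument). This is the conceptual heart of the paper: the ``double-peeking'' is illusory because the selected set is, with high probability, the non-random set $\A_\lambda$, so the downstream OLS inference is really just classical inference on a pre-specified design $\bX_{\A_\lambda}$.

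Next I would carry out the classical-least-squares computation on the fixed sub-model. Write $\bP \equiv (\bX^\top_{\A_\lambda}\bX_{\A_\lambda})^{-1}\bX^\top_{\A_\lambda}$, so that $\tilde\bbeta^{(\A_\lambda)} = \bP\by = \bP(\bX\bbeta^\ast + \bepsilon)$. By the definition of $\bbeta^{(\A_\lambda)}$ in~\eqref{eq:betaM}, we have $\bP\bX\bbeta^\ast = \bbeta^{(\A_\lambda)}$, so $\tilde\bbeta^{(\A_\lambda)} - \bbeta^{(\A_\lambda)} = \bP\bepsilon$. Projecting onto the $j$-th coordinate gives
\[
\tilde\beta^{(\A_\lambda)}_j - \beta^{(\A_\lambda)}_j = \be^j \bP \bepsilon = \sum_{i=1}^n r^w_i \epsilon_i,
\]
with $\br^w = \be^j\bP$ exactly the vector in Condition~({\bf W}). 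Since the $\epsilon_i$ are independent, mean zero, with variance $\sigma^2_\bepsilon$, this is a weighted sum of independent centered random variables with variance $\sigma^2_\bepsilon \|\br^w\|_2^2 = \sigma^2_\bepsilon [(\bX^\top_{\A_\lambda}\bX_{\A_\lambda})^{-1}]_{(j,j)}$, which is precisely the normalization in the denominator of~\eqref{eq:Tw}. So the ratio in~\eqref{eq:Tw} equals $\sum_i r^w_i \epsilon_i / (\sigma_\bepsilon \|\br^w\|_2)$, a normalized sum to which I would apply the Lindeberg--Feller CLT.

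The remaining step is to verify the Lindeberg condition for the triangular array $\{r^w_i \epsilon_i / (\sigma_\bepsilon\|\br^w\|_2)\}_{i=1}^n$. Using the sub-Gaussian tail bound on $\epsilon_i$ from~({\bf M2}), the Lindeberg sum is controlled by a bound of the form $C \sum_i (r^w_i/\|\br^w\|_2)^2 \cdot g(\|\br^w\|_2/(|r^w_i|\,\tau))$ for a function $g$ that decays at infinity; the key quantity is $\max_i |r^w_i|/\|\br^w\|_2 = \|\br^w\|_\infty/\|\br^w\|_2$, which Condition~({\bf W}) forces to $0$. A routine argument then shows the Lindeberg sum vanishes, giving $\sum_i r^w_i\epsilon_i/(\sigma_\bepsilon\|\br^w\|_2) \rightarrow_d \mathcal{N}(0,1)$. (When $\bepsilon$ is Gaussian, the sum is exactly $\mathcal{N}(0,\sigma^2_\bepsilon\|\br^w\|_2^2)$ for every $n$ and Condition~({\bf W}) is not needed, as noted before the theorem.) I expect the main obstacle to be not any single estimate but the careful handling of the coupling in the first step: one must ensure that on the bad event $\{\hat\A_\lambda \neq \A_\lambda\}$ the index $j \in \hat\A_\lambda$ in the statement is still meaningfully defined and that the quantities in~\eqref{eq:Tw} (which reference the random set) are compared correctly to their fixed-set analogues --- e.g.\ one states the result as: for any fixed $j$, on the event $j \in \hat\A_\lambda$ (whose indicator may itself be random), the studentized statistic converges to $\mathcal{N}(0,1)$, which is made rigorous precisely because that event coincides with $\{j \in \A_\lambda\}$ up to probability $o(1)$.
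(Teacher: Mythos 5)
Your proposal is correct and follows essentially the same route as the paper's proof: invoke Proposition~\ref{THM:CONSISTENT} to replace $\hat\A_\lambda$ by the deterministic set $\A_\lambda$ with probability tending to one, identify $\tilde\beta_j^{(\A_\lambda)}-\beta_j^{(\A_\lambda)}$ with the weighted noise sum $\br^w\bepsilon$ whose normalization matches the stated denominator, and verify Lindeberg's condition via Condition~({\bf W}) together with the sub-Gaussian tails and dominated convergence. No substantive differences from the paper's argument.
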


The error standard deviation $\sigma_\bepsilon$ in \eqref{eq:Tw} is usually unknown. It can be estimated using various high-dimensional estimation methods, e.g., the scaled lasso \citep{SunZhang2012ScaledLasso}, cross-validation (CV) based methods \citep{Fanetal2012CVvar} or method-of-moments based methods \citep{Dicker2014var}; see a comparison study of high dimensional error variance estimation methods in \citet{Reidetal2016var}. Alternatively, Theorem~\ref{THM:ERRORVAR} shows that we could also consistently estimate the error variance using the post-selection OLS residual sum of square (RSS).

\begin{theorem}\label{THM:ERRORVAR}
Suppose $\lim_{n\to\infty}\Pr\left[\hat \A_\lambda= \A_\lambda\right]= 1$ and $\log(p)/(n-q_\lambda)\to0$, where $q_\lambda\equiv|\A_\lambda|$. Then
\begin{align}\label{eq:RSSestvar}
\frac{1}{n-\hat q_\lambda}\left\|\by-\bX_{\hat\A_\lambda}\tilde\bbeta^{(\hat\A_\lambda)}\right\|_2^2\to_p\sigma^2_\bepsilon,
\end{align}
where $\hat q_\lambda\equiv|\hat\A_\lambda|$.
\end{theorem}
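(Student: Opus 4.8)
The plan is to use Proposition~\ref{THM:CONSISTENT} to replace the random selected model by the deterministic set $\A_\lambda$, after which the residual sum of squares becomes that of an ordinary least squares fit with a fixed design. Write $\bP_{\A_\lambda}\equiv\bX_{\A_\lambda}(\bX^\top_{\A_\lambda}\bX_{\A_\lambda})^{-1}\bX^\top_{\A_\lambda}$ for the projection onto the column span of $\bX_{\A_\lambda}$; by ({\bf M1}), $\bX_{\A_\lambda}$ has full column rank (it is a noiseless-lasso active set with $q_\lambda<n$, the latter being forced by $\log(p)/(n-q_\lambda)\to0$), so $\bI-\bP_{\A_\lambda}$ is a projection of rank $n-q_\lambda$. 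On the event $\{\hat\A_\lambda=\A_\lambda\}$, which has probability tending to $1$ by Proposition~\ref{THM:CONSISTENT}, we have $\hat q_\lambda=q_\lambda$ and $\by-\bX_{\hat\A_\lambda}\tilde\bbeta^{(\hat\A_\lambda)}=(\bI-\bP_{\A_\lambda})\by$. Hence, by a union bound, it suffices to show $\tfrac{1}{n-q_\lambda}\|(\bI-\bP_{\A_\lambda})\by\|_2^2\to_p\sigma^2_\bepsilon$: for any $\eta>0$, the probability that the left-hand side of \eqref{eq:RSSestvar} deviates from $\sigma^2_\bepsilon$ by more than $\eta$ is at most $\Pr[\hat\A_\lambda\neq\A_\lambda]$ plus the corresponding deviation probability for the deterministic-design quantity.

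Substituting $\by=\bX\bbeta^\ast+\bepsilon$ and using $(\bI-\bP_{\A_\lambda})\bX_{\A_\lambda}=\bzero$ together with $\supp(\bbeta^\ast)=\A^\ast$, I would write $\bb\equiv\bX_{\A^\ast\backslash\A_\lambda}\bbeta^\ast_{\A^\ast\backslash\A_\lambda}$ for the part of the mean not in $\mathrm{span}(\bX_{\A_\lambda})$ and expand
\[
\|(\bI-\bP_{\A_\lambda})\by\|_2^2=\bepsilon^\top(\bI-\bP_{\A_\lambda})\bepsilon+2\,\bb^\top(\bI-\bP_{\A_\lambda})\bepsilon+\bb^\top(\bI-\bP_{\A_\lambda})\bb.
\]
For the leading term, $\E[\bepsilon^\top(\bI-\bP_{\A_\lambda})\bepsilon]=\sigma^2_\bepsilon(n-q_\lambda)$, and since $\bI-\bP_{\A_\lambda}$ is a projection the standard variance identity for quadratic forms, combined with the finite fourth moment supplied by the sub-Gaussianity in ({\bf M2}), gives $\Var[\bepsilon^\top(\bI-\bP_{\A_\lambda})\bepsilon]=O(n-q_\lambda)$. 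Chebyshev's inequality together with $n-q_\lambda\to\infty$ then yields $\tfrac{1}{n-q_\lambda}\bepsilon^\top(\bI-\bP_{\A_\lambda})\bepsilon\to_p\sigma^2_\bepsilon$; the same computation shows $\|(\bI-\bP_{\A_\lambda})\bepsilon\|_2=O_p(\sqrt{n-q_\lambda})$.

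It remains to bound $\|\bb\|_2$. Projections are contractions, so $\|(\bI-\bP_{\A_\lambda})\bb\|_2\le\|\bb\|_2$, and since $\A^\ast\backslash\A_\lambda\subseteq(\A^\ast\backslash(\A_\lambda\cup\cS^\ast))\cup(\cS^\ast\backslash\A_\lambda)$, I would invoke the second display of ({\bf M4}), namely $\|\bX_{\A^\ast\backslash(\A_\lambda\cup\cS^\ast)}\bbeta^\ast_{\A^\ast\backslash(\A_\lambda\cup\cS^\ast)}\|_2=\mathcal{O}(\sqrt{\log p})$, together with the containment $\cS^\ast\subseteq\A_\lambda$ established inside the proof of Proposition~\ref{THM:CONSISTENT} (strong signals are always retained by the noiseless lasso), so that $\cS^\ast\backslash\A_\lambda=\emptyset$ for $n$ large; this gives $\|\bb\|_2=\mathcal{O}(\sqrt{\log p})$. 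Consequently $\tfrac{1}{n-q_\lambda}\bb^\top(\bI-\bP_{\A_\lambda})\bb=O(\log(p)/(n-q_\lambda))\to0$ by hypothesis, and by Cauchy--Schwarz $\tfrac{1}{n-q_\lambda}|\bb^\top(\bI-\bP_{\A_\lambda})\bepsilon|\le\tfrac{1}{n-q_\lambda}\|\bb\|_2\|(\bI-\bP_{\A_\lambda})\bepsilon\|_2=O_p(\sqrt{\log(p)/(n-q_\lambda)})=o_p(1)$. Adding the three pieces establishes the claim. I expect the main obstacle to be the control of $\|\bb\|_2$: confirming that no strong-signal variable is dropped by the noiseless lasso (so that ({\bf M4}) alone bounds the residual mean) requires reaching into the proof of Proposition~\ref{THM:CONSISTENT}, and one also needs $\log(p)/n\to0$ — hence $\|\bb\|_2=o(\sqrt{n})$ even before dividing by $n-q_\lambda$ — which follows from ({\bf M3}) since $\lambda$ is bounded. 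The quadratic-form concentration and the reduction to the deterministic set are routine by comparison.
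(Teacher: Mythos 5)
Your proposal is correct and follows essentially the same route as the paper's proof: reduce to the deterministic set $\A_\lambda$ via Proposition~\ref{THM:CONSISTENT}, expand $\|(\bI-\bP^{(\A_\lambda)})\by\|_2^2$ into the bias, cross, and pure-noise quadratic terms, control the bias with ({\bf M4}) together with $\cS^\ast\subseteq\A_\lambda$ (Lemma~\ref{lem:unbiased}), and finish with the mean and variance of $\bepsilon^\top\bQ\bepsilon$ plus Chebyshev. The only differences are cosmetic — you dispatch the cross term by Cauchy--Schwarz and cite the standard quadratic-form variance identity where the paper writes out the fourth-moment expansion explicitly — and both are sound.
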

Theorem~\ref{THM:ERRORVAR} is proven in Section~\ref{sec:pferrorvar} in the online Supplementary Materials. In \eqref{eq:RSSestvar}, $\by-\bX_{\hat\A_\lambda}\tilde\bbeta^{\hat\A_\lambda}$ is the fitted OLS residual on the sub-model \eqref{eq:submodel}. Also, $\log(p)/(n-q_\lambda)\to0$ is a weak condition: since $\log(p)/n\to0$, $\log(p)/(n-q_\lambda)\to0$ is satisfied if $\lim_{n\to\infty}q_\lambda/n<1$.

To summarize, in this section, we have provided a theoretical justification for a procedure that seems, intuitively, to be statistically unjustifiable:
  \begin{enumerate}
\item Perform the lasso  in order to obtain the support set $\hat\A_\lambda$; 
\item Use least squares to fit the sub-model containing just the features in $\hat\A_\lambda$;
\item Use the classical confidence intervals from that least squares model, without accounting for the fact that $\hat\A_\lambda$ was obtained by peeking at the data.
\end{enumerate}
 Theorem \ref{THM:WALD2} guarantees that  the na\"{i}ve confidence intervals in Step 3 will indeed have approximately correct coverage, in the sense of \eqref{eq:ci}.

\section{Numerical Examination of Na\"{i}ve Confidence Intervals}\label{sec:waldsims}

In this section, we perform simulation studies to examine the coverage probability \eqref{eq:ci} of the \emph{na\"{i}ve} confidence intervals obtained by applying standard least squares software to the data $\by, \bX_{\hat\A_\lambda}$.  
 
  Recall from Section~\ref{sec:intro} that \eqref{eq:ci} involves the probability that the confidence interval contains the quantity $\bbeta^{(\hat\A_\lambda)}$, which in general does not equal the population regression coefficient vector $\bbeta^\ast_{\hat\A_\lambda}$.  Inference for $\bbeta^\ast$ is discussed in Sections~\ref{sec:waldscore} and \ref{sec:numeric}.

The results in this section complement simulation findings in \citet{LeebetalPoSI2015}.

\subsection{Methods for Comparison}

Following Theorem~\ref{THM:WALD2}, for $\tilde\bbeta^{(\hat\A_\lambda)}$ defined in \eqref{eq:posiols}, and for each $j\in\hat\A_\lambda$, the 95\% na\"ive confidence interval takes the form
\begin{equation}
\small 
\CI_{j}^{(\hat\A_\lambda)}\equiv \left(\tilde\beta^{(\hat\A_\lambda)}_j-1.96\times\hat\sigma_\bepsilon\sqrt{\left[\bX^\top_{\hat\A_\lambda}\bX_{\hat\A_\lambda}\right]_{(j, j)}} \, , \,\,  \tilde\beta^{(\hat\A_\lambda)}_j+1.96\times\hat\sigma_\bepsilon\sqrt{\left[\bX^\top_{\hat\A_\lambda}\bX_{\hat\A_\lambda}\right]_{(j, j)}}\right).
\label{eq:naiveci}
\end{equation}
In order to obtain the set ${\hat\A_\lambda}$,  we must apply the lasso using some value of $\lambda$. By ({\bf M1}) we need $\lambda\succ\sqrt{\log(p)/n}$, which is slightly larger than the prediction optimal rate, $\lambda\asymp\sqrt{\log(p)/n}$ \citep{bickeletal2009, vandeGeerBuhlmann2009}. A data-driven way to obtain a larger tuning parameter it to use $\lambda_{\text{1SE}}$, which is the largest value of $\lambda$ for which the 10-fold CV prediction mean squared error (PMSE) is  within one standard error of the minimum CV PMSE \citep[see Section 7.10.1 in][]{Hastieetal2009}. However, the $\lambda$ selected by cross validation depends on $\by$ and may induce additional randomness in the set of selected coefficients, $\hat\A_\lambda$. This randomness can also impact the exact post-selection procedure of \citet{lee2015exact}. To address this issue, the authors proposed $\lambda_{\text{sup}}\equiv2\E[\|\bX^\top\be\|_\infty]/n$, where $\be\sim\mathcal{N}_n(\bzero, \hat\sigma_\bepsilon^2\bI)$. As an alternative to $\lambda_{\text{1SE}}$, we also evaluate $\lambda_{\text{sup}}$, where we simulate $\be$ and approximate the expectation based on the average of 1000 replicates. 
We compare the confidence intervals for $\bbeta^{(\hat\A_\lambda)}$ with those based on the exact lasso post-selection inference procedure of  \citet{lee2015exact}, which is implemented in the \textsc{R} package \texttt{selectiveInference}.  



In both approaches,  the standard deviation of errors, $\sigma_\bepsilon$ in \eqref{eq:linmodel}, is estimated either using the scaled lasso \citep{SunZhang2012ScaledLasso} or by applying Theorem~\ref{THM:ERRORVAR}. However, we do not examine the combination of $\lambda_{\text{sup}}$ with error variance estimated based on Theorem~\ref{THM:ERRORVAR}, because $\lambda_{\text{sup}}$ requires an estimate of the error standard deviation based on Theorem~\ref{THM:ERRORVAR}, whereas Theorem~\ref{THM:ERRORVAR} requires a suitable choice of $\lambda$ to be valid.

\subsection{Simulation Set-Up} \label{sec:simsetup-Sec3}
For the simulations, we consider two partial correlation settings for $\bX$, generated based on (i) a scale-free graph and (ii) a stochastic block model \citep[see, e.g.,][]{Kolaczyk2009networks}, each containing $p=100$ nodes. These  settings are relaxations of the simple orthogonal and block-diagonal settings, and are displayed in Figure~\ref{fig:graphsettings}.

In the scale-free graph setting, we used the \texttt{igraph} package in \textsc{R} to simulate an undirected, scale-free network $\mathcal{G}=(\cV,\cE)$ with power-law exponent parameter $\gamma=5$, and edge density 0.05.
 Here, $\cV=\{1,\ldots,p\}$ is the set of nodes in the graph, and $\cE$ is the set of edges. This resulted in a total of  $|\cE|=247$ edges in the graph. We then order the indices of the nodes in the graph so that the first, second, third, fourth, and fifth nodes correspond to the 10th, 20th, 30th, 40th, and 50th least-connected nodes in the graph.

In the stochastic block model setting, we first generate two dense Erd\H{o}s-R\'{e}nyi graphs \citep{ErdosRenyi1959ER, Gilbert1959ER} with five and 95 nodes, respectively. In each graph, the edge density is 0.3. We then add edges randomly between these two graphs to achieve an inter-graph edge density of 0.05. The indices of the nodes are ordered so that the nodes in the five-node graph precede the remaining nodes.

Next, for both graph settings, we define the  weighted adjacency matrix, $\bA$, as follows:
\begin{equation}
A_{(j,k)}=  \begin{cases} 1 & \mbox { for } j=k\\
 \rho &  \mbox{ for }(j,k)\in\cE\\
 0 & \mbox{ otherwise} \end{cases},
 \label{eq:A}
 \end{equation}
where 
 $\rho\in\{0.2,0.6\}$. 
We then set $\bSigma=\bA^{-1}$, and standardize $\bSigma$ so that $\Sigma_{(j, j)}=1$, for all $j=1,\dots,p$. We simulate observations $\bx_1,\ldots,\bx_n \sim_{i.i.d.}\mathcal{N}_p(\bzero, \bSigma)$, and generate the outcome $\by \sim \mathcal{N}_n(\bX\bbeta^\ast, \sigma^2_\bepsilon{\bf I}_n)$, $n \in \{100, 300, 500\}$, where 
$$\beta^\ast_j= \begin{cases}
1 & \mbox { for } j=1\\
 0.1 & \mbox { for } 2 \leq j \leq 5 \\
  0 & \mbox { otherwise } 
\end{cases}.
$$ 
A range of error variances $\sigma_\bepsilon^2$ are used to produce signal-to-noise ratios, $\textrm{SNR}\equiv(\bbeta^{\ast\top}\bSigma\bbeta^\ast)/\sigma^2_\bepsilon\in\{0.1, 0.3, 0.5\}$.

\begin{figure}[t]
\caption{The scale-free graph and stochastic block model settings. The size of a given node indicates the magnitude of the corresponding element of $\bbeta^\ast$.}
\centering\includegraphics[width = 12cm]{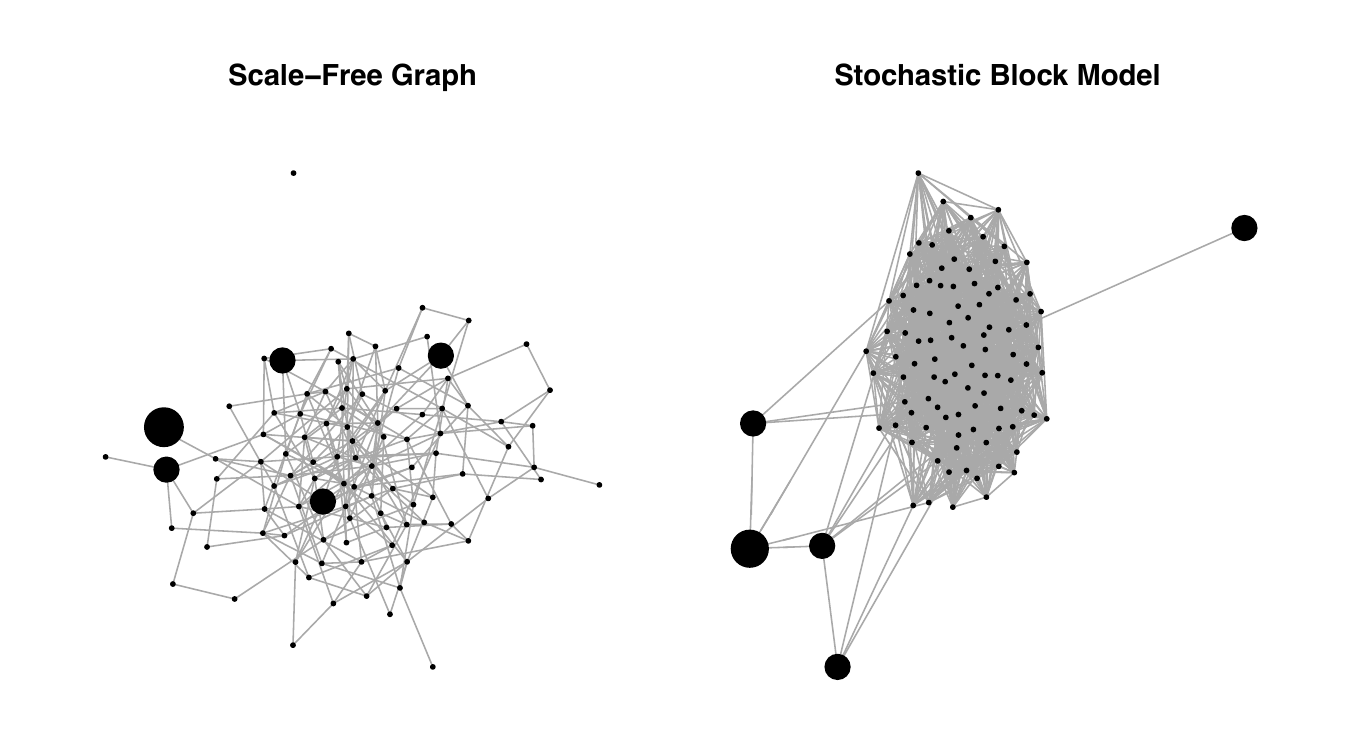}
\label{fig:graphsettings}
\end{figure}

Throughout the simulations, $\bSigma$ and $\bbeta^\ast$ are held fixed over $B=1000$ repetitions of the simulation study, while $\bX$ and $\by$ vary.

\subsection{Simulation Results}

We calculate the average length and coverage proportion of the 95\% na\"{i}ve confidence intervals, where the coverage proportion is defined as 
\begin{equation}
\text{Coverage Proportion} \equiv\sum_{b=1}^B\sum_{j\in\hat\A_\lambda^b}
1\left\{\beta_j^{(\hat\A_\lambda^b)}\in \CI^{(\hat\A_\lambda^b),b}_j\right\} \bigg/ \left|\hat\A_\lambda^b\right|,
\label{eq:cp}
\end{equation}
where $\hat\A_\lambda^b$ and $\CI_j^{(\hat\A_\lambda^b), b}$ are the set of variables selected by the lasso in the $b$th repetition, and the 95\% na\"ive confidence interval  \eqref{eq:naiveci} for the $j$th variable in the $b$th repetition, respectively. Recall that $\beta_j^{(\hat\A_\lambda^b)}$ was defined in \eqref{eq:betaM}.
 In order to calculate the average length and coverage proportion associated with the exact lasso post selection procedure of \citet{lee2015exact}, we replace $\CI^{(\hat\A_\lambda^b),b}_j$ in \eqref{eq:cp} with the confidence interval output by the \texttt{selectiveInference} \textsc{R} package.

Tables~\ref{tab:waldER} and \ref{tab:waldDSG35} show the coverage proportion and average length of 95\% na\"{i}ve confidence intervals and 95\% exact lasso post-selection confidence intervals under the scale-free graph and stochastic block model settings, respectively. The result shows that the coverage probability of the exact post-selection confidence interval is more correct than that of the na\"ive confidence interval when the data are small and signal is weak. But when the data are large and/or signal is relatively strong, both confidence intervals have approximately correct coverage. This corroborates  the findings in \citet{LeebetalPoSI2015}, in which the authors consider settings with $n=30$ and $p=10$. The coverage probability of the na\"ive confidence intervals with tuning parameter $\lambda_{\text{1SE}}$ is a bit smaller than the desired level, especially when the signal is weak. This may be due to the randomness in $\lambda_{\text{1SE}}$. The na\"ive approach of error variance estimation as in Theorem~\ref{THM:ERRORVAR} works similarly compared to the scaled lasso across all settings. In addition, Tables~\ref{tab:waldER} and \ref{tab:waldDSG35} also show that na\"ive confidence intervals are substantially narrower than exact lasso post-selection confidence intervals, especially when the signal is weak.

\begin{table}[h]
\footnotesize
\centering 
\caption{Coverage proportions (Cov) and average lengths (Len) of 95\% na\"{i}ve  confidence intervals with tuning parameters $\lambda_{\text{sup}}$ and $\lambda_{\text{1SE}}$,  and 95\% exact post-selection confidence intervals under the scale-free graph setting with partial correlation $\rho \in \{0.2, 0.6 \}$, sample size $n\in\{100, 300, 500\}$, dimension $p=100$ and signal-to-noise ratio $\mathrm{SNR}\in\{0.1, 0.3, 0.5\}$. The error variance is estimated either through the scaled lasso \citep{SunZhang2012ScaledLasso} (SL) or Theorem~\ref{THM:ERRORVAR} (NL).}
\begin{tabular}{cc|ccc|ccc|ccc} 
\hline
\hline
&$\rho$ & \multicolumn{9}{c}{0.2} \\
&$n$ & \multicolumn{3}{c|}{100} & \multicolumn{3}{c|}{300} & \multicolumn{3}{c}{500}\\
&\textrm{SNR} & $0.1$ & $0.3$ & $0.5$  & $0.1$ & $0.3$ & $0.5$  & $0.1$ & $0.3$ & $0.5$
\\
\hline 
\multirow{3}{*}{Cov}	&exact $\lambda_{\text{sup}}$ SL		& 0.905 & 0.942 & 0.936 & 0.949 & 0.953 & 0.953 & 0.946 & 0.949 & 0.949 \\
					&na\"{i}ve $\lambda_{\text{sup}}$ SL 	& 0.683 & 0.961 & 0.944 & 0.951 & 0.952 & 0.951 & 0.971 & 0.948 & 0.947\\
					&na\"{i}ve $\lambda_{\text{1SE}}$ SL	& 0.596 & 0.876 & 0.887 & 0.905 & 0.917 & 0.910 & 0.944 & 0.935 & 0.932\\
					&na\"{i}ve $\lambda_{\text{1SE}}$ NL 	& 0.587 & 0.870 & 0.883 & 0.905 & 0.919 & 0.911 & 0.945 & 0.939 & 0.933\\
\hline 
\multirow{3}{*}{Len}	&exact $\lambda_{\text{sup}}$	SL	& 4.260 & 1.634 & 0.823 & 1.907 & 0.437 & 0.330 & 0.814 & 0.327 & 0.255 \\
					&na\"{i}ve $\lambda_{\text{sup}}$ SL	& 1.122 & 0.693 & 0.552 & 0.717 & 0.421 & 0.326 & 0.562 & 0.325 & 0.253 \\
					&na\"{i}ve $\lambda_{\text{1SE}}$ SL	& 1.199 & 0.705 & 0.555 & 0.718 & 0.420 & 0.326 & 0.560 & 0.326 & 0.253\\
					&na\"{i}ve $\lambda_{\text{1SE}}$ NL 	& 1.201 & 0.711 & 0.559 & 0.723 & 0.424 & 0.329 & 0.563 & 0.328 & 0.255\\
\hline 
&$\rho$ & \multicolumn{9}{c}{0.6} \\
&$n$ & \multicolumn{3}{c|}{100}& \multicolumn{3}{c|}{300} & \multicolumn{3}{c}{500}\\
&\textrm{SNR} & $0.1$ & $0.3$ & $0.5$  & $0.1$ & $0.3$ & $0.5$  & $0.1$ & $0.3$ & $0.5$
\\
\hline 
\multirow{3}{*}{Cov}	&exact $\lambda_{\text{sup}}$ SL		& 0.956 & 0.938 & 0.945 & 0.922 & 0.944 & 0.945 & 0.948 & 0.945 & 0.947 \\
					&na\"{i}ve $\lambda_{\text{sup}}$ SL	& 0.625 & 0.943 & 0.953 & 0.942 & 0.942 & 0.942 & 0.964 & 0.944 & 0.944\\
					&na\"{i}ve $\lambda_{\text{1SE}}$ SL	& 0.643 & 0.873 & 0.876 & 0.918 & 0.934 & 0.929 & 0.954 & 0.941 & 0.935\\
					&na\"{i}ve $\lambda_{\text{1SE}}$ NL 	& 0.639 & 0.870 & 0.873 & 0.919 & 0.934 & 0.929 & 0.957 & 0.944 & 0.938\\
\hline 
\multirow{3}{*}{Len}	&exact $\lambda_{\text{sup}}$	SL	& 6.433 & 1.616 & 0.770 & 1.892 & 0.429 & 0.328 & 0.787 & 0.326 & 0.253 \\
					&na\"{i}ve $\lambda_{\text{sup}}$ SL	& 1.108 & 0.692 & 0.553 & 0.713 & 0.420 & 0.326 & 0.560 & 0.325 & 0.252\\
					&na\"{i}ve $\lambda_{\text{1SE}}$ SL 	& 1.200 & 0.705 & 0.554 & 0.720 & 0.420 & 0.326 & 0.561 & 0.326 & 0.253\\
					&na\"{i}ve $\lambda_{\text{1SE}}$ NL 	& 1.203 & 0.709 & 0.557 & 0.725 & 0.423 & 0.329 & 0.564 & 0.327 & 0.254\\
\hline \hline
\end{tabular}
\label{tab:waldER}
\end{table}

\begin{table}[h]
\footnotesize
\caption{Coverage proportions (Cov) and average lengths (Len) of 95\% na\"{i}ve  confidence intervals with tuning parameters $\lambda_{\text{sup}}$ and $\lambda_{\text{1SE}}$, and 95\% exact post-selection confidence intervals  under the stochastic block model setting with partial correlation $\rho \in \{0.2, 0.6 \}$, sample size $n\in\{100, 300, 500\}$, dimension $p=100$ and signal-to-noise ratio $\mathrm{SNR}\in\{0.1, 0.3, 0.5\}$. The error variance is estimated either through the scaled lasso \citep{SunZhang2012ScaledLasso} (SL) or Theorem~\ref{THM:ERRORVAR} (NL).}
\centering 
\begin{tabular}{cc|ccc|ccc|ccc} 
\hline
\hline
&$\rho$ & \multicolumn{9}{c}{0.2} \\
&$n$ & \multicolumn{3}{c|}{100} & \multicolumn{3}{c|}{300} & \multicolumn{3}{c}{500}\\
&\textrm{SNR} & $0.1$ & $0.3$ & $0.5$  & $0.1$ & $0.3$ & $0.5$  & $0.1$ & $0.3$ & $0.5$
\\
\hline 
\multirow{3}{*}{Cov}		&exact $\lambda_{\text{sup}}$ SL		& 0.921 & 0.934 & 0.948 & 0.951 & 0.934 & 0.933 & 0.953 & 0.958 & 0.958 \\
					&na\"{i}ve $\lambda_{\text{sup}}$ SL	& 0.540 & 0.955 & 0.960 & 0.953 & 0.931 & 0.930 & 0.966 & 0.957 & 0.957 \\
					&na\"{i}ve $\lambda_{\text{1SE}}$ SL	& 0.585 & 0.850 & 0.858 & 0.911 & 0.922 & 0.917 & 0.961 & 0.944 & 0.940 \\
					&na\"{i}ve $\lambda_{\text{1SE}}$ NL	& 0.586 & 0.854 & 0.865 & 0.913 & 0.924 & 0.918 & 0.961 & 0.945 & 0.942 \\
\hline
\multirow{3}{*}{Len}		&exact $\lambda_{\text{sup}}$ SL		& 4.721 & 1.692 & 0.857 & 1.873 & 0.425 & 0.325 & 0.758 & 0.325 & 0.252 \\
					&na\"{i}ve $\lambda_{\text{sup}}$ SL	& 1.111 & 0.682 & 0.545 & 0.709 & 0.416 & 0.323 & 0.559 & 0.323 & 0.251 \\
					&na\"{i}ve $\lambda_{\text{1SE}}$ SL 	& 1.188 & 0.703 & 0.552 & 0.715 & 0.416 & 0.323 & 0.557 & 0.323 & 0.251 \\
					&na\"{i}ve $\lambda_{\text{1SE}}$ NL 	& 1.195 & 0.710 & 0.558 & 0.721 & 0.420 & 0.326 & 0.560 & 0.325 & 0.253 \\
\hline 
&$\rho$ & \multicolumn{9}{c}{0.6} \\
&$n$ & \multicolumn{3}{c|}{100} & \multicolumn{3}{c|}{300} & \multicolumn{3}{c}{500}\\
&\textrm{SNR} & $0.1$ & $0.3$ & $0.5$  & $0.1$ & $0.3$ & $0.5$  & $0.1$ & $0.3$ & $0.5$
\\
\hline 
\multirow{3}{*}{Cov}		&exact $\lambda_{\text{sup}}$ SL		& 0.923 & 0.945 & 0.949 & 0.959 & 0.956 & 0.958 & 0.943 & 0.948 & 0.948 \\
					&na\"{i}ve $\lambda_{\text{sup}}$ SL	& 0.569 & 0.959 & 0.964 & 0.964 & 0.955 & 0.956 & 0.963 & 0.946 & 0.946 \\
					&na\"{i}ve $\lambda_{\text{1SE}}$ SL	& 0.568 & 0.841 & 0.852 & 0.922 & 0.933 & 0.927 & 0.947 & 0.946 & 0.941 \\
					&na\"{i}ve $\lambda_{\text{1SE}}$ NL	& 0.561 & 0.841 & 0.850 & 0.922 & 0.935 & 0.929 & 0.950 & 0.947 & 0.943 \\
\hline 
\multirow{3}{*}{Len}		&exact $\lambda_{\text{sup}}$ SL		& 4.744 & 1.662 & 0.832 & 1.987 & 0.424 & 0.325 & 0.807 & 0.324 & 0.251 \\
					&na\"{i}ve $\lambda_{\text{sup}}$ SL	& 1.119 & 0.690 & 0.545 & 0.712 & 0.416 & 0.323 & 0.557 & 0.322 & 0.250 \\
					&na\"{i}ve $\lambda_{\text{1SE}}$ SL	& 1.194 & 0.704 & 0.554 & 0.713 & 0.416 & 0.323 & 0.555 & 0.322 & 0.250 \\
					&na\"{i}ve $\lambda_{\text{1SE}}$ NL	& 1.194 & 0.707 & 0.557 & 0.718 & 0.420 & 0.326 & 0.558 & 0.324 & 0.252 \\
\hline \hline
\end{tabular}
\label{tab:waldDSG35}
\end{table}

In addition, to evaluate whether $\hat\A_\lambda$ is deterministic, among repetitions that $\hat\A_\lambda^b\neq\emptyset$ (there is no confidence interval if $\hat\A_\lambda^b=\emptyset$), we also calculate the proportion of $\hat\A_\lambda^b=\cD$, where $\cD$ is the most common $\hat\A_\lambda^b$, $b=1,\dots,1000$. The result is summarized in Table~\ref{tab:waldfixed}, which shows that $\hat\A_\lambda$ is almost deterministic with tuning parameter $\lambda_{\text{sup}}$. With $\lambda_{\text{1SE}}$, due to the randomness in the tuning parameter, $\hat\A_\lambda$ is less deterministic, which may explain the smaller coverage probability than the desired level in this case. We also examined the proportion of $\hat\A_\lambda^b=\A_\lambda^b$, which is shown in Table~\ref{tab:equalalambda}. The observation is similar to that in Table~\ref{tab:waldfixed}. As mentioned in Section~\ref{sec:intro}, $\Pr[\hat\A_\lambda = \A^\ast] = 0$ in all the settings considered. 

\begin{table}[h]
\footnotesize
\caption{Among repetitions that $\hat\A_\lambda^b\neq\emptyset$, the proportion of $\hat\A_\lambda^b$ that equals the most common $\hat\A_\lambda^b$, $b=1,\dots,1000$, under the scale-free graph and stochastic block model settings with tuning parameters $\lambda_{\text{sup}}$ and $\lambda_{\text{1SE}}$. In the simulation, $\rho \in \{0.2, 0.6 \}$, sample size $n\in\{100, 300, 500\}$, dimension $p=100$ and signal-to-noise ratio $\mathrm{SNR}\in\{0.1, 0.3, 0.5\}$.}
\centering 
\begin{tabular}{c|ccc|ccc|ccc} 
\hline
\hline
$\rho$ & \multicolumn{9}{c}{0.2} \\
$n$ & \multicolumn{3}{c|}{100} & \multicolumn{3}{c|}{300} & \multicolumn{3}{c}{500}\\
\textrm{SNR} & $0.1$ & $0.3$ & $0.5$  & $0.1$ & $0.3$ & $0.5$  & $0.1$ & $0.3$ & $0.5$
\\
\hline 
Scale-free; $\lambda_{\text{sup}}$ 	& 0.000 & 0.999 & 0.993 & 1.000 & 0.999 & 0.998 &1.000 & 0.999 & 0.996 \\
Scale-free; $\lambda_{\text{1SE}}$ 	& 0.832 & 0.958 & 0.851 & 0.961 & 0.948 & 0.934 & 0.988 & 0.980 & 0.966 \\
\hline 
Stochastic block; $\lambda_{\text{sup}}$	&1.000 & 0.997 & 0.992 & 0.998 & 1.000 & 0.999 & 1.000 & 1.000 & 1.000 \\
Stochastic block; $\lambda_{\text{1SE}}$	& 0.838 & 0.839 & 0.831 & 0.964 & 0.957 & 0.946 & 0.992 & 0.979 & 0.968 \\
\hline 
$\rho$ & \multicolumn{9}{c}{0.6} \\
$n$ & \multicolumn{3}{c|}{100} & \multicolumn{3}{c|}{300} & \multicolumn{3}{c}{500}\\
\textrm{SNR} & $0.1$ & $0.3$ & $0.5$  & $0.1$ & $0.3$ & $0.5$  & $0.1$ & $0.3$ & $0.5$
\\
\hline 
Scale-free; $\lambda_{\text{sup}}$	& 0.995 & 0.996 & 0.996 & 0.999 & 0.998 & 0.998 & 0.999 & 1.000 & 1.000 \\
Scale-free; $\lambda_{\text{1SE}}$	& 0.854 & 0.869 & 0.855 & 0.964 & 0.947 & 0.935 & 0.994 & 0.981 & 0.962 \\
\hline 
Stochastic block; $\lambda_{\text{sup}}$	& 0.998 & 0.998 & 0.998 & 1.000 & 1.000 & 1.000 & 1.000 & 0.999 & 0.999 \\
Stochastic block; $\lambda_{\text{1SE}}$	& 0.834 & 0.843 & 0.826 & 0.969 & 0.976 & 0.968 & 0.988 & 0.979 & 0.971 \\
\hline \hline
\end{tabular}
\label{tab:waldfixed}
\end{table}

\begin{table}[h]
\footnotesize
\caption{The proportion of $\hat\A_\lambda^b$ that equals $\A_\lambda^b$, $b=1,\dots,1000$, under the scale-free graph and stochastic block model settings with tuning parameters $\lambda_{\text{sup}}$ and $\lambda_{\text{1SE}}$. In the simulation, $\rho \in \{0.2, 0.6 \}$, sample size $n\in\{100, 300, 500\}$, dimension $p=100$ and signal-to-noise ratio $\mathrm{SNR}\in\{0.1, 0.3, 0.5\}$.}
\centering 
\begin{tabular}{c|ccc|ccc|ccc} 
\hline
\hline
$\rho$ & \multicolumn{9}{c}{0.2} \\
$n$ & \multicolumn{3}{c|}{100} & \multicolumn{3}{c|}{300} & \multicolumn{3}{c}{500}\\
\textrm{SNR} & $0.1$ & $0.3$ & $0.5$  & $0.1$ & $0.3$ & $0.5$  & $0.1$ & $0.3$ & $0.5$
\\
\hline 
Scale-free $\lambda_{\text{sup}}$ 	& 0.958 & 0.725 & 0.928 & 0.649 & 1.000 & 1.000 &0.971 & 0.999 & 0.998 \\
Scale-free $\lambda_{\text{1SE}}$ 	& 0.417 & 0.665 & 0.829 & 0.567 & 0.955 & 0.940 & 0.804 & 0.976 & 0.964 \\
\hline 
Stochastic block $\lambda_{\text{sup}}$	&0.964 & 0.736 & 0.931 & 0.678 & 1.000 & 1.000 & 0.964 & 1.000 & 1.000 \\
Stochastic block $\lambda_{\text{1SE}}$	& 0.394 & 0.661 & 0.815 & 0.582 & 0.964 & 0.957 & 0.582 & 0.559 & 0.964 \\
\hline 
$\rho$ & \multicolumn{9}{c}{0.6} \\
$n$ & \multicolumn{3}{c|}{100} & \multicolumn{3}{c|}{300} & \multicolumn{3}{c}{500}\\
\textrm{SNR} & $0.1$ & $0.3$ & $0.5$  & $0.1$ & $0.3$ & $0.5$  & $0.1$ & $0.3$ & $0.5$
\\
\hline 
Scale-free $\lambda_{\text{sup}}$	& 0.952 & 0.713 & 0.924 & 0.687 & 1.000 & 1.000 & 0.968 & 0.999 & 0.998 \\
Scale-free $\lambda_{\text{1SE}}$	& 0.396 & 0.657 & 0.799 & 0.572 & 0.950 & 0.941 & 0.779 & 0.976 & 0.963 \\
\hline 
Stochastic block $\lambda_{\text{sup}}$	& 0.950 & 0.750 & 0.935 & 0.687 & 1.000 & 1.000 & 0.967 & 1.000 & 0.999 \\
Stochastic block $\lambda_{\text{1SE}}$	& 0.445 & 0.661 & 0.806 & 0.559 & 0.959 & 0.948 & 0.805 & 0.987 & 0.974 \\
\hline \hline
\end{tabular}
\label{tab:equalalambda}
\end{table}

\section{Inference for $\beta^\ast$ With the Na\"{i}ve Score Test}\label{sec:waldscore}

Sections~\ref{sec:nllasso} and \ref{sec:waldsims} focused on the task of developing confidence intervals for  $\bbeta^{(\cM)}$ in \eqref{eq:submodel}, where $\cM=\hat\A_\lambda$, the set of variables selected by the lasso.  However, recall from \eqref{eq:betaM} that typically 
$\bbeta^{(\cM)} \neq \bbeta^\ast_{\cM}$, where $\bbeta^\ast$ was introduced in \eqref{eq:linmodel}. 

In this section, we shift our focus to performing inference on $\bbeta^\ast$. 
 We will exploit Proposition~\ref{THM:CONSISTENT} to develop a simple approach for testing $H_{0,j}^\ast:\beta_j^\ast=0$,   for  $j=1,\ldots,p$.

 Recall that in the low-dimensional setting, the classical score statistic for the hypothesis $H_{0,j}^\ast: \beta_j^\ast=0$ is proportional to $\bx_j^T(\by-\hat\by^0)$, where $\hat\by^0$ is the vector of  fitted values that results from least squares regression of $\by$ onto the $p-1$ features $\bx_1,\ldots,\bx_{j-1},\bx_{j+1},\ldots,\bx_p$. 
 In order to adapt the classical score test statistic to the high-dimensional setting, we define the \emph{na\"{i}ve score test statistic} for testing $H_{0,j}^\ast: \beta_j^\ast=0$ as 
 \begin{equation}
 S^{j} 
 \equiv \bx_j^\top\left(\by-\tilde\by^{(\hat \A_\lambda\backslash\{j\})}\right) \equiv \bx_j^\top\left({\bf I}_n -{\bf P}^{-j}\right) \by,
 \label{eq:lststat}
 \end{equation}
 where  
\begin{equation*}\tilde\by^{(\hat \A_\lambda\backslash\{j\})}\equiv\bX_{\hat\A_\lambda\backslash\{j\}}\tilde\bbeta^{(\hat\A_\lambda\backslash\{j\})},
\end{equation*}
and
\begin{equation*}
{\bf P}^{-j}\equiv\bX_{\hat \A_\lambda\backslash\{j\}}\left(\bX_{\hat \A_\lambda\backslash\{j\}}^\top\bX_{\hat \A_\lambda\backslash\{j\}}\right)^{-1}\bX_{\hat \A_\lambda\backslash\{j\}}^\top
\label{eq:posiolsscore}
\end{equation*}
is the orthogonal projection matrix onto the set of variables in ${\hat \A_\lambda\backslash\{j\}}$. $\tilde\bbeta^{(\hat\A_\lambda\backslash\{j\})}$ is defined in \eqref{eq:posiols}.
In \eqref{eq:lststat}, 
  the notation $\hat\A_\lambda\backslash \{j\}$ represents the set $\hat\A_\lambda$ in \eqref{Alambda} with $j$ removed, if $j\in\hat\A_\lambda$. If $j \notin \hat\A_\lambda$, then  $\hat\A_\lambda\backslash \{j\} = \hat\A_\lambda$.

 In Theorem~\ref{THM:SCORE}, we will derive the asymptotic distribution of $S^{j}$ under $H_{0,j}^\ast: \beta_j^\ast=0$. We first introduce two new conditions. 
 
 First, we require that the total  signal strength of variables not selected by the noiseless lasso, \eqref{eq:nllasso}, is small.
 \begin{itemize}
\item[({\bf M2$^\ast$})] Recall that $\A^\ast\equiv\supp(\bbeta^\ast)$.  Let $\A_\lambda\equiv\supp(\bbeta_\lambda)$, $b_{\lambda\min}\equiv\min_{j\in\A_\lambda}|\beta_{\lambda,j}|$ with $\bbeta_\lambda$ defined in \eqref{eq:nllasso}. The signal strength in $\A_\lambda$ satisfies
\begin{align}
\lim_{n\to\infty}\frac{b_{\lambda\min}}{\lambda} =\xi > 0,
\end{align}
and the signal strength outside of $\A_\lambda$ satisfies
\begin{align}
\left\|\bX_{\A_\lambda^c}\bbeta^\ast_{\A_\lambda^c}\right\|_2=\smallO\left(1\right).
\end{align}
\end{itemize}
Condition ({\bf M2$^\ast$}) closely resembles ({\bf M2}), which was required for Theorem~\ref{THM:WALD2} in Section~\ref{sec:nllasso}. 
  The only difference between the two is that  ({\bf M2$^\ast$}) requires $\|\bX_{\A_\lambda^c}\bbeta^\ast_{\A_\lambda^c}\|_2=\smallO(1)$, whereas ({\bf M2}) requires only that $\|\bX_{\A_\lambda^c}\bbeta^\ast_{\A_\lambda^c}\|_2=\mathcal{O}\left(\sqrt{\log(p)}\right)$. Recall that in Section~\ref{sec:nllasso}, we consider inference for the parameters in the sub-model \eqref{eq:submodel}. In other words, testing the population regression parameter $\bbeta^\ast$ in \eqref{eq:linmodel} requires more stringent assumptions than constructing confidence intervals for the parameters in the sub-model \eqref{eq:submodel}. 

The following condition, required to apply the Lindeberg-Feller Central Limit Theorem, can  be relaxed if the noise $\bepsilon$ in \eqref{eq:linmodel} is normally distributed.
\begin{itemize}
\item[({\bf S})] $\lambda$, $\bbeta^\ast$ and $\bX$ satisfy $\lim_{n\to\infty}\| {\bf r}^s \|_\infty/ \|{\bf r}^s\|_2 = 0$, where ${\bf r}^s \equiv  \left({\bf I}_n - {\bf P}^{(\A_\lambda\backslash\{j\})} \right)\bx_j$. 
\end{itemize}

We now present Theorem~\ref{THM:SCORE}, which is proven in Section~\ref{sec:pfscore} of the online Supplementary Materials. 
\begin{theorem}\label{THM:SCORE}
Suppose ({\bf M2$^\ast$}) and ({\bf S}) hold and $\lim_{n\to\infty}\Pr\left[\hat \A_\lambda= \A_\lambda\right]= 1$. For any $j=1,\dots,p$, under the null hypothesis $H_{0,j}^\ast:\beta^\ast_j=0$,
\begin{align}\label{eq:Ts}
T\equiv\frac{S^{j}}{\sigma_\bepsilon\sqrt{\bx_j^\top\left( {\bf I}_n - {\bf P}^{-j}\right)\bx_j}}\rightarrow_{d} \mathcal{N}(0, 1),
\end{align}
where  $S^{j}$ was defined in \eqref{eq:lststat}, and where $\sigma_\bepsilon$ is the variance of $\bepsilon$ in \eqref{eq:linmodel}. 
\end{theorem}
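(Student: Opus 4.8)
The plan is to exploit Proposition~\ref{THM:CONSISTENT} to replace the data-dependent set $\hat\A_\lambda$ by the deterministic set $\A_\lambda\equiv\supp(\bbeta_\lambda)$, turning $S^{j}$ in \eqref{eq:lststat} into a fixed linear functional of $\by$. On the event $\{\hat\A_\lambda=\A_\lambda\}$, which by Proposition~\ref{THM:CONSISTENT} has probability tending to one, one has $\hat\A_\lambda\backslash\{j\}=\A_\lambda\backslash\{j\}$ regardless of whether $j\in\A_\lambda$, so the numerator and denominator of $T$ in \eqref{eq:Ts} may both be computed with the deterministic projection ${\bf P}^{(\A_\lambda\backslash\{j\})}$. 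Writing ${\bf r}^s\equiv\bigl({\bf I}_n-{\bf P}^{(\A_\lambda\backslash\{j\})}\bigr)\bx_j$ as in Condition~({\bf S}), using the symmetry and idempotence of ${\bf I}_n-{\bf P}^{(\A_\lambda\backslash\{j\})}$, and substituting $\by=\bX_{\A^\ast}\bbeta^\ast_{\A^\ast}+\bepsilon$ together with $\beta_j^\ast=0$ (so that $j\notin\A^\ast$), I obtain on this event
\[
T=\frac{({\bf r}^s)^\top\bX_{\A^\ast}\bbeta^\ast_{\A^\ast}+({\bf r}^s)^\top\bepsilon}{\sigma_\bepsilon\,\|{\bf r}^s\|_2}.
\]
It then suffices to show (i) the omitted-variable bias term $({\bf r}^s)^\top\bX_{\A^\ast}\bbeta^\ast_{\A^\ast}/(\sigma_\bepsilon\|{\bf r}^s\|_2)\to 0$ and (ii) $({\bf r}^s)^\top\bepsilon/(\sigma_\bepsilon\|{\bf r}^s\|_2)\to_d\mathcal{N}(0,1)$; Slutsky's theorem then gives the claim on $\{\hat\A_\lambda=\A_\lambda\}$ and hence unconditionally. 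I would first note $\|{\bf r}^s\|_2>0$ --- so that $T$ is well defined --- which follows from the columns of $\bX$ being in general position (Condition~({\bf M1}), Definition~\ref{def:gp}) and $|\A_\lambda|<n$.

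For (i), each column $\bx_k$ with $k\in\A^\ast\cap\A_\lambda$ lies in the column space of $\bX_{\A_\lambda\backslash\{j\}}$ (here $j\notin\A^\ast$), so it is killed by ${\bf I}_n-{\bf P}^{(\A_\lambda\backslash\{j\})}$ and only the block indexed by $\A^\ast\backslash\A_\lambda$ contributes. A fact I would reuse from the proof of Proposition~\ref{THM:CONSISTENT} is that $\cS^\ast\subseteq\A_\lambda$ --- every strong-signal variable is selected by the noiseless lasso, because $\cS^\ast$ is calibrated against the $\ell_2$ estimation error bound of order $\lambda\sqrt{q^\ast}/\phi^{\ast2}$ for $\bbeta_\lambda-\bbeta^\ast$ supplied by Condition~({\bf E}) --- so that $\A^\ast\backslash\A_\lambda=\A^\ast\backslash(\A_\lambda\cup\cS^\ast)$. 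Cauchy--Schwarz then yields
\[
\frac{\bigl|({\bf r}^s)^\top\bX_{\A^\ast}\bbeta^\ast_{\A^\ast}\bigr|}{\sigma_\bepsilon\|{\bf r}^s\|_2}\le\frac{\|{\bf r}^s\|_2\,\bigl\|\bX_{\A^\ast\backslash(\A_\lambda\cup\cS^\ast)}\bbeta^\ast_{\A^\ast\backslash(\A_\lambda\cup\cS^\ast)}\bigr\|_2}{\sigma_\bepsilon\|{\bf r}^s\|_2}=\frac{\smallO(1)}{\sigma_\bepsilon}\longrightarrow 0,
\]
the final estimate being exactly the strengthened signal condition ({\bf M4$^\ast$}), with $\sigma_\bepsilon$ a fixed positive constant. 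Note that $\|{\bf r}^s\|_2$ cancels, so here only positivity --- not a lower bound --- of $\|{\bf r}^s\|_2$ is needed. This step is precisely where ({\bf M4$^\ast$}) replaces the weaker ({\bf M4}) used for Theorem~\ref{THM:WALD2}: testing $\bbeta^\ast$ in \eqref{eq:linmodel} requires controlling the discrepancy between the full model and the sub-model \eqref{eq:submodel}, unlike inference on $\bbeta^{(\hat\A_\lambda)}$ in \eqref{eq:betaM}.

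For (ii), I would write $({\bf r}^s)^\top\bepsilon=\sum_{i=1}^n r^s_i\epsilon_i$, a sum of independent, mean-zero summands with total variance $\sigma_\bepsilon^2\|{\bf r}^s\|_2^2$, and apply the Lindeberg--Feller central limit theorem. The Lindeberg condition follows from Condition~({\bf S}), namely $\|{\bf r}^s\|_\infty/\|{\bf r}^s\|_2\to 0$, together with the sub-Gaussian tails of ({\bf M2}): on the relevant truncation event $|\epsilon_i|$ exceeds a threshold growing like $\|{\bf r}^s\|_2/\|{\bf r}^s\|_\infty\to\infty$, and sub-Gaussianity makes $\E\bigl[\epsilon_i^2\,1\{|\epsilon_i|>t\}\bigr]$ small uniformly in $i$ as $t\to\infty$. (If $\bepsilon$ is Gaussian, $({\bf r}^s)^\top\bepsilon\sim\mathcal{N}(0,\sigma_\bepsilon^2\|{\bf r}^s\|_2^2)$ exactly and Condition~({\bf S}) can be dropped.) Combining (i) and (ii) with Slutsky's theorem on $\{\hat\A_\lambda=\A_\lambda\}$, and recalling that this event has probability tending to one, gives $T\to_d\mathcal{N}(0,1)$. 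I expect the main obstacle to be assembling the two ingredients borrowed from the analysis of Proposition~\ref{THM:CONSISTENT} --- that $\cS^\ast\subseteq\A_\lambda$ and that $\|{\bf r}^s\|_2>0$ --- and the careful check that ({\bf S}) plus sub-Gaussian errors yield the Lindeberg condition; the bias bound (i) is comparatively routine once ({\bf M4$^\ast$}) is granted.
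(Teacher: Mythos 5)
Your proposal is correct and follows essentially the same route as the paper's proof: invoke Proposition~\ref{THM:CONSISTENT} to replace $\hat\A_\lambda$ by $\A_\lambda$ on an event of probability tending to one, split $S^j$ into a noise term handled by the Lindeberg--Feller CLT under Condition~({\bf S}) and an omitted-variable bias term bounded via Cauchy--Schwarz, $\cS^\ast\subseteq\A_\lambda$ (Lemma~\ref{lem:unbiased}), and ({\bf M4$^\ast$}), then combine with Slutsky. The only cosmetic differences are that you start the decomposition from $\bX_{\A^\ast}\bbeta^\ast_{\A^\ast}$ rather than $\bX_{\A_\lambda}\bbeta^\ast_{\A_\lambda}+\bX_{\A_\lambda^c}\bbeta^\ast_{\A_\lambda^c}$ and verify Lindeberg via sub-Gaussian tail truncation where the paper uses dominated convergence; both are valid.
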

Theorem~\ref{THM:SCORE} states that the distribution of the na\"{i}ve score test statistic $S^{j}$ is asymptotically the same as if $\hat\A_\lambda$ were a fixed set, as opposed to being selected by fitting a lasso model on  the data.   Based on \eqref{eq:Ts}, we reject the null hypothesis $H^\ast_{0,j}:\beta^\ast_j=0$ at level $\alpha>0$ if $|T|>\Phi_{\mathcal{N}}^{-1}(1-\alpha/2)$, where $\Phi^{-1}_{\mathcal{N}}(\cdot)$ is the quantile function of the standard normal distribution function. 

We emphasize that Theorem~\ref{THM:SCORE} holds for any variable $j=1,\dots,p$, and thus can be used to test $H_{0,j}^\ast: \beta_j^\ast = 0$,  for all $j=1,\dots,p$. (This is in contrast to  Theorem~\ref{THM:WALD2}, which concerns  confidence intervals for the parameters in the sub-model \eqref{eq:submodel} consisting of the variables in $\hat\A_\lambda$, and hence holds only for  $j\in\hat\A_\lambda$.)

\section{Numerical Examination of the Na\"{i}ve Score Test}\label{sec:numeric}

In this section, we compare the performance of the na\"{i}ve score test \eqref{eq:lststat}  to three recent proposals from the literature for testing 
$H_{0,j}^\ast: \beta_j^\ast=0$: namely, 
LDPE \citep{ZhangZhang2014LDPE, vandeGeeretal2014LDPE}, SSLasso \citep{javanmard2013confidence}, and the decorrelated score test \citep[dScore;][]{NingLiu2015decor}. (Since with high probability $\hat\A_\lambda \ne \A^\ast$, we do not include the exact post-selection procedure in this comparison.)
 \textsc{R} code for SSLasso, and dScore was provided by the authors; LDPE is implemented in the \textsc{R} package \texttt{hdi}.
 For the na\"{i}ve score test, we estimate $\sigma_\bepsilon$, the standard deviation of the errors in \eqref{eq:linmodel},   using the scaled lasso \citep{SunZhang2012ScaledLasso} or Theorem~\ref{THM:ERRORVAR}.

 All four of these methods require us to select the value of the lasso  tuning parameter. For LDPE, SSLasso, and dScore, we use 10-fold cross-validation to select the tuning parameter value that produces the smallest cross-validated mean square error, $\lambda_{\text{min}}$. As in the numerical study of the na\"{i}ve confidence intervals in Section~\ref{sec:waldsims}, we implement the na\"{i}ve score test using the tuning parameter value $\lambda_{\text{1SE}}$ and $\lambda_{\text{sup}}$. Unless otherwise noted, all tests are performed at a significance level of 0.05.

 In  Section~\ref{sec:power}, we investigate the powers and type-I errors of the above tests in simulation experiments. 
   Section~\ref{sec:realdata} contains an analysis of a glioblastoma gene expression dataset.


\subsection{Power and Type-I Error} \label{sec:power} 

\subsubsection{Simulation Set-Up}

In this section, we adapt the scale-free graph and the stochastic block model presented in  Section~\ref{sec:simsetup-Sec3}  to have $p=500$.

In the scale-free graph setting, we generate a scale-free graph with  $\gamma=5$, edge density 0.05, and $p=500$ nodes. The resulting graph has $|\cE|=6237$ edges. We order the nodes in the graph so that $j$th node is the $(30 \times j)$th least-connected node in the graph, for $1 \leq j \leq 10$. For example, the 4th node is the 120th least-connected node in the graph. 

In the stochastic block model setting, we generate two dense Erd\H{o}s-R\'enyi graphs with ten nodes and 490 nodes, respectively; each has an intra-graph edge density of 0.3. The node indices are ordered so that the nodes in the smaller graph precede those in the larger graph. 
 We then randomly connect nodes between the  two graphs in order to obtain an inter-graph edge density of 0.05. 

Next, for both graph settings, we generate $\bA$ as in \eqref{eq:A},
 where 
 $\rho\in\{0.2,0.6\}$. 
We then set $\bSigma=\bA^{-1}$, and standardize $\bSigma$ so that $\Sigma_{(j, j)}=1$, for all $j=1,\dots,p$. We simulate observations $\bx_1,\ldots,\bx_n \sim_{i.i.d.}\mathcal{N}_p(\bzero, \bSigma)$, and generate the outcome $\by \sim \mathcal{N}_n(\bX\bbeta^\ast, \sigma^2_\bepsilon{\bf I}_n)$, $n \in \{100, 200, 400\}$, where 
$$\beta^\ast_j= \begin{cases}
1 & \mbox { for } 1 \leq j \leq 3\\
 0.1 & \mbox { for } 4 \leq j \leq 10 \\
  0 & \mbox { otherwise } 
\end{cases}.
$$ 
A range of error variances $\sigma_\bepsilon^2$ are used to produce signal-to-noise ratios, $\textrm{SNR}\equiv(\bbeta^{\ast\top}\bSigma\bbeta^\ast)/\sigma^2_\bepsilon\in\{0.1, 0.3, 0.5\}$.

We hold $\bSigma$ and $\bbeta^\ast$ fixed over $B=100$ repetitions of the simulation, while $\bX$ and $\by$ vary.



\subsubsection{Simulation Results}

For each test, the average power  on the strong signal variables, the average power on the weak signal variables, and  the average type-I error rate are defined as 
\begin{align}
\text{Power}_\text{strong}  &\equiv \frac{1}{B}\frac{1}{3} \sum_{b=1}^B \sum_{j:\beta_j^\ast=1} 1\{p_{jb} < 0.05 \}, \label{powers} \\
\text{Power}_\text{weak}  &\equiv \frac{1}{B}\frac{1}{7} \sum_{b=1}^B \sum_{j:\beta_j^\ast=0.1} 1\{p_{jb} < 0.05 \}, \label{powew} \\
\text{Type-1 Error} &\equiv \frac{1}{B}\frac{1}{490}\sum_{b=1}^B \sum_{j:\beta_j^\ast=0} 1\{p_{jb} < 0.05 \}, \label{T1err} 
\end{align}
respectively. 
In \eqref{powers}--\eqref{T1err}, $p_{jb}$ is the $p$-value associated with null hypothesis $H^\ast_{0, j}:\beta_j^\ast=0$ in the $b$th simulated data set. In the simulations, the graphs and $\bbeta^\ast$ are held fixed over $B=100$ repetitions of the simulation study, while $\bX$ and $\by$ vary.  

Tables~\ref{tab:scoreER} and \ref{tab:scoreDSG35} summarize the results in the two simulation settings. Na\"ive score test with $\lambda_{\text{sup}}$ has slightly worse control of type-1 error rate and better power than the other four methods, which have approximate control over the type-I error rate and comparable power. The performance with the scaled lasso is similar to the performance of Theorem~\ref{THM:ERRORVAR}.

\begin{table}[h]
\caption{Average power and type-I error rate for the  hypotheses $H_{0,j}^\ast:\beta_j^\ast=0$ for $j=1,\dots,p$, as defined in \eqref{powers}--\eqref{T1err},    under the scale-free graph  setting with $p=500$.  Results are shown for various values of $\rho$, $n$, $\textrm{SNR}$.  Methods for comparison include LDPE, SSLasso, dScore, and the na\"{i}ve score test with tuning parameter $\lambda_{\text{min}}$, $\lambda_{\text{1SE}}$ and $\lambda_{\text{sup}}$. The error variance is estimated either through the scaled lasso \citep{SunZhang2012ScaledLasso} (S-Z) or Theorem~\ref{THM:ERRORVAR} (T2.2). Note that as mentioned in Section~\ref{sec:waldsims}, we do not combine $\lambda_{\text{sup}}$ with T2.2}
\footnotesize
\centering 
\begin{tabular}{c|c|ccc|ccc|ccc} 
\hline
\hline
& $\rho$ & \multicolumn{9}{c}{0.2} \\
& $n$ & \multicolumn{3}{c|}{100} & \multicolumn{3}{c|}{200} & \multicolumn{3}{c}{400}\\
& \textrm{SNR} & $0.1$ & $0.3$ & $0.5$ & $0.1$ & $0.3$ & $0.5$  & $0.1$ & $0.3$ & $0.5$
\\ 
\hline 
\multirow{5}{*}{$\text{Power}_\text{strong}$}	& LDPE $\lambda_{\text{min}}$	 S-Z		&0.400 & 0.773 & 0.910 & 0.627 & 0.973 & 1.000 & 0.923 & 1.000 & 1.000 \\
  									& SSLasso $\lambda_{\text{min}}$ S-Z	&0.410 & 0.770 & 0.950 & 0.650 & 0.970  & 1.000 & 0.910 & 1.000 & 1.000 \\
 									& dScore $\lambda_{\text{min}}$ S-Z		&0.330 & 0.643 & 0.857 & 0.547 & 0.957 & 1.000 & 0.887 & 1.000 & 1.000 \\
 									& nScore $\lambda_{\text{sup}}$ S-Z 	&0.403 & 0.847 & 0.960 & 0.727 & 0.990 & 0.997 & 0.940 & 1.000 & 1.000 \\
 									& nScore $\lambda_{\text{1SE}}$ S-Z 	&0.427 & 0.763 & 0.893 & 0.677 & 0.977 & 1.000 & 0.957 & 1.000 & 1.000 \\
 									& nScore $\lambda_{\text{1SE}}$ T2.2 	&0.357 & 0.763 & 0.890 & 0.680 & 0.977 & 1.000 & 0.910 & 1.000 & 1.000 \\
\hline
\multirow{5}{*}{$\text{Power}_\text{weak}$}	& LDPE $\lambda_{\text{min}}$ S-Z		&0.064 & 0.083 & 0.056 & 0.054 & 0.059 & 0.079 & 0.070 & 0.079 & 0.113 \\
 									& SSLasso $\lambda_{\text{min}}$ S-Z	&0.081 & 0.087 & 0.060 & 0.066 & 0.061 & 0.086 & 0.069 & 0.086 & 0.113  \\
									& dScore $\lambda_{\text{min}}$ S-Z		&0.044 & 0.056 & 0.036 & 0.039 & 0.039 & 0.060 & 0.046 & 0.056 & 0.093 \\
 									& nScore $\lambda_{\text{sup}}$ S-Z 	&0.061 & 0.091 & 0.109 & 0.070 & 0.109 & 0.107 & 0.097 & 0.103 & 0.114 \\
 									& nScore $\lambda_{\text{1SE}}$ S-Z 	&0.080 & 0.077 & 0.059 & 0.060 & 0.061 &  0.061 & 0.083 & 0.076 & 0.101 \\
 									& nScore $\lambda_{\text{1SE}}$ T2.2 	&0.067 & 0.070 & 0.070 & 0.054 & 0.071 &  0.069 & 0.079 & 0.094 & 0.123 \\
\hline
\multirow{5}{*}{T1 Error} 					& LDPE $\lambda_{\text{min}}$	 S-Z		&0.051 & 0.052 & 0.051 & 0.049 & 0.051 & 0.047 & 0.050 & 0.051 & 0.049 \\
 									& SSLasso $\lambda_{\text{min}}$ S-Z	&0.056 & 0.056 & 0.056 & 0.054 & 0.055 & 0.053 & 0.053 & 0.054 & 0.054 \\
									& dScore $\lambda_{\text{min}}$ S-Z 	&0.035 & 0.040 & 0.040 & 0.033 & 0.036 & 0.034 & 0.035 & 0.037 & 0.034  \\
 									& nScore $\lambda_{\text{sup}}$ S-Z		&0.069 & 0.082 & 0.095 & 0.064 & 0.083 & 0.079 & 0.065 & 0.068 & 0.050  \\
 									& nScore $\lambda_{\text{1SE}}$ S-Z 	&0.061 & 0.057 & 0.048 & 0.056 & 0.055 & 0.040 & 0.060 & 0.046 & 0.046  \\
 									& nScore $\lambda_{\text{1SE}}$ T2.2 	&0.049 & 0.052 & 0.049 & 0.054 & 0.053 & 0.050 & 0.056 & 0.049 & 0.050  \\
\hline 
\hline
& $\rho$ & \multicolumn{9}{c}{0.6} \\
& $n$ & \multicolumn{3}{c|}{100} & \multicolumn{3}{c|}{200} & \multicolumn{3}{c}{400}\\
& \textrm{SNR} & $0.1$ & $0.3$ & $0.5$ & $0.1$ & $0.3$ & $0.5$  & $0.1$ & $0.3$ & $0.5$
\\ 
\hline 
\multirow{5}{*}{$\text{Power}_\text{strong}$} 	& LDPE $\lambda_{\text{min}}$	 S-Z		&0.330 & 0.783 & 0.947 & 0.627 & 0.980 & 1.000 & 0.887 & 1.000 & 1.000 \\
  									& SSLasso $\lambda_{\text{min}}$ S-Z	&0.347 & 0.790 & 0.957 & 0.623 & 0.987  & 1.000 & 0.867 & 1.000 & 1.000 \\
									& dScore $\lambda_{\text{min}}$ S-Z		&0.270 & 0.673 & 0.883 & 0.533 & 0.960 & 0.993 & 0.863 & 1.000 & 1.000 \\
 									& nScore $\lambda_{\text{sup}}$ S-Z		&0.430 & 0.790 & 0.933 & 0.707 & 0.977 & 1.000 & 0.923 & 1.000 & 1.000 \\
 									& nScore $\lambda_{\text{1SE}}$ S-Z	&0.357 & 0.767 & 0.887 & 0.677 & 0.980 & 0.997 & 0.937 & 1.000 & 1.000 \\
 									& nScore $\lambda_{\text{1SE}}$ T2.2	&0.340 & 0.697 & 0.907 & 0.637 & 0.973 & 1.000 & 0.950 & 1.000 & 1.000 \\
\hline
\multirow{5}{*}{$\text{Power}_\text{weak}$}	& LDPE $\lambda_{\text{min}}$ S-Z		&0.031 & 0.046 & 0.063 & 0.064 & 0.074 & 0.076 & 0.054 & 0.077 & 0.119 \\
 									& SSLasso $\lambda_{\text{min}}$ S-Z	&0.047 & 0.063 & 0.076 & 0.063 & 0.090 & 0.099 & 0.053 & 0.076 & 0.121  \\
									& dScore $\lambda_{\text{min}}$ S-Z		&0.021 & 0.037 & 0.047 & 0.036 & 0.060 & 0.044 & 0.034 & 0.050 & 0.083 \\
 									& nScore $\lambda_{\text{sup}}$ S-Z 	&0.071 & 0.089 & 0.136 & 0.081 & 0.121 &  0.104 & 0.114 & 0.113 & 0.123 \\
 									& nScore $\lambda_{\text{1SE}}$ S-Z 	&0.039 & 0.060 & 0.050 & 0.076 & 0.074 &  0.066 & 0.070 & 0.067 & 0.104 \\
 									& nScore $\lambda_{\text{1SE}}$T2.2 	&0.056 & 0.070 & 0.073 & 0.093 & 0.087 &  0.080 & 0.107 & 0.096 & 0.111 \\
\hline
\multirow{5}{*}{T1 Error}					& LDPE $\lambda_{\text{min}}$ S-Z	 	&0.050 & 0.051 & 0.051 & 0.050 & 0.049 & 0.051 & 0.051 & 0.050 & 0.047 \\
 									& SSLasso $\lambda_{\text{min}}$ S-Z	&0.056 & 0.056 & 0.056 & 0.054 & 0.055 & 0.053 & 0.053 & 0.054 & 0.054 \\
									& dScore $\lambda_{\text{min}}$ S-Z		&0.033 & 0.036 & 0.034 & 0.031 & 0.031 & 0.035 & 0.036 & 0.035 & 0.033  \\
 									& nScore $\lambda_{\text{sup}}$ S-Z 	&0.065 & 0.080 & 0.093 & 0.064 & 0.084 & 0.088 & 0.070 & 0.071 & 0.054  \\
 									& nScore $\lambda_{\text{1SE}}$ S-Z 	&0.056 & 0.060 & 0.045 & 0.061 & 0.051 & 0.040 & 0.058 & 0.048 & 0.047  \\
 									& nScore $\lambda_{\text{1SE}}$ T2.2 	&0.049 & 0.053 & 0.051 & 0.060 & 0.061 & 0.049 & 0.066 & 0.049 & 0.048  \\
\hline \hline
\end{tabular}
\label{tab:scoreER}
\end{table}

\begin{table}[h]
\caption{Average power and type-I error rate for the  hypotheses $H_{0,j}^\ast:\beta_j^\ast=0$ for $j=1,\dots,p$, as defined in \eqref{powers}--\eqref{T1err},    under the stochastic block model  setting with $p=500$. Details are as in Table~\ref{tab:scoreER}. }
\footnotesize
\centering 
\begin{tabular}{c|c|ccc|ccc|ccc} 
\hline
\hline
& $\rho$ & \multicolumn{9}{c}{0.2} \\
& $n$ & \multicolumn{3}{c|}{100} & \multicolumn{3}{c|}{200} & \multicolumn{3}{c}{400}\\
& \textrm{SNR} & $0.1$ & $0.3$ & $0.5$ & $0.1$ & $0.3$ & $0.5$  & $0.1$ & $0.3$ & $0.5$
\\ 
\hline 
\multirow{5}{*}{$\text{Power}_\text{strong}$} 		& LDPE $\lambda_{\text{min}}$	 S-Z		&0.370 & 0.793 & 0.937 & 0.687 & 0.990 & 1.000 & 0.914 & 1.000 & 1.000 \\
  										& SSLasso $\lambda_{\text{min}}$ S-Z	&0.393 & 0.803 & 0.933 & 0.687 & 0.990 & 1.000 & 0.892 & 1.000 & 1.000 \\
										& dScore $\lambda_{\text{min}}$ S-Z		&0.333 & 0.783 & 0.917 & 0.693 & 0.993 & 1.000 & 0.905 & 1.000 & 1.000 \\
 										& nScore $\lambda_{\text{sup}}$ S-Z 	&0.473 & 0.857 & 0.953 & 0.697 & 0.993 & 0.993 & 0.943 & 1.000 & 1.000 \\
 										& nScore $\lambda_{\text{1SE}}$ S-Z 	&0.400 & 0.797 & 0.903 & 0.713 & 0.997 & 1.000 & 0.910 & 1.000 & 1.000 \\
 										& nScore $\lambda_{\text{1SE}}$ T2.2 	&0.437 & 0.787 & 0.923 & 0.720 & 0.987 & 1.000 & 0.940 & 1.000 & 1.000 \\
\hline
\multirow{5}{*}{$\text{Power}_\text{weak}$}		& LDPE $\lambda_{\text{min}}$ S-Z		&0.041 & 0.044 & 0.051 & 0.057 & 0.050 & 0.071 & 0.050 & 0.093 & 0.071 \\
 										& SSLasso $\lambda_{\text{min}}$  S-Z	&0.054 & 0.056 & 0.074 & 0.071 & 0.056 & 0.089 & 0.071 & 0.101 & 0.101  \\
										& dScore $\lambda_{\text{min}}$ S-Z		&0.037 & 0.044 & 0.057 & 0.060 & 0.046 & 0.077 & 0.056 & 0.101 & 0.094 \\
 										& nScore $\lambda_{\text{sup}}$  S-Z	&0.059 & 0.071 & 0.107 & 0.043 & 0.083 & 0.070 & 0.069 & 0.094 & 0.106 \\
 										& nScore $\lambda_{\text{1SE}}$ S-Z	&0.047 & 0.059 & 0.060 & 0.059 & 0.047 & 0.059 & 0.062 & 0.106 & 0.105 \\
 										& nScore $\lambda_{\text{1SE}}$ T2.2	&0.057 & 0.064 & 0.067 & 0.047 & 0.069 & 0.086 & 0.043 & 0.079 & 0.113 \\
\hline
\multirow{5}{*}{T1ER} 						& LDPE $\lambda_{\text{min}}$	 S-Z		&0.051 & 0.049 & 0.048 & 0.050 & 0.050 & 0.050 & 0.051 & 0.050 & 0.049 \\
 										& SSLasso $\lambda_{\text{min}}$ S-Z	&0.057 & 0.056 & 0.058 & 0.054 & 0.054 & 0.054 & 0.054 & 0.053 & 0.054 \\
										& dScore $\lambda_{\text{min}}$ S-Z 	&0.043 & 0.040 & 0.041 & 0.041 & 0.044 & 0.042 & 0.042 & 0.042 & 0.041  \\
 										& nScore $\lambda_{\text{sup}}$  S-Z	&0.064 & 0.074 & 0.090 & 0.059 & 0.076 & 0.076 & 0.060 & 0.060 & 0.049 \\
 										& nScore $\lambda_{\text{1SE}}$ S-Z	&0.062 & 0.058 & 0.048 & 0.056 & 0.052 & 0.040 & 0.054 & 0.047 & 0.046  \\
 										& nScore $\lambda_{\text{1SE}}$ T2.2	&0.052 & 0.050 & 0.050 & 0.050 & 0.050 & 0.049 & 0.049 & 0.047 & 0.047  \\
\hline 
\hline
& $\rho$ & \multicolumn{9}{c}{0.6} \\
& $n$ & \multicolumn{3}{c|}{100} & \multicolumn{3}{c|}{200} & \multicolumn{3}{c}{400}\\
& \textrm{SNR} & $0.1$ & $0.3$ & $0.5$ & $0.1$ & $0.3$ & $0.5$  & $0.1$ & $0.3$ & $0.5$
\\ 
\hline 
\multirow{5}{*}{$\text{Power}_\text{strong}$}  		& LDPE $\lambda_{\text{min}}$	S-Z		&0.327 & 0.827 & 0.960 & 0.700 & 0.983 & 0.997 & 0.968 & 1.000 & 1.000 \\
  										& SSLasso $\lambda_{\text{min}}$ S-Z	&0.350 & 0.853 & 0.957 & 0.687 & 0.990  & 0.997 & 0.945 &  0.996 & 1.000 \\
										& dScore $\lambda_{\text{min}}$ S-Z		&0.297 & 0.787 & 0.937 & 0.697 & 0.987 & 0.993 & 0.968 & 0.996 & 1.000 \\
 										& nScore $\lambda_{\text{sup}}$ S-Z 	&0.420 & 0.870 & 0.957 & 0.720 & 0.987 & 1.000 & 0.947 & 1.000 & 1.000 \\
 										& nScore $\lambda_{\text{1SE}}$ S-Z	&0.350 & 0.800 & 0.927 & 0.717 & 0.980 & 1.000 & 0.968 & 1.000 & 1.000 \\
 										& nScore $\lambda_{\text{1SE}}$ T2.2 	&0.373 & 0.797 & 0.890 & 0.653 & 0.980 & 1.000 & 0.917 & 1.000 & 1.000 \\
\hline
\multirow{5}{*}{$\text{Power}_\text{weak}$}		& LDPE $\lambda_{\text{min}}$	S-Z		&0.043 & 0.049 & 0.046 & 0.041 & 0.077 & 0.063 & 0.053 & 0.066 & 0.083 \\
 										& SSLasso $\lambda_{\text{min}}$ S-Z	&0.061 & 0.054 & 0.070 & 0.053 & 0.086 & 0.083 & 0.067 & 0.099 & 0.105  \\
										& dScore $\lambda_{\text{min}}$ S-Z		&0.044 & 0.047 & 0.046 & 0.040 & 0.081 & 0.069 & 0.063 & 0.077 & 0.098 \\
 										& nScore $\lambda_{\text{sup}}$ S-Z 	&0.054 & 0.073 & 0.093 & 0.063 & 0.093 & 0.094 & 0.061 & 0.094 & 0.096 \\
 										& nScore $\lambda_{\text{1SE}}$ S-Z	&0.059 & 0.056 & 0.044 & 0.054 & 0.087 &  0.074 & 0.067 & 0.086 & 0.103 \\
 										& nScore $\lambda_{\text{1SE}}$ T2.2	&0.046 & 0.054 & 0.060 & 0.059 & 0.064 & 0.084 & 0.051 & 0.087 & 0.116 \\
\hline
\multirow{5}{*}{T1 Error} 						& LDPE $\lambda_{\text{min}}$	S-Z		&0.049 & 0.049 & 0.049 & 0.049 & 0.050 & 0.049 & 0.049 & 0.047 & 0.048 \\
 										& SSLasso $\lambda_{\text{min}}$ S-Z	&0.057 & 0.056 & 0.056 & 0.053 & 0.054 & 0.054 & 0.053 & 0.053 & 0.053 \\
										& dScore $\lambda_{\text{min}}$ S-Z		&0.033 & 0.039 & 0.036 & 0.031 & 0.033 & 0.033 & 0.032 & 0.030 & 0.031  \\
 										& nScore $\lambda_{\text{sup}}$ S-Z 	&0.063 & 0.079 & 0.089 & 0.062 & 0.077 & 0.075 & 0.060 & 0.062 & 0.048 \\
 										& nScore $\lambda_{\text{1SE}}$ S-Z	&0.057 & 0.051 & 0.047 & 0.056 & 0.049 & 0.039 & 0.055 & 0.045 & 0.046  \\
 										& nScore $\lambda_{\text{1SE}}$ T2.2	&0.051 & 0.051 & 0.051 & 0.049 & 0.048 & 0.046 & 0.047 & 0.045 & 0.045  \\
\hline \hline
\end{tabular}
\label{tab:scoreDSG35}
\end{table}

\subsection{Application to Glioblastoma Data}\label{sec:realdata}

We investigate a glioblastoma gene expression data set previously studied 
 in \citet{Horvathetal2006Data}. For each of $130$ patients, a survival outcome is available; we removed the twenty patients who were still alive at the end of the study. This resulted in a data set with $n=110$ observations. The gene expression measurements were normalized using the method of  \citet{Gautieretal2004affy}.
    We limited our analysis to $p=3600$ highly-connected genes  \citep{ZhangHorvath2005GBM, HorvathDong2008GBM}. The normalized data can be found at the website of Dr. Steve Horvath of UCLA Biostatistics.
     We log-transformed the survival response and centered it to have mean zero. Furthermore, we log-transformed the expression data, and then standardized each gene to have mean zero and standard deviation one
 across the $n=110$ observations.

  Our goal is to identify individual genes whose expression levels are associated with survival time, after adjusting for the other $3599$ genes in the data set. 
  With family-wise error rate (FWER) controlled at  level 0.1 using the Holm procedure \citep{Holm1979FWER}, 
 the na\"ive score test identifies three such genes: CKS2, H2AFZ, and RPA3.  \citet{Youetal2015CKS2} observed that CKS2 is highly expressed in glioma. \citet{Vardabassoetal2014hist} found  that histone genes, of which H2AFZ is one, are related to cancer progression. \citet{Jinetal2015RPA3} found that RPA3  is associated with glioma development. As a comparison, SSLasso finds two genes associated with patient survival: PPAP2C and RGS3. LDPE and dScore  
identify no genes at FWER of $0.1$.

\section{Discussion}\label{sec:disc}

In this paper, we examined  a  very na\"{i}ve two-step approach to high-dimensional inference: 
\begin{enumerate}
\item  Perform the lasso in order to select a small set of variables, $\hat\A_\lambda$. 
\item  Fit a least squares regression model using just the variables in $\hat\A_\lambda$, and make use of standard regression inference tools. Make no adjustment for the fact that $\hat\A_\lambda$ was selected based on the data.
\end{enumerate}
 It seems clear that this na\"{i}ve approach is problematic, since we have peeked at the data twice, but are not accounting for this double-peeking in our analysis.

In this paper, we have shown that under certain assumptions, $\hat\A_\lambda$ converges with high probability to a  deterministic set, $\A_\lambda$. A similar result for random design matrix is presented in \citet{Zhaoetal2019DCA}. This key insight allows us to establish that the confidence intervals resulting from the aforementioned na\"{i}ve two-step approach have  asymptotically correct coverage, in the sense of \eqref{eq:goalofposi}. This constitutes a theoretical justification for the recent simulation findings of  \citet{LeebetalPoSI2015}. 
  Furthermore, we used this key insight in order to establish that the   score test that results from the na\"{i}ve two-step approach  has asymptotically the same distribution as though the selected set of variables had been fixed in advance; thus, it can be used to test the null hypothesis $H_{0,j}^\ast:\beta^\ast_j=0$, $j=1,\dots,p$. 
  
  Our simulation results corroborate our theoretical findings. In fact, we find essentially no difference between the empirical performance of these na\"{i}ve proposals, and a host of other recent proposals in the literature for high-dimensional inference \citep{javanmard2013confidence, ZhangZhang2014LDPE, vandeGeeretal2014LDPE, lee2015exact, NingLiu2015decor}.

From a bird's-eye view, the recent literature on high-dimensional inference falls into two camps. The work of \citet{WassermanRoeder2009posi, Meinshausenetal2009posi, berk2013valid, lee2015exact, Tibshiranietal2016PoSI} focuses on  performing inference on the sub-model \eqref{eq:submodel}, whereas the work of \citet{JavanmardMontanari2013GLMSSLasso, javanmard2013confidence, JavanmardMontanari2014SDLTheory, ZhangZhang2014LDPE, vandeGeeretal2014LDPE, ZhaoShojaie2015Grace, NingLiu2015decor} focuses on testing hypotheses associated with \eqref{eq:linmodel}. In this paper, we have shown that the confidence intervals that result from the na\"{i}ve approach can be used to perform inference on the sub-model \eqref{eq:submodel}, whereas the score test that results from the na\"{i}ve approach can be used to test hypotheses associated with \eqref{eq:linmodel}.

 In the era of big data, simple analyses that are easy to apply and easy to understand are especially attractive to scientific investigators. Therefore, a careful investigation of such simple approaches is worthwhile, in order to determine which ones have the potential to yield accurate results, and which do not. 
 We do not advocate applying the na\"{i}ve two-step approach described above in most practical data analysis settings: we are confident that in practice, our intuition is correct, and this approach will perform poorly when the sample size is small or moderate, and/or the assumptions, which are unfortunately unverifiable, are not met. 
   However, in very large data settings, our results suggest that this na\"{i}ve approach may indeed be viable for high-dimensional inference, or at least warrants further investigation. 

When choosing among existing inference procedures based on lasso, the \emph{target of inference} should be taken into consideration. The target of inference can either be the population parameters, $\bbeta^\ast$ in \eqref{eq:linmodel}, or the parameters induced by the sub-model chosen by lasso, $\bbeta^{(\cM)}$ in \eqref{eq:betaM}. Sample-splitting \citep{WassermanRoeder2009posi, Meinshausenetal2009posi} and exact post selection \citep{lee2015exact, Tibshiranietal2016PoSI} methods provide valid inferences for $\bbeta^{(\cM)}$. The latter is a particularly appealing choice for inference on $\bbeta^{(\cM)}$, as it provides non-asymptotic confidence intervals under minimal assumptions. However, as we discussed in Section~\ref{sec:intro}, $\bbeta^{(\cM)}$ is, in general, different from $\bbeta^\ast_{\cM}$. A set of sufficient conditions for $\bbeta^{(\cM)} = \bbeta^\ast_{\cM}$ is the irrepresentable condition together with a beta-min condition. Unfortunately, these assumptions are unverifiable and may not hold in practice. 
Our theoretical analysis and empirical studies suggests that the na\"{i}ve two-step approach described above facilitates inference for $\bbeta^\ast_{\cM}$ under less stringent assumptions and without any conditioning or sample splitting. However, this method is also asymptotic and relies on unverifiable assumptions. 
Debiased lasso tests \citep{ZhangZhang2014LDPE, vandeGeeretal2014LDPE, JavanmardMontanari2013GLMSSLasso, javanmard2013confidence, NingLiu2015decor} provide asymptotically valid inference for entries of $\bbeta^\ast$, without requiring a beta-min or irrepresentable condition. However, they require more restrictive sparsity of $\bbeta^\ast$, as well as sparsity of the inverse covariance matrix of covariates, $\bSigma^{-1}$, which are also unverifiable. These limitations underscore the importance of recent efforts to relax these sparsity assumptions \citep[e.g.][]{zhu2018,wang2020}. 

We close with some suggestions for future research. One reviewer brought up an interesting point: methods with folded-concave penalties \citep[e.g.,][]{fan2001variable, Zhang2010MCP} require milder conditions to achieve variable selection consistency, i.e., $\Pr[\hat\A_\lambda=\A^\ast]\to1$, than the lasso. Inspired by this observation, we wonder whether the proposal of \citet{fan2001variable} and \citet{Zhang2010MCP} also require milder conditions to achieve $\Pr[\hat\A_\lambda=\A_\lambda]\to1$. If so, then we could replace the lasso with a folded-concave penalty in the variable selection step, and improve the robustness of the na\"ive approaches. We believe this could be a fruitful area of future research. In addition, extending the proposed theory and methods to generalized linear models  and M-estimators may also be promising areas for future research.


\appendix
\section{Proof of Proposition~\ref{THM:CONSISTENT}} \label{sec:pfconsistent}


We first state and prove Lemmas~\ref{basiclemma2}--\ref{T}, which are required  to prove Proposition~\ref{THM:CONSISTENT}. 

\begin{lemma}\label{basiclemma2}
Suppose ({\bf M1}) holds. Then, $\bbeta_\lambda$ and $\hat\bbeta_\lambda$ as defined in \eqref{eq:nllasso} and \eqref{eq:lasso}, respectively, are unique.
\end{lemma}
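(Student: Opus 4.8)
The plan is to reduce both statements to the known uniqueness result for the lasso under general position, and the work is essentially bookkeeping. Recall from \citet{Rossetetal2004unique, Dossal2012unique, Tibshirani2013} that if the columns of a design matrix $\bX$ are in general position (Definition~\ref{def:gp}), then for every $\bu\in\mathbb{R}^n$ and every $\lambda>0$ the problem $\argmin_{\bb\in\mathbb{R}^p}\{\frac{1}{2n}\|\bu-\bX\bb\|_2^2+\lambda\|\bb\|_1\}$ has a unique minimizer. Condition ({\bf M1}) supplies exactly this general-position hypothesis on $\bX$, so the whole proof amounts to recognizing that the two objectives in question are of this form.

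First I would dispose of $\hat\bbeta_\lambda$: its defining problem \eqref{eq:lasso} is literally the displayed lasso problem with $\bu=\by$, so uniqueness is immediate from the cited result.

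Next I would rewrite the noiseless objective \eqref{eq:nllasso} so that it, too, is a lasso objective. Using $\by=\bX\bbeta^\ast+\bepsilon$ together with $\E[\epsilon_i]=0$ and $\Var[\epsilon_i]=\sigma_\bepsilon^2$, the bias--variance decomposition gives
\begin{align*}
\E\left[\left\|\by-\bX\bb\right\|_2^2\right]=\left\|\bX\bbeta^\ast-\bX\bb\right\|_2^2+n\sigma_\bepsilon^2,
\end{align*}
hence
\begin{align*}
\frac{1}{2n}\E\left[\left\|\by-\bX\bb\right\|_2^2\right]+\lambda\left\|\bb\right\|_1
=\frac{1}{2n}\left\|\bX\bbeta^\ast-\bX\bb\right\|_2^2+\lambda\left\|\bb\right\|_1+\frac{\sigma_\bepsilon^2}{2}.
\end{align*}
The additive constant $\sigma_\bepsilon^2/2$ is irrelevant to the minimizer, so $\bbeta_\lambda$ solves the lasso problem with ``response'' $\bu=\bX\bbeta^\ast\in\mathbb{R}^n$; applying the general-position uniqueness result once more yields uniqueness of $\bbeta_\lambda$.

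I do not anticipate a genuine obstacle: the content is the citation of the general-position uniqueness theorem plus the elementary observation that the noiseless lasso objective is, up to an additive constant, an ordinary lasso objective. The only point requiring a little care is confirming that the result in the cited references is stated for an arbitrary fixed response vector in $\mathbb{R}^n$ (not merely for the observed $\by$), so that it may legitimately be invoked with $\bu=\bX\bbeta^\ast$; this is indeed how those results are formulated.
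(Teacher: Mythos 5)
Your proof is correct, and it takes a genuinely different and more economical route than the paper's. The paper handles $\hat\bbeta_\lambda$ exactly as you do (citing the general-position uniqueness result), but for $\bbeta_\lambda$ it re-derives the whole uniqueness argument from scratch in the noiseless setting: it first shows the fitted value $\bX\bbeta_\lambda$ is unique via strict convexity, then that the subgradient $\btau_\lambda$ and the equicorrelation set $\cT_\lambda\equiv\{j:|\tau_{\lambda,j}|=1\}$ are unique, reduces the problem to showing $\mathrm{null}(\bX_{\cT_\lambda})=\{\bzero\}$, and finally derives that null-space condition from general position by a contradiction argument in which some $\tau_{\lambda,j}\bx_j$ would be an affine combination of the other $\tau_{\lambda,k}\bx_k$. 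You instead observe that
\begin{align*}
\E\left[\left\|\by-\bX\bb\right\|_2^2\right]=\left\|\bX\bbeta^\ast-\bX\bb\right\|_2^2+n\sigma_\bepsilon^2,
\end{align*}
so the noiseless lasso is, up to an additive constant, an ordinary lasso with deterministic response $\bX\bbeta^\ast$, and you invoke the same citation a second time. Your reduction is valid: the uniqueness result in \citet{Tibshirani2013} (Lemma 3 there) is indeed stated for an arbitrary response vector and any $\lambda>0$, so applying it with response $\bX\bbeta^\ast$ is legitimate, and your bias--variance identity is exact because $\bX$ is deterministic and $\E[\bepsilon]=\bzero$. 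What the paper's longer argument buys is self-containedness and the explicit representation of $\bbeta_{\lambda,\cT_\lambda}$ in terms of the pseudoinverse, which parallels machinery reused elsewhere in the appendix; what your argument buys is brevity and the structural insight that the noiseless lasso is simply the lasso run on noiseless data.
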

\begin{proof} 
The proof follows from Lemma 3 of \citet{Tibshirani2013} and is thus omitted.
\end{proof}


\begin{lemma}\label{basiclemma}
Suppose ({\bf M1}) holds. Then, 
\[
\frac{1}{n}\left\|\bX^\top\bepsilon\right\|_\infty=\mathcal{O}_p\left(\sqrt{\frac{\log(p)}{n}}\right).
\]
\end{lemma}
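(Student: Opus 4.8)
The plan is to control each coordinate of $\bX^\top\bepsilon$ separately by a sub-Gaussian tail bound, and then take a union bound over $j=1,\dots,p$. Fix $j$, so that the $j$-th entry is $\bx_j^\top\bepsilon=\sum_{i=1}^n X_{(i,j)}\epsilon_i$, a weighted sum of independent mean-zero random variables. The first step is to convert the raw tail condition of ({\bf M2}), namely $\Pr[|\epsilon_i|>x]<\exp(1-hx^2)$, into control of the moment generating function: there is a constant $K>0$ depending only on $h$ such that $\E[\exp(s\epsilon_i)]\leq\exp(Ks^2)$ for all $s\in\mathbb{R}$ and all $i$. This is the standard equivalence between sub-Gaussian tail decay and MGF control; it follows by writing $\E[\exp(s\epsilon_i)]=1+\int_0^\infty s e^{sx}\Pr[\epsilon_i>x]\,dx+(\text{symmetric term})$ and bounding the integrals with the Gaussian-type tail, using $\E[\epsilon_i]=0$.

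Next, by independence of the $\epsilon_i$, for fixed $j$ we have
\[
\E\!\left[\exp\!\left(s\,\bx_j^\top\bepsilon\right)\right]=\prod_{i=1}^n\E\!\left[\exp\!\left(sX_{(i,j)}\epsilon_i\right)\right]\leq\exp\!\left(Ks^2\sum_{i=1}^n X_{(i,j)}^2\right)=\exp\!\left(Ks^2 n\right),
\]
where the last equality uses $\|\bx_j\|_2^2=n$ from ({\bf M1}). Hence $\bx_j^\top\bepsilon$ is sub-Gaussian with variance proxy of order $n$, and a Chernoff bound yields $\Pr[|\bx_j^\top\bepsilon|>t]\leq 2\exp(-t^2/(4Kn))$ for every $t>0$.

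Finally, a union bound gives $\Pr\big[\max_{1\leq j\leq p}|\bx_j^\top\bepsilon|>t\big]\leq 2p\exp(-t^2/(4Kn))$. For any $\delta>0$, taking $t=C_\delta\sqrt{n\log p}$ with $C_\delta$ chosen large enough (so that $2p^{\,1-C_\delta^2/(4K)}\leq\delta$ for all large $n$) forces this probability below $\delta$; thus $\max_j|\bx_j^\top\bepsilon|=\mathcal{O}_p(\sqrt{n\log p})$, and dividing by $n$ gives $\frac{1}{n}\|\bX^\top\bepsilon\|_\infty=\mathcal{O}_p(\sqrt{\log(p)/n})$, as claimed. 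The only step that requires genuine care is the first one — passing from the tail condition in ({\bf M2}) to a clean MGF bound with a universal constant; everything afterward is the textbook maximal-inequality argument.
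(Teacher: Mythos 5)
Your proof is correct and follows essentially the same route as the paper's: pass from the tail condition in ({\bf M2}) to a moment generating function bound (the paper cites Lemma 5.5 of Vershynin (2012) for this equivalence, where you sketch the integral argument directly), use independence and $\|\bx_j\|_2^2=n$ to get a sub-Gaussian bound on each $\bx_j^\top\bepsilon$, then apply a Chernoff bound, a union bound over $j$, and choose the constant in $t\asymp\sqrt{n\log p}$ large enough to drive the probability below any prescribed $\delta$. The only cosmetic difference is that the paper works with the normalized quantity $T_j=\bx_j^\top\bepsilon/\sqrt{n}$ throughout; the substance is identical.
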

\begin{proof}
It is well understood that with an application of the union bound to the standard exponential inequality, ({\bf M1}) implies
\[
\frac{1}{n}\left\|\bX^\top\bepsilon\right\|_\infty\leq C\sqrt{\frac{2\log(p)}{n}}
\]
for some constant $C$ not depending on $p$ or $n$ \citep{Vershynin2012subGaussian}.
\end{proof}


\begin{lemma}\label{lem:qlambda}
Suppose ({\bf M1}), ({\bf M2}) and ({\bf E}) hold. Then $q_\lambda =\mathcal{O}(q^\ast)$. 
\end{lemma}
\begin{proof}
Recall that $b_{\lambda\min}\equiv\min_{j\in\A_\lambda}|\beta_{\lambda,j}|$. Hence
\begin{align}
b_{\lambda\min}^2 \left|\A_\lambda\backslash\A^\ast\right| \leq \left\|\bbeta_{\lambda, \A_\lambda\backslash\A^\ast}\right\|_2^2 &= \left\|\bbeta_{\lambda, \A_\lambda\backslash\A^\ast} - \bbeta^\ast_{\A_\lambda\backslash\A^\ast}\right\|_2^2 \leq \left\|\bbeta_{\lambda} - \bbeta^\ast\right\|_2^2.
\end{align}

With ({\bf E}), Lemma 2.1 in \citet{vandeGeerBuhlmann2009} shows that 
\begin{align}
\left\|\bbeta^\ast-\bbeta_\lambda\right\|_2\leq\frac{2\lambda\sqrt{2q^\ast}}{\phi^{\ast 2}},
\end{align}
i.e.,
\begin{align}
b_{\lambda\min}^2 \left|\A_\lambda\backslash\A^\ast\right| = \mathcal{O}\left(\lambda^2q^\ast\right).
\end{align}
Since $b_{\lambda\min} \succeq \lambda$, the above equation implies that $|\A_\lambda\backslash\A^\ast| = \mathcal{O}(q^\ast)$, which completes the proof.
\end{proof}

\begin{lemma}\label{mainlemmaa1}
Suppose ({\bf M1}) and ({\bf E}) hold. Then, the estimator $\hat \bbeta_\lambda$ defined in \eqref{eq:lasso} and its population version, $\bbeta_\lambda$, defined in \eqref{eq:nllasso} satisfy 
$$
\left\|\hat \bbeta_\lambda - \bbeta_{\lambda}\right\|_2 = \mathcal{O}_p\left(\sqrt{\frac{\log(p)}{n}}\right).
$$
\end{lemma}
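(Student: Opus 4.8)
The plan is to work with the difference $\delta \equiv \hat\bbeta_\lambda - \bbeta_\lambda$, exploiting two things: that $\delta$ is supported on the (data-dependent, but small) set $\hat\A_\lambda\cup\A_\lambda$, and that the only stochastic quantity distinguishing the two optimization problems is the linear term $\bepsilon^\top\bX\delta$. (Uniqueness of $\hat\bbeta_\lambda,\bbeta_\lambda$ from Lemma~\ref{basiclemma2} is used implicitly throughout.)

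First I would bound the cardinality of $\hat\A_\lambda\cup\A_\lambda$. For the noiseless lasso, the stationarity relation \eqref{eq:stationarynl} gives $\hat\bSigma(\bbeta^\ast-\bbeta_\lambda)=\lambda\btau_\lambda$ with $|\tau_{\lambda,j}|=1$ for $j\in\A_\lambda$, so $\lambda^2|\A_\lambda| = \|[\hat\bSigma(\bbeta^\ast-\bbeta_\lambda)]_{\A_\lambda}\|_2^2 \le \|\hat\bSigma\|_2^2\,\|\bbeta^\ast-\bbeta_\lambda\|_2^2$; together with the $\ell_2$-bound $\|\bbeta^\ast-\bbeta_\lambda\|_2 = \mathcal{O}(\lambda\sqrt{q^\ast})$ (inequality \eqref{comp} in the proof of Lemma~\ref{lem:unbiased}, valid under ({\bf E})) this yields $|\A_\lambda| = \mathcal{O}(q^\ast\|\hat\bSigma\|_2^2)$. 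The same argument applied to \eqref{eq:lasso}, using its KKT conditions and the standard lasso $\ell_2$-bound $\|\hat\bbeta_\lambda-\bbeta^\ast\|_2 = \mathcal{O}_p(\lambda\sqrt{q^\ast})$ — valid under ({\bf E}) once one notes, via Lemma~\ref{basiclemma} and ({\bf M3}), that $\lambda \ge \tfrac{2}{n}\|\bX^\top\bepsilon\|_\infty$ with probability tending to one — gives $|\hat\A_\lambda| = \mathcal{O}_p(q^\ast\|\hat\bSigma\|_2^2)$. By the first part of ({\bf E}), $q^\ast\|\hat\bSigma\|_2^2 = \mathcal{O}(\sqrt{q^\ast\log(p)})$, so $\delta$ is supported on a set $\cT \equiv \hat\A_\lambda\cup\A_\lambda$ with $|\cT| = \mathcal{O}_p(\sqrt{q^\ast\log(p)})$.

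Next I would establish the basic inequality. Since $\hat\bbeta_\lambda$ minimizes the objective in \eqref{eq:lasso} and $\bbeta_\lambda$ minimizes the equivalent noiseless objective $\tfrac{1}{2n}\|\bX(\bbeta^\ast-\bb)\|_2^2 + \lambda\|\bb\|_1$, I expand the objective of \eqref{eq:lasso} about $\bbeta_\lambda$, substitute $\by-\bX\bbeta_\lambda = \bX(\bbeta^\ast-\bbeta_\lambda)+\bepsilon$ and the stationarity identity $\hat\bSigma(\bbeta^\ast-\bbeta_\lambda)=\lambda\btau_\lambda$, and use that $\btau_\lambda\in\partial\|\bbeta_\lambda\|_1$, so that $\|\hat\bbeta_\lambda\|_1 \ge \|\bbeta_\lambda\|_1 + \btau_\lambda^\top\delta$ by convexity. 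The terms proportional to $\lambda\btau_\lambda^\top\delta$ cancel and the remaining penalty term is nonnegative, leaving
\[
\frac{1}{2n}\|\bX\delta\|_2^2 \;\le\; \frac{1}{n}\bepsilon^\top\bX\delta \;=\; \frac{1}{n}\bepsilon^\top\bX_{\cT}\delta_{\cT}.
\]

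Finally I would bound the right-hand side and invoke the restricted-eigenvalue part of ({\bf E}). Since $\delta$ is supported on $\cT$, the vector $\bX\delta$ lies in the column span of $\bX_{\cT}$, of dimension at most $|\cT|$; by Cauchy--Schwarz, $\tfrac{1}{n}\bepsilon^\top\bX\delta \le \tfrac{1}{n}\|\bP_{\cT}\bepsilon\|_2\,\|\bX\delta\|_2$, where $\bP_{\cT}$ is the projection onto that span, and the sub-Gaussian tail ({\bf M2}) controls $\|\bP_{\cT}\bepsilon\|_2$ in terms of $|\cT|$ (uniformly over sparse supports), giving $\tfrac{1}{n}\|\bX\delta\|_2^2 = \mathcal{O}_p(|\cT|/n)$. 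On the other hand, because $\|\delta_{\cT^c}\|_1 = 0 \le \|\delta_{\cT}\|_1$, applying ({\bf E}) with $\cI=\cB=\cT$ (admissible since $|\cT| = \mathcal{O}(q^\ast\|\hat\bSigma\|_2^2)$ and $\cB\setminus\cI=\emptyset$) yields $\tfrac{1}{n}\|\bX\delta\|_2^2 = \delta^\top\hat\bSigma\delta \ge \phi^{\ast2}\|\delta\|_2^2$ for large $n$. Combining gives $\|\delta\|_2^2 = \mathcal{O}_p(|\cT|/n) = \mathcal{O}_p\big(\sqrt{q^\ast\log(p)}/n\big)$, i.e.\ the claimed rate. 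The main obstacle is exactly this last step: bounding $\bepsilon^\top\bX\delta$ sharply when $\supp(\delta)$ is random requires simultaneously the sub-Gaussian concentration of ({\bf M2}) for projections of $\bepsilon$ onto data-dependent spans and the cardinality control $|\cT| = \mathcal{O}(q^\ast\|\hat\bSigma\|_2^2) = \mathcal{O}(\sqrt{q^\ast\log(p)})$ supplied by ({\bf E}), and arranging these to combine without inflating the stated rate is the delicate point.
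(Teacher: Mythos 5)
Your support-size bounds and your basic inequality are both sound: the KKT argument giving $|\A_\lambda|=\mathcal{O}(q^\ast\|\hat\bSigma\|_2^2)$ is essentially the paper's, your direct KKT derivation of $|\hat\A_\lambda|=\mathcal{O}_p(q^\ast\|\hat\bSigma\|_2^2)$ is a legitimate substitute for the paper's citation of Lemma~2 in \citet{BelloniChernozhukov2013qlambda}, and the cancellation of the $\lambda\btau_\lambda^\top\delta$ terms via the subgradient inequality correctly yields $\tfrac{1}{2n}\|\bX\delta\|_2^2\leq\tfrac{1}{n}\bepsilon^\top\bX\delta$. The gap is exactly at the step you yourself flag as delicate, and it is not a presentational gap but a substantive one: the claim that ({\bf M2}) controls $\|\bP_{\cT}\bepsilon\|_2^2$ by $\mathcal{O}_p(|\cT|)$ \emph{uniformly over sparse supports} is false. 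For a single fixed $d$-dimensional span one indeed gets $\|\bP\bepsilon\|_2^2=\mathcal{O}_p(d)$, but $\cT\supseteq\hat\A_\lambda$ is correlated with $\bepsilon$, so you must take a supremum over the $\binom{p}{d}$ candidate supports, and the resulting union bound gives $\sup_{|S|\leq d}\|\bP_S\bepsilon\|_2^2=\mathcal{O}_p\left(d\log(p/d)\right)$, not $\mathcal{O}_p(d)$. With $d=|\cT|\asymp\sqrt{q^\ast\log(p)}$ this yields $\tfrac{1}{n}\|\bX\delta\|_2^2=\mathcal{O}_p\left(\sqrt{q^\ast\log(p)}\,\log(p)/n\right)$, which overshoots the claimed rate by a factor of $\log(p)$; the alternative crude bound $\tfrac{1}{n}\bepsilon^\top\bX\delta\leq\tfrac{1}{n}\|\bX^\top\bepsilon\|_\infty\sqrt{|\cT|}\,\|\delta\|_2$ loses the same factor. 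So your argument, made rigorous, proves a strictly weaker rate than the lemma states.

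The paper sidesteps this empirical-process difficulty entirely by invoking Theorem~2.1 of \citet{vandeGeer2017}, a comparison theorem between the lasso and its noiseless version, which gives
\begin{equation*}
\left\|\bX\left(\hat\bbeta_\lambda-\bbeta_\lambda\right)\right\|_2\leq\frac{1}{\lambda\sqrt{n}}\left\|\hat\bSigma\right\|_2\left\|\bX\left(\bbeta^\ast-\bbeta_\lambda\right)\right\|_2\,\mathcal{O}_p(1)+\mathcal{O}_p(1)
\end{equation*}
directly; combined with $\|\bX(\bbeta^\ast-\bbeta_\lambda)\|_2=\mathcal{O}(\lambda\sqrt{nq^\ast})$ and $\|\hat\bSigma\|_2=\mathcal{O}((\log(p)/q^\ast)^{1/4})$ this gives $\|\bX\delta\|_2=\mathcal{O}_p((q^\ast\log(p))^{1/4})$ with no spurious logarithm, after which the restricted-eigenvalue step proceeds exactly as in your last paragraph. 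To repair your proof you would need either to import that comparison theorem (or reprove its refined handling of the cross term $\bepsilon^\top\bX\delta$) or to accept the extra $\log(p)$ in the rate; the elementary Cauchy--Schwarz-plus-union-bound route cannot deliver the stated bound.
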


\begin{proof}
Theorem 2.1 in \citet{vandeGeer2017} shows that under ({\bf M1}),  
\begin{align}\label{eq:vandegeer}
\left\|\bX\left(\hat\bbeta_\lambda-\bbeta_\lambda\right)\right\|_2\leq  \frac{1}{\lambda}\sqrt{\left(\bbeta^\ast-\bbeta_\lambda\right)^\top\hat\bSigma^2\left(\bbeta^\ast-\bbeta_\lambda\right)} \mathcal{O}_p(1) + \mathcal{O}_p\left(1\right).
\end{align}

Note there are three differences between \eqref{eq:vandegeer} and the formulation in Theorem 2.1 in \citet{vandeGeer2017}. 
First, \citet{vandeGeer2017} assumes $\|\bx_j\|_2^2=1, j=1,\dots,p$, whereas we assume $\|\bx_j\|_2^2=n$. This difference leads to the change of a $\sqrt{n}$ factor in the first term in \eqref{eq:vandegeer}. 
Second, Theorem 2.1 in \citet{vandeGeer2017} factors out a $\|\hat\bSigma\|_2$ from the square-root, whereas \eqref{eq:vandegeer} does not have this step, and consequently has a power of two on $\hat\bSigma$ within the square-root; see the proof of Theorem 15.2 in  \citet{vandeGeer2017} for more details.
Finally, although Theorem 2.1 in \citet{vandeGeer2017} assumes Gaussian random errors $\bepsilon$, it continues to hold for sub-Gaussian errors. This is because Lemma 15.5 in \citet{vandeGeer2017} can be proven with sub-Gaussian data as shown in \citet{Hsuetal2012}.

Let $q\equiv |\A_\lambda\backslash\A^\ast|$. Then, 
\begin{align}
\left(\bbeta^\ast-\bbeta_\lambda\right)^\top\hat\bSigma^2\left(\bbeta^\ast-\bbeta_\lambda\right)\leq& \left\|\bbeta^\ast-\bbeta_\lambda\right\|_2^2\sup_{\|\bb_{\A^{\ast c}}\|_0\leq q, \|\bb\|_2=1}\bb^\top\hat\bSigma^2\bb\nonumber \\
=& \phi^4(q) \left\|\bbeta^\ast-\bbeta_\lambda\right\|_2^2
\end{align}

With ({\bf E}), Lemma 2.1 in \citet{vandeGeerBuhlmann2009} shows that 
\begin{align}
\left\|\bbeta^\ast-\bbeta_\lambda\right\|_2\leq\frac{2\lambda\sqrt{2q^\ast}}{\phi^{\ast 2}}.
\end{align}
Thus, 
\begin{align}\label{ghjkl}
\frac{1}{\lambda}\sqrt{\left(\bbeta^\ast-\bbeta_\lambda\right)^\top\hat\bSigma^2\left(\bbeta^\ast-\bbeta_\lambda\right)} \leq \frac{1}{\lambda} \sqrt{\phi^4(q)\frac{8\lambda^2 q^\ast}{\phi^{\ast4}}} = \mathcal{O}\left(\frac{\phi^2(q)}{\phi^{\ast2}}\sqrt{q^\ast}\right).
\end{align}
Now, since $q\asymp q^\ast$ by Lemma~\ref{lem:qlambda}, $\phi^2(q)=\mathcal{O}\left(\sqrt{\log(p)/q^\ast}\right)$ by ({\bf E}). Thus, \eqref{ghjkl} implies that 
\begin{align}
\left\|\bX\left(\hat\bbeta_\lambda-\bbeta_\lambda\right)\right\|_2\leq\mathcal{O}\left(\frac{\phi^2(q)}{\phi^{\ast2}}\sqrt{q^\ast}\right)+\mathcal{O}_p\left(1\right)=\mathcal{O}_p\left(\sqrt{\log(p)}\right).
\end{align}

Let $\cI=\supp(\hat \bbeta_\lambda - \bbeta_{\lambda})$. Then, $| \cI | = \|\hat \bbeta_\lambda - \bbeta_{\lambda}\|_0 = \mathcal{O}_p(q^\ast)$. Moreover, $\|\hat \bbeta_{\lambda,\cI^c} - \bbeta_{\lambda,\cI^c}\|_1=0\leq\|\hat \bbeta_{\lambda,\cI} - \bbeta_{\lambda,\cI}\|_1$. Thus, by the restricted eigenvalue condition in ({\bf E}) that $\phi^{\ast 2}$ is a positive constant, 
\begin{align}
\left\|\hat \bbeta_{\lambda} - \bbeta_{\lambda}\right\|_2^2=\mathcal{O}_{p} \left(\frac{\phi^{\ast 2}}{n}\left\|\bX\left(\hat\bbeta_\lambda-\bbeta_\lambda\right)\right\|_2^2\right)= \mathcal{O}_p\left(\frac{\log(p)}{n}\right),
\end{align}
which completes the proof.
\end{proof}

\begin{lemma}\label{T}
Suppose ({\bf M1}), ({\bf M2}) and ({\bf T}) hold. Then
$$
\lim_{n\to\infty}\|\btau_{\lambda, \A_\lambda^c}\|_\infty \leq 1 - \delta,
$$
where 
\begin{align}
\label{eq:stationarya}
\lambda n\btau_{\lambda}&=\bX^\top\left(\bX\bbeta^\ast-\bX\bbeta_{\lambda}\right).
\end{align}
is the stationary condition of \eqref{eq:nllasso}.
\end{lemma}

\begin{proof}
Rearranging terms of \eqref{eq:stationarya}, we get 
\begin{align}\label{eq:tau}
\bbeta_{\lambda,\cA_\lambda} = \left(\bX^\top_{\cA_\lambda}\bX_{\cA_\lambda}\right)^{-1}\left(\bX^\top_{\cA_\lambda}\bX\bbeta^\ast-n\lambda\btau_{\lambda,\cA_\lambda}\right),
\end{align}
and
\begin{align}
\btau_\lambda &= \frac{1}{n\lambda}\left(\bX^\top\left(\bX\bbeta^\ast-\bX_{\cA_\lambda}\bbeta_{\lambda,\cA_\lambda}\right)\right) \nonumber \\
&= \frac{1}{n\lambda}\left(\bX^\top\left(\bX\bbeta^\ast-\bX_{\cA_\lambda}\left(\bX^\top_{\cA_\lambda}\bX_{\cA_\lambda}\right)^{-1}\left(\bX^\top_{\cA_\lambda}\bX\bbeta^\ast-n\lambda\btau_{\lambda,\cA_\lambda}\right)\right)\right) \nonumber \\
&=\frac{1}{n\lambda}\bX^\top\left(\bI-\bP^{(\A_\lambda)}\right)\bX\bbeta^\ast +\bX^\top\bX_{\cA_\lambda}\left(\bX^\top_{\cA_\lambda}\bX_{\cA_\lambda}\right)^{-1}\btau_{\lambda,\cA_\lambda} \nonumber \\
& = \frac{1}{n\lambda}\bX^\top\left(\bI-\bP^{(\A_\lambda)}\right)\bX_{\bA_\lambda^c}\bbeta^\ast_{\bA_\lambda^c} +\bX^\top\bX_{\cA_\lambda}\left(\bX^\top_{\cA_\lambda}\bX_{\cA_\lambda}\right)^{-1}\btau_{\lambda,\cA_\lambda},
\end{align}
where $\bP^{(\A_\lambda)}\equiv\bX_{\A_\lambda}(\bX_{\A_\lambda}^\top\bX_{\A_\lambda})^{-1}\bX^\top_{\A_\lambda}$ is the projection matrix onto the column space of $\bX_{\cA_\lambda}$. Therefore,
\begin{align}
\left\|\btau_{\lambda,\cA_\lambda^c}\right\|_\infty &\leq  \frac{1}{n\lambda}\left\|\bX_{\cA_\lambda^c}^\top\left(\bI-\bP^{(\A_\lambda)}\right)\bX_{\bA_\lambda^c}\bbeta^\ast_{\bA_\lambda^c}\right\|_\infty + \left\|\bX_{\cA_\lambda^c}^\top\bX_{\cA_\lambda}\left(\bX^\top_{\cA_\lambda}\bX_{\cA_\lambda}\right)^{-1}\btau_{\lambda,\cA_\lambda}\right\|_\infty \nonumber \\
&< \frac{1}{n\lambda} \max_{j\in\cA_\lambda^c} \left\{\left\|\bx_j\right\|_2\left\|\bX_{\bA_\lambda^c}\bbeta^\ast_{\bA_\lambda^c}\right\|_2\right\} + 1-\delta \nonumber \\
&=1-\delta + \mathcal{O}\left(\sqrt{\frac{\log(p)}{n}}\frac{1}{\lambda}\right)\label{mainT},
\end{align}
where the second inequality is due to the irrepresentable condition in ({\bf T}), and the last equality is due to the assumptions that $\bx_j^\top\bx_j=n$ in ({\bf M1}) and $\|\bX_{\bA_\lambda^c}\bbeta^\ast_{\bA_\lambda^c}\|_2=\mathcal{O}(\sqrt{\log(p)})$ in ({\bf M2}). Given that $\sqrt{\log(p)}/(\sqrt{n}\lambda)\to0$ by ({\bf M1}), we have 
\[
\lim_{n\to\infty}\|\btau_{\lambda,\cA_\lambda^c}\|_\infty \leq 1-\delta.
\]
\end{proof}

\begin{proof}[Proof of Proposition~\ref{THM:CONSISTENT}]
We first prove 
\begin{align}\label{side2}
\lim_{n\to\infty}\Pr\left[\A_\lambda\supseteq\hat\A_\lambda\right]=1. 
\end{align}

According to the stationary conditions of \eqref{eq:lasso},
\begin{align}
\label{eq:stationaryb}
\lambda n\hat\btau_{\lambda}&=\bX^\top\left(\by-\bX\hat\bbeta_{\lambda}\right).
\end{align}
Together with \eqref{eq:stationarya}, we get
\begin{align}\label{eq:tau}
\hat\btau_\lambda-\btau_\lambda = \frac{1}{n\lambda}\bX^\top\bX\left(\bbeta_{\lambda} -\hat\bbeta_{\lambda}\right) + \frac{1}{n\lambda}\bX^\top\beps.
\end{align}

We now bound both terms on the right hand side of \eqref{eq:tau}. By Lemma~\ref{basiclemma}, 
\[
\frac{\left\|\bX^\top\beps\right\|_\infty}{n\lambda}=\mathcal{O}_p\left(\frac{1}{\lambda}\sqrt{\frac{\log(p)}{n}}\right). 
\]
In addition, Lemma~\ref{mainlemmaa1} shows that 
\[
\left\|\bX\left(\hat\bbeta_\lambda-\bbeta_\lambda\right)\right\|_2=\mathcal{O}_p\left(\sqrt{\log(p)}\right).
\]
Therefore,
\begin{align*}
\frac{1}{n\lambda}\left\|\bX^\top\bX\left(\bbeta_{\lambda} -\hat\bbeta_{\lambda}\right)\right\|_\infty &\leq 
\frac{1}{n\lambda} \max_{j} \left\{\left\|\bx_j\right\|_2\left\|\bX\left(\bbeta_{\lambda} -\hat\bbeta_{\lambda}\right)\right\|_2\right\} \\
& =\mathcal{O}_p\left(\frac{1}{\lambda}\sqrt{\frac{\log(p)}{n}}\right).
\end{align*}
 
Therefore, it follows from \eqref{eq:tau} that $\|\btau_\lambda-\hat\btau_\lambda\|_\infty=\mathcal{O}_p\left(\sqrt{\log(p)/n}/\lambda\right)$. Given that $\sqrt{\log(p)/n}/\lambda\to0$ by ({\bf M1}), we have $\|\btau_\lambda-\hat\btau_\lambda\|_\infty=\smallO_p(1)$. Since $\lim_{n\to\infty}\|\btau_{\lambda, \A_\lambda^c}\|_\infty \leq 1 - \delta$ by Lemma~\ref{T}, we have $\lim_{n\to\infty}\Pr[\|\hat\btau_{\lambda, \A^c_\lambda}\|_\infty<1]=1$, and
\begin{align}
\lim_{n\to\infty}\Pr\left[\A_\lambda\supseteq\hat\A_\lambda\right]=1. 
\end{align}

To prove the other direction, we assume without loss of generality that $\A_\lambda\neq\emptyset$; otherwise the result holds trivially. Now, ({\bf M2}) and the fact that $\sqrt{\log(p)}/(\sqrt{n}\lambda)\to0$ by ({\bf M1}) imply that
\begin{align}\label{eq:cont1}
\sqrt{\frac{\log(p)}{n}}\frac{1}{b_{\lambda \min}}\to0.
\end{align}
But, by Lemma~\ref{mainlemmaa1}, $\|\hat \bbeta_\lambda - \bbeta_{\lambda}\|_2 = \mathcal{O}_p\left(\sqrt{\log(p)/n}\right)$. Thus, for any $\xi>0$, there exists a constant $C>0$, not depending on $n$, such that for $n$ sufficiently large, 
\begin{align}\label{eq:cont2}
\Pr\left[\left\|\hat \bbeta_\lambda - \bbeta_{\lambda}\right\|_\infty > C\sqrt{\frac{\log(p)}{n}}\,\right] <\xi.
\end{align}
Moreover, by \eqref{eq:cont1}, for $n$ sufficiently large, $b_{\lambda \min} > C\sqrt{\log(p)/n}$, where $b_{\lambda \min}\equiv\min_{j\in\A_\lambda}|\beta_{\lambda, j}|$. Combining \eqref{eq:cont1} and \eqref{eq:cont2}, for $n$ sufficiently large, whenever $|\beta_{\lambda, j}| > 0$, $|\beta_{\lambda, j}|  > C\sqrt{\log(p)/n}$ and hence $\Pr[|\hat\beta_{\lambda, j}|  > 0]>1-\xi$. Therefore
\begin{align}\label{side1}
\lim_{n\to\infty}\Pr\left[\A_\lambda\subseteq\hat\A_\lambda\right]=1,
\end{align}
which completes the proof.
\end{proof}


\section{Proof of Lemma~\ref{LEM:T}} \label{sec:pfT}

\begin{proof}[Proof of Lemma~\ref{LEM:T}]
We rewrite condition ({\bf T}) as 
\begin{align}
& \quad\left\|\bX_{\A_\lambda^c}^\top\bX_{\A_\lambda}\left(\bX^\top_{\A_\lambda}\bX_{\A_\lambda}\right)^{-1}\sign\left(\bbeta_{\lambda,\A_\lambda}\right)\right\|_\infty \nonumber \\
&= \left\|\bX_{\A_\lambda^c}^\top\bX_{\A_\lambda}\left(\bX^\top_{\A_\lambda}\bX_{\A_\lambda}\right)^{-1}\btau_{\lambda, \A_\lambda}\right\|_\infty \nonumber \\
& = \frac{1}{\lambda n}\left\|\bX_{\A_\lambda^c}^\top\bX_{\A_\lambda}\left(\bX^\top_{\A_\lambda}\bX_{\A_\lambda}\right)^{-1}\bX_{\A_\lambda}^\top\left(\bX\bbeta^\ast-\bX\bbeta_{\lambda}\right)\right\|_\infty \nonumber \\
& = \frac{1}{\lambda n}\left\|\bX_{\A_\lambda^c}^\top\bX_{\A_\lambda}\left(\bX^\top_{\A_\lambda}\bX_{\A_\lambda}\right)^{-1}\bX_{\A_\lambda}^\top\left(\bX_{\A_\lambda}\left(\bbeta^\ast_{\A_\lambda}-\bbeta_{\lambda,\A_\lambda}\right)+\bX_{\A_\lambda^c}\bbeta^\ast_{\A_\lambda^c}\right)\right\|_\infty \nonumber \\
&\leq \frac{1}{\lambda n}\left\|\bX_{\A_\lambda^c}^\top\bX_{\A_\lambda}\left(\bbeta^\ast_{\A_\lambda}-\bbeta_{\lambda,\A_\lambda}\right)\right\|_\infty + \frac{1}{\lambda n}\left\|\bX_{\A_\lambda^c}^\top \bP^{(\A_\lambda)}\bX_{\A_\lambda^c}\bbeta^\ast_{\A_\lambda^c}\right\|_\infty \nonumber.
\end{align}

With ({\bf E}), Lemma 2.1 in \citet{vandeGeerBuhlmann2009} shows that 
\begin{align*}
\left\|\bbeta^\ast-\bbeta_\lambda\right\|_2\leq\frac{2\lambda\sqrt{2q^\ast}}{\phi^{\ast 2}}.
\end{align*}
Based on \eqref{mainT}, ({\bf M2}) implies that 
\[
\frac{1}{n\lambda}\left\|\bX_{\cA_\lambda^c}^\top\bP^{(\A_\lambda)}\bX_{\bA_\lambda^c}\bbeta^\ast_{\bA_\lambda^c}\right\|_\infty =\mathcal{O}\left(\sqrt{\frac{\log(p)}{n}}\frac{1}{\lambda}\right).
\]

Thus,
\begin{align}
\Big\|\bX_{\A_\lambda^c}^\top\bX_{\A_\lambda}&\left(\bX^\top_{\A_\lambda}\bX_{\A_\lambda}\right)^{-1}\sign\left(\bbeta_{\lambda,\A_\lambda}\right)\Big\|_\infty \nonumber \\
&\leq \frac{1}{\lambda n}\left\|\bX_{\A_\lambda^c}^\top\bX_{\A_\lambda}\left(\bbeta^\ast_{\A_\lambda}-\bbeta_{\lambda,\A_\lambda}\right)\right\|_\infty + \frac{1}{\lambda n}\left\|\bX_{\A_\lambda^c}^\top \bP^{(\A_\lambda)}\bX_{\A_\lambda^c}\bbeta^\ast_{\A_\lambda^c}\right\|_\infty \nonumber \\
&\leq \frac{1}{\lambda n}\left\|\bX_{\A_\lambda^c}^\top\bX_{\A_\lambda}\right\|_\infty\left\|\bbeta^\ast_{\A_\lambda}-\bbeta_{\lambda,\A_\lambda}\right\|_\infty + \mathcal{O}\left(\sqrt{\frac{\log(p)}{n}}\frac{1}{\lambda}\right) \nonumber \\
&\leq  \frac{2\sqrt{2q^\ast}}{\phi^{\ast 2} n}\left\|\bX_{\A_\lambda^c}^\top\bX_{\A_\lambda}\right\|_\infty +  \mathcal{O}\left(\sqrt{\frac{\log(p)}{n}}\frac{1}{\lambda}\right).
\end{align}
Given that $\sqrt{\log(p)}/(\sqrt{n}\lambda)\to0$ by ({\bf M1}), condition ({\bf T}) is satisfied if 
\[
\frac{2\sqrt{2q^\ast}}{\phi^{\ast 2} n}\left\|\bX_{\A_\lambda^c}^\top\bX_{\A_\lambda}\right\|_\infty < 1-\delta,
\]
which completes the proof.
\end{proof}

\section{Proof of Theorem~\ref{THM:WALD2}} \label{sec:pfwald1}

\begin{proof}[Proof of Theorem~\ref{THM:WALD2}] By Proposition~\ref{THM:CONSISTENT}, $\Pr[\hat \A_\lambda= \A_\lambda]\to 1$. Therefore, with probability tending to one,
\begin{align}
\tilde\beta_{j}^{(\hat\A_\lambda)}\equiv\left[\left(\bX_{\hat\A_\lambda}^\top\bX_{\hat\A_\lambda}\right)^{-1}\bX_{\hat\A_\lambda}^\top\by\right]_j = \left[\left(\bX_{\A_\lambda}^\top\bX_{\A_\lambda}\right)^{-1}\bX_{\A_\lambda}^\top\by\right]_j. \label{waldolsequiv}
\end{align}
Thus,
\begin{align}
\tilde\beta_{j}^{(\hat\A_\lambda)}&=\left[\left(\bX_{\A_\lambda}^\top\bX_{\A_\lambda}\right)^{-1}\bX_{\A_\lambda}^\top\left(\bX\bbeta^\ast+\bepsilon\right)\right]_j \nonumber\\
& = \left[ \left(\bX_{\A_\lambda}^\top\bX_{\A_\lambda}\right)^{-1}\bX_{\A_\lambda}^\top\bepsilon\right]_j + \left[\left(\bX_{\A_\lambda}^\top\bX_{\A_\lambda}\right)^{-1}\bX_{\A_\lambda}^\top\bX\bbeta^\ast\right]_j . \label{waldolsstat}
\end{align}

We proceed to prove the asymptotic distribution of $[ (\bX_{\A_\lambda}^\top\bX_{\A_\lambda})^{-1}\bX_{\A_\lambda}^\top\bepsilon]_j$. Dividing it by its standard deviation, $\sigma_\bepsilon\sqrt{[(\bX_{\A_\lambda}^\top\bX_{\A_\lambda})^{-1}]_{(j, j)}}$, where $\sigma_\bepsilon$ is the error standard deviation, we get
\begin{align}
\frac{\left[\left(\bX_{\A_\lambda}^\top\bX_{\A_\lambda}\right)^{-1}\bX_{\A_\lambda}^\top\bepsilon\right]_j}{\sigma_\bepsilon\sqrt{\left[\left(\bX_{\A_\lambda}^\top\bX_{\A_\lambda}\right)^{-1}\right]_{(j, j)}}}=\frac{{\bf r}^w\beps}{\sigma_\beps\left\|{\bf r}^w\right\|_2},\label{asynormwald}
\end{align}
where $\br^w \equiv  \be^j(\bX^\top_{\A_\lambda}\bX_{\A_\lambda})^{-1}\bX^\top_{\A_\lambda} \in \mathbb{R}^n$, and $\be^j$ is the row vector of length $|\A_\lambda|$ with the entry that corresponds to $\beta^\ast_j$ equal to one, and zero otherwise. In order to use the Lindeberg-Feller Central Limit Theorem to prove the asymptotic normality of \eqref{asynormwald}, we need to show that the Lindeberg's condition holds, i.e.,
$$
\lim_{n\to\infty} \sum_{i=1}^n\E \left[\frac{ \left(r^w_{i} \epsilon_i\right)^2}{\sigma_\beps^2\left\|{\bf r}^w\right\|^2_2} \mathbf{1}\left\{ \frac{\left|r^w_{i} \epsilon_i\right|}{\sigma_\beps\left\|{\bf r}^w\right\|_2} >  \eta \right \}\right]= 0, \quad \forall \eta > 0.
$$
Given that $|r^w_{i}| \leq \|{\bf r}^w\|_\infty$, and that the $\epsilon_i$'s are identically distributed,
\begin{align*} 
0\leq\sum_{i=1}^n\E \left[\frac{ \left(r^w_{i} \epsilon_i\right)^2}{\sigma_\beps^2\left\|{\bf r}^w\right\|^2_2} \mathbf{1}\left\{ \frac{\left|r^w_{i} \epsilon_i\right|}{\sigma_\beps\left\|{\bf r}^w\right\|_2} >  \eta \right \}\right] &\leq \sum_{i=1}^n\E \left[\frac{ \left(r^w_{i} \epsilon_i\right)^2}{\sigma_\beps^2\left\|{\bf r}^w\right\|^2_2} \mathbf{1}\left\{ \frac{\left| \epsilon_i\right|\left\|{\bf r}^w\right\|_\infty }{\sigma_\beps\left\|{\bf r}^w\right\|_2 } >  \eta \right \} \right]\\
&=\sum_{i=1}^n \frac{r^{w2}_{i} }{\sigma_\beps^2\left\|{\bf r}^w \right\|_2^2} \E\left[ \epsilon_i^2\mathbf{1}\left\{ \frac{\left| \epsilon_i\right|\left\|{\bf r}^w\right\|_\infty }{\sigma_\beps\left\|{\bf r}^w\right\|_2 } >  \eta \right \} \right] \\
& = \frac{1}{\sigma_\beps^2} \E\left[ \epsilon_1^2 \mathbf{1}\left\{ \frac{\left| \epsilon_1\right|\left\|{\bf r}^w\right\|_\infty }{\sigma_\beps\left\|{\bf r}^w\right\|_2 } >  \eta \right \} \right].
\end{align*}
Since $\|{\bf r}^w\|_\infty/\|{\bf r}^w\|_2\rightarrow 0$ by Condition ({\bf W}), $\epsilon_1^2 \mathbf{1}\left\{ | \epsilon_1|\|{\bf r}^w\|_\infty/( \sigma_\beps\|{\bf r}^w\|_2)  >  \eta \right \} \rightarrow_p 0$. Thus, because $\epsilon_1^2 \geq \epsilon_1^2 \mathbf{1}\left\{ | \epsilon_1|\|{\bf r}^w\|_\infty/( \sigma_\beps\|{\bf r}^w\|_2)  >  \eta \right \}$ with probability one and $\E[\epsilon_1^2]=\sigma_\epsilon^2<\infty$, we use $\epsilon_1^2$ as the dominant random variable, and apply the Dominated Convergence Theorem, 
$$
\lim_{n\to\infty}\frac{1}{\sigma_\beps^2} \E\left[ \epsilon_1^2 \mathbf{1}\left\{ \frac{\left| \epsilon_1\right|\left\|{\bf r}^w\right\|_\infty }{\sigma_\beps\left\|{\bf r}^w\right\|_2 } >  \eta \right \} \right]= \frac{1}{\sigma_\beps^2}\E\left[\lim_{n\to\infty}\epsilon_1^2 \mathbf{1}\left\{ \frac{\left| \epsilon_1\right|\left\|{\bf r}^w\right\|_\infty }{\sigma_\beps\left\|{\bf r}^w\right\|_2 } >  \eta \right \} \right]=0,
$$
which gives the Lindeberg's condition.

Thus,
\begin{align}
\frac{\left[\left(\bX_{\A_\lambda}^\top\bX_{\A_\lambda}\right)^{-1}\bX_{\A_\lambda}^\top\bepsilon\right]_j}{\sigma_\bepsilon\sqrt{\left[\left(\bX_{\A_\lambda}^\top\bX_{\A_\lambda}\right)^{-1}\right]_{(j, j)}}}\rightarrow_d \mathcal{N}(0, 1).
\end{align}
Using, again, the fact that by Proposition~\ref{THM:CONSISTENT}, $\lim_{n\to\infty}\Pr\left[\A_\lambda=\hat\A_\lambda\right]=1$, we can write 
\begin{align*}
\frac{\tilde\beta_j^{(\hat\A_\lambda)}-\beta_j^{(\hat\A_\lambda)}}{\sigma_\bepsilon\sqrt{\left[\left(\bX_{\hat\A_\lambda}^\top\bX_{\hat\A_\lambda}\right)^{-1}\right]_{(j, j)}}}&\equiv\frac{\tilde\beta_j^{(\hat\A_\lambda)}-\left[\left(\bX_{\hat\A_\lambda}^\top\bX_{\hat\A_\lambda}\right)^{-1}\bX_{\hat\A_\lambda}^\top\bX\bbeta^\ast\right]_j}{\sigma_\bepsilon\sqrt{\left[\left(\bX_{\hat\A_\lambda}^\top\bX_{\hat\A_\lambda}\right)^{-1}\right]_{(j, j)}}}  \\
&\rightarrow_{p}\frac{\left[\left(\bX_{\A_\lambda}^\top\bX_{\A_\lambda}\right)^{-1}\bX_{\A_\lambda}^\top\bepsilon\right]_j}{\sigma_\bepsilon\sqrt{\left[\left(\bX_{\A_\lambda}^\top\bX_{\A_\lambda}\right)^{-1}\right]_{(j, j)}}} \rightarrow_d \mathcal{N}\left(0, 1\right).
\end{align*}
\end{proof}


\section{Proof of Theorem~\ref{THM:ERRORVAR}}\label{sec:pferrorvar}
\begin{proof}
Based on Proposition~\ref{THM:CONSISTENT} that $\Pr[\hat\A_\lambda=\A_\lambda]\to1$, we have
\begin{align}
\frac{1}{n-\hat q_\lambda}\left\|\by-\bX_{\hat\A_\lambda}\tilde\bbeta^{(\hat\A_\lambda)}\right\|_2^2&\to_p\frac{1}{n-q_\lambda}\left\|\by-\bX_{\A_\lambda}\tilde\bbeta^{(\A_\lambda)}\right\|_2^2,
\end{align}
where $\hat q_\lambda\equiv|\hat\A_\lambda|$, $q_\lambda\equiv|\A_\lambda|$. Denoting 
$\bP^{(\A_\lambda)}\equiv\bX_{\A_\lambda}(\bX_{\A_\lambda}^\top\bX_{\A_\lambda})^{-1}\bX^\top_{\A_\lambda}$, 
\begin{align}
&\quad\left\|\by-\bX_{\A_\lambda}\tilde\bbeta^{(\A_\lambda)}\right\|_2^2 \nonumber \\
&\equiv\left\|\by-\bX_{\A_\lambda}\left(\bX_{\A_\lambda}^\top\bX_{\A_\lambda}\right)^{-1}\bX_{\A_\lambda}\by\right\|_2^2 \nonumber \\
&=\by^\top\left(\bI-\bP^{(\A_\lambda)}\right)^2\by \nonumber \\
&=\by^\top\left(\bI-\bP^{(\A_\lambda)}\right)\by \nonumber \\
& = \left(\bX_{\A_\lambda}\bbeta^\ast_{\A_\lambda} +\bX_{\A_\lambda^c}\bbeta^\ast_{\A_\lambda^c}+\bepsilon\right)^\top\left(\bI-\bP^{(\A_\lambda)}\right)\left(\bX_{\A_\lambda}\bbeta^\ast_{\A_\lambda} +\bX_{\A_\lambda^c}\bbeta^\ast_{\A_\lambda^c}+\bepsilon\right) \nonumber \\
& = \left(\bX_{\A_\lambda^c}\bbeta^\ast_{\A_\lambda^c}+\bepsilon\right)^\top\left(\bI-\bP^{(\A_\lambda)}\right)\left(\bX_{\A_\lambda^c}\bbeta^\ast_{\A_\lambda^c}+\bepsilon\right) \nonumber \\
& = \left(\bX_{\A^\ast\backslash\A_\lambda}\bbeta^\ast_{\A^\ast\backslash\A_\lambda}+\bepsilon\right)^\top\left(\bI-\bP^{(\A_\lambda)}\right)\left(\bX_{\A^\ast\backslash\A_\lambda}\bbeta^\ast_{\A^\ast\backslash\A_\lambda}+\bepsilon\right) \label{eq:2.1.1}
\end{align}
where \eqref{eq:2.1.1} is based on the fact that $\bbeta^\ast_{-\A^\ast}\equiv\bzero$. To simplify notations, denote $\btheta\equiv\bX_{\A^\ast\backslash\A_\lambda}\bbeta^\ast_{\A^\ast\backslash\A_\lambda}$ and $\bQ\equiv\bI-\bP^{(\A_\lambda)}$. Expanding \eqref{eq:2.1.1},
\begin{align*}
\left(\btheta+\bepsilon\right)^\top\bQ\left(\btheta+\bepsilon\right) &=\btheta^\top\bQ\btheta+2\btheta^\top\bQ\bepsilon +\bepsilon^\top\bQ\bepsilon.
\end{align*}
Because $\bQ$ is an idempotent matrix, $\bQ$ is positive semidefinite, whose eigenvalues are all zeros and ones. Thus,
\begin{align*}
0\leq\btheta^\top\bQ\btheta\leq\phi^2_{\max}\left[\bQ\right]\left\|\btheta\right\|_2^2=\mathcal{O}\left(\log(p)\right),
\end{align*}
where the last equality is based on ({\bf M2}). Since $\log(p)/(n-q_\lambda)\to0$, 
\begin{align*}
0\leq\frac{1}{n-q_\lambda}\btheta^\top\bQ\btheta=\mathcal{O}\left(\frac{\log(p)}{n-q_\lambda}\right)=\smallO(1),
\end{align*}
which means that 
\begin{align}
\frac{1}{n-q_\lambda}\btheta^\top\bQ\btheta=\smallO(1). \label{eq:quad0}
\end{align}
Therefore, 
\begin{align}
\frac{1}{n-q_\lambda}\left\|\by-\bX_{\A_\lambda}\tilde\bbeta^{(\A_\lambda)}\right\|_2^2&=\frac{2}{n-q_\lambda}\btheta^\top\bQ\bepsilon +\frac{1}{n-q_\lambda}\bepsilon^\top\bQ\bepsilon +\smallO(1). \label{eq:reducedE}
\end{align}

We now derive the expected value and variance of $\|\by-\bX_{\A_\lambda}\tilde\bbeta^{(\A_\lambda)}\|_2^2/(n-q_\lambda)$. First, because $\E[\bepsilon]=\bzero$ and $\Cov[\bepsilon]=\sigma_\bepsilon^2\bI$, using expectation of quadratic forms,
\begin{align}
\quad\E\left[\frac{1}{n-q_\lambda}\left\|\by-\bX_{\A_\lambda}\tilde\bbeta^{(\A_\lambda)}\right\|_2^2\right] &= \E\left[\frac{2}{n-q_\lambda}\btheta^\top\bQ\bepsilon\right] +\E\left[\frac{1}{n-q_\lambda}\bepsilon^\top\bQ\bepsilon\right] +\smallO(1) \nonumber \\
&= \E\left[\frac{1}{n-q_\lambda}\bepsilon^\top\bQ\bepsilon\right] +\smallO(1) \nonumber \\
&=\frac{1}{n-q_\lambda}\sigma_\bepsilon^2\trace\left[\bQ\right]+\smallO(1).\label{eq:Equadform}
\end{align}
Because $\bQ$ is an idempotent matrix, we have $\trace[\bQ]=n-q_\lambda$. Thus
\begin{align}
\E\left[\frac{1}{n-q_\lambda}\left\|\by-\bX_{\A_\lambda}\tilde\bbeta^{(\A_\lambda)}\right\|_2^2\right]\to\sigma_\bepsilon^2. \label{eq:ERSS}
\end{align}

We now calculate the variance of $\|\by-\bX_{\A_\lambda}\tilde\bbeta^{(\A_\lambda)}\|_2^2/(n-q_\lambda)$. Since
\begin{align}
&\quad\Var\left[\frac{1}{n-q_\lambda}\left\|\by-\bX_{\A_\lambda}\tilde\bbeta^{(\A_\lambda)}\right\|_2^2\right] \nonumber \\
&=\frac{1}{\left(n-q_\lambda\right)^2}\left(\E\left[\left\|\by-\bX_{\A_\lambda}\tilde\bbeta^{(\A_\lambda)}\right\|_2^4\right]-\E\left[\left\|\by-\bX_{\A_\lambda}\tilde\bbeta^{(\A_\lambda)}\right\|_2^2\right]^2\right) \nonumber \\
&\to\frac{1}{\left(n-q_\lambda\right)^2}\E\left[\left\|\by-\bX_{\A_\lambda}\tilde\bbeta^{(\A_\lambda)}\right\|_2^4\right]-\sigma_\bepsilon^4, \label{eq:varRSS}
\end{align}
where the second term is derived in \eqref{eq:ERSS}. We now derive the expected value of $\|\by-\bX_{\A_\lambda}\tilde\bbeta^{(\A_\lambda)}\|_2^4/(n-q_\lambda)^2$. Based on \eqref{eq:reducedE},
\begin{align}
\frac{1}{\left(n-q_\lambda\right)^2}\E\left[\left\|\by-\bX_{\A_\lambda}\tilde\bbeta^{(\A_\lambda)}\right\|_2^4\right] &= \smallO(1)\cdot\frac{1}{n-q_\lambda}\E\left[2\btheta^\top\bQ\bepsilon +\bepsilon^\top\bQ\bepsilon\right] \nonumber \\
&+ \frac{4}{\left(n-q_\lambda\right)^2}\E\left[\btheta^\top\bQ\bepsilon\bepsilon^\top\bQ\btheta\right] \nonumber \\
&+ \frac{2}{\left(n-q_\lambda\right)^2}\E\left[\btheta^\top\bQ\bepsilon\bepsilon^\top\bQ\bepsilon\right] \nonumber \\
& + \frac{1}{\left(n-q_\lambda\right)^2}\E\left[\bepsilon^\top\bQ\bepsilon\bepsilon^\top\bQ\bepsilon\right] + \smallO(1).
\end{align}
We consider each term in the above formulation separately. First,
\begin{align}
\frac{1}{n-q_\lambda}\E\left[2\btheta^\top\bQ\bepsilon +\bepsilon^\top\bQ\bepsilon\right] =\frac{1}{n-q_\lambda}\E\left[\bepsilon^\top\bQ\bepsilon\right] = \sigma_\bepsilon^2, \label{eq:term1}
\end{align}
where the last equality is based on \eqref{eq:Equadform} and \eqref{eq:ERSS}. For the second term, based on \eqref{eq:quad0},
\begin{align}
\frac{4}{\left(n-q_\lambda\right)^2}\E\left[\btheta^\top\bQ\bepsilon\bepsilon^\top\bQ\btheta\right]&= \frac{4}{\left(n-q_\lambda\right)^2}\btheta^\top\bQ\E\left[\bepsilon\bepsilon^\top\right]\bQ\btheta \nonumber \\
&=\frac{4\sigma_\bepsilon^2}{\left(n-q_\lambda\right)^2}\btheta^\top\bQ\bQ\btheta=\frac{4\sigma_\bepsilon^2}{\left(n-q_\lambda\right)^2}\btheta^\top\bQ\btheta=\smallO\left(\frac{1}{n-q_\lambda}\right).\label{eq:term2}
\end{align}
For the third term,
\begin{align}
&\quad\frac{2}{\left(n-q_\lambda\right)^2}\E\left[\btheta^\top\bQ\bepsilon\bepsilon^\top\bQ\bepsilon\right] \nonumber \\
&=\frac{2}{\left(n-q_\lambda\right)^2}\E\left[\btheta^\top\bQ\bepsilon\right]\E\left[\bepsilon^\top\bQ\bepsilon\right] + \Cor\left[\btheta^\top\bQ\bepsilon, \bepsilon^\top\bQ\bepsilon\right]\sqrt{\frac{2\Var\left[\btheta^\top\bQ\bepsilon\right]}{\left(n-q_\lambda\right)^2}\frac{2\Var\left[\bepsilon^\top\bQ\bepsilon\right]}{\left(n-q_\lambda\right)^2}} \nonumber \\
&=\Cor\left[\btheta^\top\bQ\bepsilon, \bepsilon^\top\bQ\bepsilon\right]\sqrt{\frac{2\Var\left[\btheta^\top\bQ\bepsilon\right]}{\left(n-q_\lambda\right)^2}\frac{2\Var\left[\bepsilon^\top\bQ\bepsilon\right]}{\left(n-q_\lambda\right)^2}}, \nonumber 
\end{align}
where, based on \eqref{eq:quad0},
\begin{align}
\frac{2}{\left(n-q_\lambda\right)^2}\Var\left[\btheta^\top\bQ\bepsilon\right] &=\frac{2}{\left(n-q_\lambda\right)^2}\left(\E\left[\btheta^\top\bQ\bepsilon\bepsilon^\top\bQ\btheta\right] - \E\left[\btheta^\top\bQ\bepsilon\right]^2\right) \nonumber \\
&=\frac{2\sigma_\bepsilon^2}{\left(n-q_\lambda\right)^2}\btheta^\top\bQ\btheta =\smallO\left(\frac{1}{n-q_\lambda}\right),\label{eq:term3a}
\end{align}
and
\begin{align}
\frac{2}{\left(n-q_\lambda\right)^2}\Var\left[\bepsilon^\top\bQ\bepsilon\right] &=\frac{2}{\left(n-q_\lambda\right)^2}\left(\E\left[\bepsilon^\top\bQ\bepsilon\bepsilon^\top\bQ\bepsilon\right] - \E\left[\bepsilon^\top\bQ\bepsilon\right]^2\right)  \nonumber \\
& = \frac{2}{\left(n-q_\lambda\right)^2} \E\left[\bepsilon^\top\bQ\bepsilon\bepsilon^\top\bQ\bepsilon\right] - 2\sigma_\bepsilon^4. \label{eq:term3b}
\end{align}
The last equality is based on \eqref{eq:Equadform} and \eqref{eq:ERSS}. Since $-1\leq\Cor[\btheta^\top\bQ\bepsilon, \bepsilon^\top\bQ\bepsilon]\leq1$, based on \eqref{eq:term3a} and \eqref{eq:term3b}, we have 
\begin{align}
-\sqrt{\smallO\left(\frac{1}{\left(n-q_\lambda\right)^3}\right)\E\left[\bepsilon^\top\bQ\bepsilon\bepsilon^\top\bQ\bepsilon\right]-\smallO\left(\frac{1}{n-q_\lambda}\right)} &\leq \frac{2}{\left(n-q_\lambda\right)^2}\E\left[\btheta^\top\bQ\bepsilon\bepsilon^\top\bQ\bepsilon\right] \nonumber \\
&\leq\sqrt{\smallO\left(\frac{1}{\left(n-q_\lambda\right)^3}\right)\E\left[\bepsilon^\top\bQ\bepsilon\bepsilon^\top\bQ\bepsilon\right]-\smallO\left(\frac{1}{n-q_\lambda}\right)}. \nonumber
\end{align}
Therefore, collecting \eqref{eq:term1}, \eqref{eq:term2}, \eqref{eq:term3a} and \eqref{eq:term3b}, we have
\begin{align}
0&\leq\frac{1}{\left(n-q_\lambda\right)^2}\E\left[\left\|\by-\bX_{\A_\lambda}\tilde\bbeta^{(\A_\lambda)}\right\|_2^4\right]\nonumber \\
&\leq\smallO\left(\frac{1}{\left(n-q_\lambda\right)^{3/2}}\right)\sqrt{\E\left[\bepsilon^\top\bQ\bepsilon\bepsilon^\top\bQ\bepsilon\right]} +\frac{1}{\left(n-q_\lambda\right)^2}\E\left[\bepsilon^\top\bQ\bepsilon\bepsilon^\top\bQ\bepsilon\right] + \smallO(1). \label{eq:collect}
\end{align}

But,
\begin{align}
\E\left[\bepsilon^\top\bQ\bepsilon\bepsilon^\top\bQ\bepsilon\right] &=\E\left[\sum_{i, j, l, k}^n\epsilon_i\epsilon_j\epsilon_l\epsilon_k Q_{(i,j)}Q_{(l,k)}\right] \nonumber \\
&=\E\left[\sum_{i=1}^n\epsilon_i^4Q_{(i, i)}^2\right]+\E\left[\sum_{i=l\neq j=k}^n\epsilon_i^2\epsilon_j^2Q_{(i, j)}Q_{(l,k)}\right] \nonumber \\
&+\E\left[\sum_{i=k\neq j=l}^n\epsilon_i^2\epsilon_j^2Q_{(i, j)}Q_{(l,k)}\right]+\E\left[\sum_{i=j\neq l=k}^n\epsilon_i^2\epsilon_l^2Q_{(i, j)}Q_{(l,k)}\right].
\end{align}
Because $\bepsilon$ is independent and identically distributed with $\E[\epsilon_1^2]=\sigma_\bepsilon^2$,
\begin{align}
\E\left[\bepsilon^\top\bQ\bepsilon\bepsilon^\top\bQ\bepsilon\right] &=\E\left[\epsilon_1^4\right]\sum_{i=1}^nQ_{(i, i)}^2+\E\left[\epsilon_1^2\right]^2\sum_{i=l\neq j=k}^nQ_{(i, j)}Q_{(l,k)} \nonumber \\
&+\E\left[\epsilon_1^2\right]^2\sum_{i=k\neq j=l}^nQ_{(i, j)}Q_{(l,k)}+\E\left[\epsilon_1^2\right]^2\sum_{i=j\neq l=k}^nQ_{(i, j)}Q_{(l,k)} \nonumber \\
&=\E\left[\epsilon_1^4\right]\sum_{i=1}^nQ_{(i, i)}^2+\sigma_\bepsilon^4\sum_{i\neq j}^nQ_{(i, j)}^2 + \sigma_\bepsilon^4\sum_{i\neq j}^nQ_{(i, j)}^2 + \sigma_\bepsilon^4\sum_{i\neq j}^nQ_{(i, i)}Q_{(j, j)} \nonumber \\ 
&=\E\left[\epsilon_1^4\right]\sum_{i=1}^nQ_{(i, i)}^2+2\sigma_\bepsilon^4\sum_{i\neq j}^nQ_{(i, j)}^2 + \sigma_\bepsilon^4\sum_{i\neq j}^nQ_{(i, i)}Q_{(j, j)}.
\end{align}
Moreover, because $(\sum_{i=1}^nQ_{(i,i)})^2=\sum_{i\neq j}^nQ_{(i, i)}Q_{(j, j)}+ \sum_{i=1}^nQ_{(i, i)}^2$,
\begin{align}
\E\left[\bepsilon^\top\bQ\bepsilon\bepsilon^\top\bQ\bepsilon\right] &=\E\left[\epsilon_1^4\right]\sum_{i=1}^nQ_{(i, i)}^2+2\sigma_\bepsilon^4\sum_{i\neq j}^nQ_{(i, j)}^2 + \sigma_\bepsilon^4\left(\sum_{i=1}^nQ_{(i, i)}\right)^2-\sigma_\bepsilon^4\sum_{i=1}^nQ_{(i, i)}^2 \nonumber \\
&=\E\left[\epsilon_1^4\right]\sum_{i=1}^nQ_{(i, i)}^2+\sigma_\bepsilon^4\sum_{i\neq j}^nQ_{(i, j)}^2 + \sigma_\bepsilon^4\left(\sum_{i=1}^nQ_{(i, i)}\right)^2 \nonumber \\
&\leq\E\left[\epsilon_1^4\right]\left(\sum_{i=1}^nQ_{(i, i)}^2+\sum_{i\neq j}^nQ_{(i, j)}^2\right) + \sigma_\bepsilon^4\left(\sum_{i=1}^nQ_{(i, i)}\right)^2 \nonumber \\
&= \E\left[\epsilon_1^4\right]\sum_{i,j}^nQ_{(i,j)}^2+\sigma_\bepsilon^4\trace\left[\bQ\right]^2,
\end{align}
where the inequality is based on Jensen's inequality that $\E[\epsilon_1^2]^2\leq\E[\epsilon_1^4]$. Because $\sum_{i,j}^nQ_{(i,j)}^2=\trace[\bQ^2]=\trace[\bQ]=n-q_\lambda$, we have 
\begin{align}
\E\left[\bepsilon^\top\bQ\bepsilon\bepsilon^\top\bQ\bepsilon\right] &= \left(n-q_\lambda\right)\E\left[\epsilon_1^4\right]+\left(n-q_\lambda\right)^2\sigma_\bepsilon^4.
\end{align}
Since $\epsilon_1$ has sub-Gaussian tails, we have $\E[\epsilon_1^4]=\mathcal{O}(1)$ \citep[see, e.g., Lemma 5.5 in][]{Vershynin2012subGaussian}. Thus, based on \eqref{eq:collect},
\begin{align}
0&\leq\frac{1}{\left(n-q_\lambda\right)^2}\E\left[\left\|\by-\bX_{\A_\lambda}\tilde\bbeta^{(\A_\lambda)}\right\|_2^4\right]\leq\sigma_\bepsilon^4+\smallO(1),
\end{align}
and hence, based on \eqref{eq:varRSS},
\begin{align}
\Var\left[\frac{1}{n-q_\lambda}\left\|\by-\bX_{\A_\lambda}\tilde\bbeta^{(\A_\lambda)}\right\|_2^2\right]=\smallO(1).
\end{align}
Finally, applying Chebyshev's inequality, we obtain
\begin{align}
\frac{1}{n-\hat q_\lambda}\left\|\by-\bX_{\hat\A_\lambda}\tilde\bbeta^{(\hat\A_\lambda)}\right\|_2^2&\to_p\frac{1}{n-q_\lambda}\left\|\by-\bX_{\A_\lambda}\tilde\bbeta^{(\A_\lambda)}\right\|_2^2 \nonumber \\
&\to_p\E\left[\frac{1}{n-q_\lambda}\left\|\by-\bX_{\A_\lambda}\tilde\bbeta^{(\A_\lambda)}\right\|_2^2\right]=\sigma_\bepsilon^2.
\end{align}
\end{proof}


\section{Proof of Theorem~\ref{THM:SCORE}}\label{sec:pfscore}

\begin{proof}[Proof of Theorem~\ref{THM:SCORE}]
By Proposition~\ref{THM:CONSISTENT}, $\Pr\left[\hat \A_\lambda= \A_\lambda\right]\to 1$. Therefore, we also have $\Pr\left[\big(\hat \A_\lambda\backslash\{j\}\big)= \big(\A_\lambda\backslash\{j\}\big)\right]\to 1$, and with probability tending to one,
\begin{align}
S^j\equiv\bx_j^\top\left({\bf I}_n-{\bf P}^{(\hat \A_\lambda\backslash\{j\})}\right)\by &=\bx_j^\top\left({\bf I}_n-{\bf P}^{( \A_\lambda\backslash\{j\})}\right)\by \nonumber \\
& =  \bx_j^\top\left({\bf I}_n-{\bf P}^{(\A_\lambda\backslash\{j\})}\right)\left(\bX_{\A_\lambda}\bbeta^\ast_{\A_\lambda}+ \bX_{\A_\lambda^c}\bbeta^\ast_{\A_\lambda^c}+\bepsilon\right)\label{teststatahat},
\end{align}
where ${\bf P}^{(\A_\lambda\backslash\{j\})}\equiv\bX_{\A_\lambda\backslash\{j\}}(\bX_{\A_\lambda\backslash\{j\}}^\top\bX_{\A_\lambda\backslash\{j\}})^{-1}\bX^\top_{\A_\lambda\backslash\{j\}}$.

Thus, under the null hypothesis $H_{0,j}^\ast:\beta^\ast_j=0$, \eqref{teststatahat} is equal to
\begin{align}
&\quad \bx_j^\top\left({\bf I}_n-{\bf P}^{(\A_\lambda\backslash\{j\})}\right) \left(\bX_{\A_\lambda\backslash\{j\}}\bbeta^\ast_{\A_\lambda\backslash\{j\}}+ \bX_{\A_\lambda^c\backslash\{j\}}\bbeta^\ast_{\A_\lambda^c\backslash\{j\}}+\bepsilon\right) \nonumber\\
&=\bx_j^\top\left(\bI_n-\bP^{(\A_\lambda\backslash\{j\})}\right)\bepsilon + \bx_j^\top\left(\bI_n-\bP^{(\A_\lambda\backslash\{j\})}\right)\bX_{\A_\lambda^c\backslash\{j\}}\bbeta^\ast_{\A_\lambda^c\backslash\{j\}}. \label{teststatahat2}
\end{align}
The equality holds because $({\bf I}_n-{\bf P}^{(\A_\lambda\backslash\{j\})})\bX_{\A_\lambda\backslash\{j\}}\bbeta^\ast_{\A_\lambda\backslash\{j\}}=\bzero$.

We first show the asymptotic distribution of $\bx_j^\top(\bI_n-\bP^{(\A_\lambda\backslash\{j\})})\bepsilon$. Dividing it by its standard deviation $\sigma_\beps\sqrt{\bx_j^\top({\bf I}_n - {\bf P}^{(\A_\lambda\backslash\{j\})})\bx_j}$, where $\sigma_\bepsilon$ is the error standard deviation,
\begin{align}
\frac{\bx_j^\top\left({\bf I}_n - {\bf P}^{(\A_\lambda\backslash\{j\})}\right)\bepsilon}{\sigma_\bepsilon\sqrt{\bx_j^\top\left({\bf I}_n - {\bf P}^{(\A_\lambda\backslash\{j\})}\right)\bx_j}} &= \frac{{\bf r}^{s\top}\beps}{\sigma_\beps\left\|{\bf r}^s\right\|_2}, \label{asynorm}
\end{align}
where  ${\bf r}^{s\top} \equiv \bx_j^\top({\bf I}_n - {\bf P}^{(\A_\lambda\backslash\{j\})})$. 
Now, we use the Lindeberg-Feller Central Limit Theorem to prove the asymptotic normality of \eqref{asynorm}. Similar to the proof of Theorem~\ref{THM:WALD2}, we need to prove that the Lindeberg's condition holds, i.e.,
$$
\lim_{n\to\infty} \sum_{i=1}^n\E \left[\frac{ \left(r^s_{i} \epsilon_i\right)^2}{\sigma_\beps^2\left\|{\bf r}^s\right\|^2_2} \mathbf{1}\left\{ \frac{\left|r^s_{i} \epsilon_i\right|}{\sigma_\beps\left\|{\bf r}^s\right\|_2} >  \eta \right \}\right]= 0, \quad \forall \eta > 0.
$$
Given that $|r^s_{i}| \leq \|{\bf r}^s\|_\infty$, and that the $\epsilon_i$'s are identically distributed,
\begin{align*} 
0\leq\sum_{i=1}^n\E \left[\frac{ \left(r^s_{i} \epsilon_i\right)^2}{\sigma_\beps^2\left\|{\bf r}^s\right\|^2_2} \mathbf{1}\left\{ \frac{\left|r^s_{i} \epsilon_i\right|}{\sigma_\beps\left\|{\bf r}^s\right\|_2} >  \eta \right \}\right]&\leq \frac{1}{\sigma_\beps^2} \E\left[ \epsilon_1^2 \mathbf{1}\left\{ \frac{\left| \epsilon_1\right|\left\|{\bf r}^s\right\|_\infty }{\sigma_\beps\left\|{\bf r}^s\right\|_2 } >  \eta \right \} \right].
\end{align*}
Since $\|{\bf r}^s\|_\infty/\|{\bf r}^s\|_2\rightarrow 0$ by Condition ({\bf S}), $\epsilon_1^2 \mathbf{1}\left\{ | \epsilon_1|\|{\bf r}^s\|_\infty/( \sigma_\beps\|{\bf r}^s\|_2)  >  \eta \right \} \rightarrow_p 0$. Thus, because $\epsilon_1^2 \geq \epsilon_1^2 \mathbf{1}\left\{ | \epsilon_1|\|{\bf r}^s\|_\infty/( \sigma_\beps\|{\bf r}^s\|_2)  >  \eta \right \}$ with probability one and $\E[\epsilon_1^2]=\sigma_\epsilon^2<\infty$, we use $\epsilon_1^2$ as the dominant random variable, and apply the Dominated Convergence Theorem,  
$$
\lim_{n\to\infty}\frac{1}{\sigma_\beps^2} \E\left[ \epsilon_1^2 \mathbf{1}\left\{ \frac{\left| \epsilon_1\right|\left\|{\bf r}^s\right\|_\infty }{\sigma_\beps\left\|{\bf r}^s\right\|_2 } >  \eta \right \} \right]= 0,
$$
which in turn gives the Lindeberg's condition. 

Thus,
\begin{align}
\frac{\bx_j^\top\left({\bf I}_n - {\bf P}^{(\A_\lambda\backslash\{j\})}\right)\bepsilon}{\sigma_\bepsilon\sqrt{\bx_j^\top\left({\bf I}_n - {\bf P}^{(\A_\lambda\backslash\{j\})}\right)\bx_j}}\rightarrow_d \mathcal{N}(0, 1),
\end{align}

We now prove the asymptotic unbiasedness of the na\"ive score test on $\bbeta^\ast$. Dividing the second term in \eqref{teststatahat2} by $\sigma_\beps\sqrt{\bx_j^\top({\bf I}_n - {\bf P}^{(\A_\lambda\backslash\{j\})})\bx_j}$, we get
\begin{align}
\left|\frac{\bx_j^\top\left(\bI_n-\bP^{(\A_\lambda\backslash\{j\})}\right)\bX_{\A_\lambda^c\backslash\{j\}}\bbeta^\ast_{\A_\lambda^c\backslash\{j\}}}{\sigma_\bepsilon\sqrt{\bx_j^\top\left({\bf I}_n - {\bf P}^{(\A_\lambda\backslash\{j\})}\right)\bx_j}}\right|&=\left|\frac{{\bf r}^s\bX_{\A_\lambda^c\backslash\{j\}}\bbeta^\ast_{\A_\lambda^c\backslash\{j\}}}{\sigma_\beps\left\|{\bf r}^s\right\|_2}\right| \nonumber \\
&\leq\frac{\left\|{\bf r}^s\right\|_2}{\left\|{\bf r}^s\right\|_2}\frac{\left\|\bX_{\A_\lambda^c\backslash\{j\}}\bbeta^\ast_{\A_\lambda^c\backslash\{j\}}\right\|_2}{\sigma_\bepsilon} \nonumber \\
&= \frac{\left\|\bX_{\A_\lambda^c\backslash\{j\}}\bbeta^\ast_{\A_\lambda^c\backslash\{j\}}\right\|_2}{\sigma_\bepsilon}.
\end{align}
By ({\bf M2$^\ast$}),
\begin{align*}
\left\|\bX_{\A_\lambda^c\backslash\{j\}}\bbeta^\ast_{\A_\lambda^c\backslash\{j\}}\right\|_2 = \left\|\bX_{(\A^\ast\backslash\A_\lambda)\backslash\{j\}}\bbeta^\ast_{(\A^\ast\backslash\A_\lambda)\backslash\{j\}}\right\|_2=\left\|\bX_{\A^\ast\backslash\A_\lambda}\bbeta^\ast_{\A^\ast\backslash\A_\lambda}\right\|_2=\smallO\left(1\right),
\end{align*}
where the second equality holds under $H_{0,j}:\beta^\ast_j=0$.

Using, again, the fact that by Proposition~\ref{THM:CONSISTENT}, $\lim_{n\to\infty}\Pr\left[\A_\lambda=\hat\A_\lambda\right]=1$, we get
\begin{align}
\frac{\bx_j^\top\left({\bf I}_n-{\bf P}^{(\hat \A_\lambda\backslash\{j\})}\right)\by}{\sigma_\bepsilon\sqrt{\bx_j^\top\left({\bf I}_n - {\bf P}^{(\hat\A_\lambda\backslash\{j\})}\right)\bx_j}}\rightarrow_d \mathcal{N}(0, 1).
\end{align}
\end{proof}

\section*{Acknowledgements}
We thank the authors of \citet{javanmard2013confidence, NingLiu2015decor} for providing code for their proposals. 
We are grateful to Joshua Loftus, Jonathan Taylor, Robert Tibshirani, and Ryan Tibshirani for helpful responses to our inquiries.


\label{lastpage}
\bibliographystyle{apalike}
\bibliography{LassoTests-rev4-v2}

\end{document}